\tikzstyle{every state}=[minimum size=12pt,inner sep=0pt]
\tikzstyle{randomPath}=[decorate,decoration={amplitude=1pt,segment length=2pt,random steps},very thick]
\tikzstyle{randomPath2}=[decorate,decoration={amplitude=2pt,segment length=6pt,random steps},very thick]
\newcommand{\FR}{\(\mathtt{FR}\)}
\newcommand{\SK}{\(\mathtt{automgrp}\)}
\newcommand{\GAP}{\(\mathtt{GAP}\)}
\newcommand{\N}{{\mathbb N}}
\newcommand{\mz}{\mathfrak m}
\newcommand{\dz}{\mathfrak d}
\newcommand{\aut}[1]{{\mathcal #1}}
\newcommand{\dual}[1]{{\mathfrak d}({#1})}
\newcommand{\mot}[1]{{\mathbf {#1}}}
\newcommand{\pres}[1]{\langle{#1}\rangle}
\newcommand{\presm}[1]{\pres{{#1}}_{+}}
\newcommand{\cd}[1]{\curlywedge({\aut{#1}})}%_{\aut{#1}}}%
\newcommand{\cdv}[1]{{\Lambda}({\aut{#1}})}%_{\aut{#1}}}%
\newcommand{\redEdge}{e_2}
\newcommand{\two}[1]{{\cal O}_2}
\newcommand{\rest}{\mathfrak s}
\newcommand{\otree}[1][]{\mathfrak{t}{\ifthenelse{\equal{#1}{}}{}{(\aut{#1})}}}
\newcommand{\htree}{T_{{\cal O}_2}}
\newcommand{\vhhtree}[1][v]{{\htree}_{|{\mot{#1}}}}
\newcommand{\fact}{\text{fact}}
\newcommand{\cw}{\curlywedge}
\theoremstyle{plain}
\newtheorem{proposition}[theorem]{Proposition}
\theoremstyle{remark}
\newtheorem{nremark}[theorem]{Remark}
\title{A connected 3-state reversible Mealy automaton cannot generate an infinite Burnside group
\footnote{This work was partially supported by the french \emph{Agence Nationale pour la~Recherche},
through the Project MealyM ANR-JCJC-12-JS02-012-01.
The third author was partially supported by the New Researcher Grant from USF Internal Awards Program.}}
\titlerunning{A connected 3-state reversible automaton cannot generate an infinite Burnside group}
\author[1]{Ines Klimann}
\author[1]{Matthieu Picantin}
\author[2]{Dmytro Savchuk}
\affil[1]{Univ Paris Diderot, Sorbonne Paris Cit\'e, LIAFA,
    UMR 7089 CNRS,\\ F-75013 Paris,
    France\hfill \texttt{\{klimann,picantin\}@liafa.univ-paris-diderot.fr}}
\affil[2]{Department of Mathematics and Statistics,
University of South Florida,\\
4202 E Fowler Ave --
Tampa, FL 33620-5700, USA
\hfill\texttt{savchuk@usf.edu}}
\subjclass{}
\keywords{Burnside groups, reversible Mealy automata, automaton groups}
\begin{document}

\maketitle

\begin{abstract}

The class of automaton groups is a rich source of the simplest
examples of infinite Burnside groups.
However, there are some classes of automata that do not contain such examples.
For instance, all infinite Burnside automaton groups in
the literature are generated by non reversible Mealy automata and
it was recently shown that 2-state invertible-reversible
Mealy automata cannot generate infinite Burnside groups. Here we extend this
result to connected 3-state invertible-reversible Mealy automata, using new
original techniques. The results provide the first uniform method to construct
elements of infinite order in each infinite group in this class.
\end{abstract}

\section{Mealy automata and the General Burnside problem}
In 1902, Burnside has introduced a question which would become
highly influential in group theory~\cite{burnside}:
\begin{center}
\emph{Is a finitely generated group whose all elements have finite
  order necessarily finite?}
\end{center}

This problem is now known as the \emph{General Burnside Problem}. A
group is commonly called a \emph{Burnside} group if it is finitely
generated and all its elements have finite order.

In 1964, Golod and Shafarevich~\cite{golod,golod_shafarevich} were the
first ones to give a negative answer to the general Burnside problem
and around the same time Glushkov suggested that groups generated by
automata could serve as a different source of
counterexamples~\cite{glushkov:automata}. In 1972, Aleshin gave an
answer as a subgroup of an automaton group~\cite{aleshin}, and then in
1980, Grigorchuk exhibited the first and the simplest by now example
of an infinite Burnside automaton group~\cite{grigorchuk1}. Since then many
infinite Burnside automaton groups have been
constructed~\cite{bartholdi_s:growth,grigorch:degrees,gupta_s:burnside,sushch:burnside}. Even
by now, the simplest examples of infinite Burnside groups are still automaton
groups.

All the examples of infinite Burnside automaton groups in the literature
happen to be generated by non-reversible invertible Mealy automata, that is,
invertible Mealy automata where all the letters do not act as
permutations on the stateset.

It was proved in~\cite{Kli13} that a 2-state invertible-reversible
Mealy automaton cannot generate an infinite Burnside group, but the techniques
were strongly based on the fact that the stateset has size~2.
Here we address this problem for a larger class, namely the 3-state
invertible-reversible automata, and prove the following theorem.

\begin{theorem}\label{thm-main}
A connected 3-state invertible-reversible Mealy automaton cannot
generate an infinite Burnside group.
\end{theorem}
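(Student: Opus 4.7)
The plan is to exploit the duality between an invertible--reversible Mealy automaton $\aut{A}$ and its dual $\dual{A}$, whose states are the letters of $\aut{A}$ and whose letters are the states of $\aut{A}$. Reversibility of $\aut{A}$ makes $\dual{A}$ invertible, and connectedness of $\aut{A}$ translates into a transitivity condition on the action of the (dual) letters of $\dual{A}$ on its stateset. The upshot is that an element of the generated group $G(\aut{A})$ is conveniently represented as a word $\word{u}$ over the original three-element stateset $Q$, and its action on the tree $X^*$ is governed by the dual action of words in $X^*$ on $\word{u}$.

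Since $\aut{A}$ has exactly three states, each letter $x\in X$ acts on $Q$ as a permutation in $S_3$, and connectedness forces the subgroup $H\leq S_3$ generated by these permutations to be transitive, so $H=\Z/3\Z$ or $H=S_3$. I would organise the proof as a case analysis on $H$. In each case, the plan is to study the orbits of finite words $\word{u}\in Q^*$ under the dual action: reversibility keeps these orbits finite and well-structured at every length, and for fixed length $n$ the orbit graph of $\dual{A}$ on $Q^n$ encodes a permutational skeleton governing the cross-sections of $\word{u}$ down the tree. From the cycle structure of this skeleton one can read off, at each level $k$, an expression for the section $\word{u}_{|v}$ in terms of a prefix/conjugate of $\word{u}$ itself.

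The heart of the argument is then to exhibit, in every case where $G(\aut{A})$ is infinite, an explicit $\word{u}\in Q^*$ enjoying a self-similarity of the form: for some vertex $v\in X^*$, the section $\word{u}_{|v}$ equals (up to relabelling) a nontrivial power $\word{u}^m$ with $m\geq 2$, or more generally an iterated section of $\word{u}$ returns a strictly longer conjugate of $\word{u}$. Such a return relation prevents $\word{u}$ from having finite order, since finite order would propagate upward through sections to force trivial action.

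The main obstacle will be the $S_3$ case, where the richer permutation combinatorics split into many subcases according to which conjugacy classes in $S_3$ are realised by letters of $X$; to handle it uniformly I expect to need a structural lemma about the connected components of the orbit graphs of $\dual{A}$ on $Q^n$ (perhaps via the \cd{A} / \cdv{A} constructions hinted at in the macro preamble) together with a pigeonhole argument forcing the self-similar return described above. The cyclic case $H=\Z/3\Z$ should be tractable by direct inspection, as the three-fold symmetry of orbits collapses the possibilities to a short list in which either $G(\aut{A})$ is clearly finite or the construction of $\word{u}$ is transparent.
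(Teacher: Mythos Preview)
Your proposal is an outline rather than a proof, and it diverges substantially from the paper's argument in ways that leave real gaps.

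\textbf{What the paper actually does.} The paper does not split cases on the subgroup $H\leq S_3$ generated by the letter-permutations, and it does not look for a section relation of the form $\mot{u}_{|v}=\mot{u}^m$. Instead it analyses the \emph{labeled orbit tree} $\otree[A]$ of the dual: the vertices at level~$n$ are the connected components of $\aut{A}^n$, and an edge is labeled by the ratio of component sizes. The key dichotomy is on the lengths of the $2$-blocks along branches of $\otree[A]$. If these lengths are unbounded, there is a branch labeled $3^{\cd{A}}2^{\omega}$, and any \emph{cyclically orbital} word (one whose every power lies in the reduction orbit) has unbounded component sizes along $(\mot{u}^n)_n$, hence infinite order by Proposition~\ref{prop-finite}. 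If the $2$-block lengths are bounded, the paper shows (Propositions~\ref{prop-const-order} and~\ref{prop-equiv-adm}) that every element has order bounded by a uniform constant, and then invokes \emph{Zelmanov's solution to the restricted Burnside problem} to conclude that $\pres{\aut{A}}$ is finite.

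\textbf{The gap in your plan.} The crucial missing ingredient is how the hypothesis ``$\pres{\aut{A}}$ is infinite'' enters your construction. You want to produce, directly, a word $\mot{u}$ with a self-replicating section. But nothing in your outline explains why infiniteness of the group forces such a $\mot{u}$ to exist; you simply assert that a ``pigeonhole argument'' will furnish one. The paper cannot do this directly either: it must pass through Zelmanov to rule out the scenario in which every generator-word has uniformly bounded finite order yet the group is infinite. Without Zelmanov (or an equivalent deep input), your plan has no mechanism to exclude that scenario. Moreover, the section relation $\mot{u}_{|v}=\mot{u}^m$ you aim for is strictly stronger than what is needed and is not what the paper establishes; infinite order in the paper comes from unbounded orbit sizes, not from an explicit self-similarity of sections.

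\textbf{On the $H$-case split.} Splitting on $H=\Z/3\Z$ versus $H=S_3$ controls only the action on $Q^1$; the orbit structure on $Q^n$ for $n\geq 2$ depends on the full transition/output data, not just on $H$. The paper's organizing invariant is instead the connection degree $\cd{A}$ and the shape of the first split, which is what actually governs the combinatorics. Your ``$\Z/3\Z$ case is easy by inspection'' claim is not justified: there are $3$-state automata with cyclic letter-action and highly nontrivial orbit trees.

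In short, the strategy you sketch is not the paper's, and as written it lacks the one nontrivial external input (Zelmanov) that the paper's argument genuinely needs.
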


For the proof of this theorem we develop new techniques, centered on the orbit tree of the
dual of the Mealy automaton. We hope 
that these techniques could be further extended to attack similar problem for automata with bigger statesets.

The class of groups generated by automata is very interesting from algorithmic point of view. Even thought the word problem (given a word in generators decide if it represents the trivial element of the group) is decidable, most of other basic algorithmic questions, including finiteness problem, order problem, conjugacy problem, are either known to be undecidable, or their decidability is unknown. For example, it was proved recently that the order problem is undecidable for automaton semigroups~\cite{gillibert:finiteness14} and for groups generated by, so called, asynchronous Mealy automata~\cite{belk_b:undecidability}. However, this problem still remains open for the class of all groups generated by Mealy automata. There are many partial methods to find elements of infinite order in such groups, but the class of reversible automata is known as the class for which most of these algorithms do not work or perform poorly. The proof of Theorem~\ref{thm-main} gives a uniform algorithm to produce many elements of infinite order in infinite groups generated by 3-state invertible-reversible automata. Unfortunately, it does not provide an algorithm that can determine if the group itself is infinite, however, it is known that each invertible-reversible but not bireversible automaton generates an infinite group~\cite{AKLMP12}.

The paper is organized as follows. In Section~\ref{sec-basic}, we set up notation, provide well-known definitions and
facts concerning automaton groups and rooted trees. Certain results concerning
connected components of reversible Mealy automata are given in
Section~\ref{sec-cc}. In Section~\ref{sec-otree} we introduce
a crucial construction for our proofs: the labeled orbit tree of a Mealy
automaton. Finally, Section~\ref{sec-main_result} contains the proof of our main results, including Theorem~\ref{thm-main}.

\section{Basic notions}\label{sec-basic}

\subsection{Groups generated by Mealy automata}
We first recall the formal definition of an automaton. A {\em (finite, deterministic, and complete) automaton} is a
triple
\(
\bigl( Q,\Sigma,\delta = (\delta_i\colon Q\rightarrow Q )_{i\in \Sigma} \bigr)
\),
where the \emph{stateset}~$Q$
and the \emph{alphabet}~$\Sigma$ are non-empty finite sets, and
where the~\(\delta_i\)
are functions.

\smallskip

A \emph{Mealy automaton} is a quadruple
%\[
\(\bigl( Q, \Sigma, \delta = (\delta_i\colon Q\rightarrow Q )_{i\in \Sigma},
\rho = (\rho_x\colon \Sigma\rightarrow \Sigma  )_{x\in Q} \bigr)\), %\:,
%\]
such that both~\((Q,\Sigma,\delta)\) and~\((\Sigma,Q,\rho)\) are
automata.
In other terms, a Mealy automaton is a complete, deterministic,
letter-to-letter transducer with the same input and output alphabet.

The graphical representation of a Mealy automaton is
standard, see Figures~\ref{fig-conn} and~\ref{fig-disc}.

\begin{figure}[ht]%
\centering
\begin{minipage}{.45\textwidth}%
	\centering
	\begin{tikzpicture}[->,>=latex,node distance=12mm]
	\node[state] (1) {\(x\)};
	\node (0) [right of=1] {};
	\node[state] (2) [right of=0, above of=0] {\(y\)};
	\node[state] (3) [right of=0, below of=0] {\(z\)};
	\path 
      (1) edge[loop left] node[left]{\(\begin{array}{c} 1|2\\2|1\\3|4\end{array}\)} (1)
      (1) edge node[below]{\(4|3\)} (3)
      (3) edge[bend right] node[right]{\(\begin{array}{c} 4|1\\2|3\end{array}\)} (2)
      (2) edge[bend right] node[left]{\(2|1\)} (3)
      (2) edge node[above]{\(4|3\)} (1)
      (2) edge[loop above] node[right]{\(\begin{array}{c} 1|2\\3|4\end{array}\)} (2)
      (3) edge[loop below] node[right]{\(\begin{array}{c} 1|2\\3|4\end{array}\)} (3);
	\end{tikzpicture}
\caption{An example of a 4-letter 3-state\\ connected Mealy automaton.}%
\label{fig-conn}
\end{minipage}%
\hspace*{13mm}
\begin{minipage}{.45\textwidth}%
	\centering
	\begin{tikzpicture}[->,>=latex,node distance=12mm]
	\node[state] (1) {\(x\)};
	\node (0) [right of=1] {};
	\node[state] (2) [right of=0, above of=0] {\(y\)};
	\node[state] (3) [right of=0, below of=0] {\(z\)};
	\path (1) edge[loop left] node[left]{\(\begin{array}{c} 1|1\\2|3\\3|2\end{array}\)} (1)
      (2) edge[bend right] node[left]{\(3|3\)} (3)
      (3) edge[bend right] node[right]{\(3|3\)} (2)
      (2) edge[loop above] node[right]{\(\begin{array}{c} 1|1\\2|2\end{array}\)} (2)
      (3) edge[loop below] node[right]{\(\begin{array}{c} 1|2\\2|1\end{array}\)} (3);
	\end{tikzpicture}
\caption{An example of a 3-letter 3-state\\ disconnected Mealy automaton.}%
\label{fig-disc}%
\end{minipage}%
\end{figure}

A Mealy automaton \((Q,\Sigma,\delta, \rho)\) is
\emph{invertible\/} if the functions \(\rho_x\) are permutations of~\(\Sigma\)
and \emph{reversible\/} if the functions \(\delta_i\) are
permutations of~\(Q\).

In a Mealy automaton~\(\aut{A}=(Q,\Sigma, \delta, \rho)\), the sets~\(Q\)
and~\(\Sigma\) play dual roles. So we may consider the \emph{dual (Mealy)
automaton} defined by
\(
\dual{\aut{A}} = (\Sigma,Q, \rho, \delta)
\).
Obviously, a Mealy automaton is reversible if and only if its dual is
invertible.

\medskip

Let~\(\aut{A} = (Q,\Sigma, \delta,\rho)\) be a Mealy automaton.
We view~\(\aut{A}\) as an automaton with an input and an output tape, thus
defining mappings from input words over~$\Sigma$ to output words
over~$\Sigma$.
Formally, for~\(x\in Q\), the map~$\rho_x\colon\Sigma^* \rightarrow \Sigma^*$,
extending~$\rho_x\colon\Sigma \rightarrow \Sigma$, is defined recursively by:
\begin{equation*}
\forall i \in \Sigma, \ \forall \mot{s} \in \Sigma^*, \qquad
\rho_x(i\mot{s}) = \rho_x(i)\rho_{\delta_i(x)}(\mot{s}) \:.
\end{equation*}
By convention, the image of the empty word is itself.
The mapping~\(\rho_x\) for each $x\in Q$ is length-preserving and prefix-preserving.
We say that~\(\rho_x\) is the \emph{production
function\/} associated with~\((\aut{A},x)\).
For~$\mot{x}=x_1\cdots x_n \in Q^n$ with~$n>0$, set
\(\rho_\mot{x}\colon\Sigma^* \rightarrow \Sigma^*, \rho_\mot{x} = \rho_{x_n}
\circ \cdots \circ \rho_{x_1} \:\).

Denote dually by~\(\delta_i\colon Q^*\rightarrow Q^*,
i\in \Sigma\), the production functions associated with
the dual automaton
$\dz(\aut{A})$. For~$\mot{s}=s_1\cdots s_n
\in \Sigma^n$ with~$n>0$, set~\(\delta_\mot{s}\colon Q^* \rightarrow Q^*,
\ \delta_\mot{s} = \delta_{s_n}\circ \cdots \circ \delta_{s_1}\).

\smallskip

The semigroup of mappings from~$\Sigma^*$ to~$\Sigma^*$ generated by
$\{\rho_x, x\in Q\}$ is called the \emph{semigroup generated
  by~$\aut{A}$} and is denoted by~$\presm{\aut{A}}$.
When~\(\aut{A}\) is invertible,
its production functions are
permutations on words of the same length and thus we may consider
the group of mappings from~$\Sigma^*$ to~$\Sigma^*$ generated by
$\{\rho_x, x\in Q\}$. This group is called the \emph{group generated
  by~$\aut{A}$} and is denoted by~$\pres{\aut{A}}$.

Let us recall some known results that will be used in our proofs.

\begin{proposition}[see, for example,~\cite{AKLMP12}]
\label{prop:finite_group}
An invertible Mealy automaton generates a finite group if and only if
it generates a finite semigroup.
\end{proposition}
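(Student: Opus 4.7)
The plan is to prove both implications separately, and the forward direction is essentially free: since $\aut{A}$ is invertible, each generator $\rho_x$ is a bijection of $\Sigma^*$, so the semigroup $\presm{\aut{A}}$ sits inside the group $\pres{\aut{A}}$ as a subset. If the group is finite, the semigroup is a subset of a finite set, hence finite.

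For the nontrivial direction, the strategy is to show that finiteness of the semigroup already forces each inverse $\rho_x^{-1}$ to belong to it. First I would invoke the pigeonhole principle inside the finite semigroup $\presm{\aut{A}}$ applied to the iterates $\rho_x, \rho_x^2, \rho_x^3, \dots$ of a fixed generator: there must exist integers $m < n$ with $\rho_x^m = \rho_x^n$. Since $\rho_x$ is a bijection of $\Sigma^*$ (invertibility is used here in a crucial way), I can cancel $\rho_x^m$ from both sides, viewing them as maps on $\Sigma^*$, to obtain $\rho_x^{n-m} = \mathrm{id}$. Then $\rho_x^{-1} = \rho_x^{n-m-1}$, which is an element of $\presm{\aut{A}}$.

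Once every inverse generator lies in the semigroup, the conclusion is immediate: by definition $\pres{\aut{A}}$ is generated as a group by the $\rho_x$'s, equivalently by the $\rho_x$'s together with the $\rho_x^{-1}$'s, and all of these lie in $\presm{\aut{A}}$. So $\pres{\aut{A}} \subseteq \presm{\aut{A}}$, the opposite inclusion is obvious, and the two coincide; in particular the group is finite.

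There is no real obstacle here. The only point worth emphasizing is the role of the invertibility hypothesis: it is exactly what allows the pigeonhole equation $\rho_x^m = \rho_x^n$ in the semigroup to be promoted to $\rho_x^{n-m} = \mathrm{id}$, and hence to a formula expressing $\rho_x^{-1}$ as a positive word in the generators. Without invertibility this cancellation step fails and the equivalence breaks down.
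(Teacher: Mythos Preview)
Your argument is correct and is the standard one. The paper itself does not supply a proof of this proposition: it simply states the result with a reference to~\cite{AKLMP12}. So there is nothing to compare against beyond noting that your write-up is a clean self-contained justification.

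One cosmetic remark: in the line $\rho_x^{-1}=\rho_x^{\,n-m-1}$ you implicitly assume $n-m\ge 2$. In the boundary case $n-m=1$ one has $\rho_x=\mathrm{id}$, so $\rho_x^{-1}=\rho_x\in\presm{\aut{A}}$ directly; either way the conclusion stands.
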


\begin{proposition}[\cite{AKLMP12,nek,sv11}]
\label{prop:finite_dual}
A Mealy automaton generates a finite (semi)group if and only if so does
its dual.
\end{proposition}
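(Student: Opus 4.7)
The plan is to reduce finiteness of the generated semigroup to boundedness of its level-wise action and then to match this growth between an automaton and its dual via the natural transposition of Mealy computations. For any Mealy automaton $\aut{B}=(Q,\Sigma,\delta,\rho)$, each $\rho_{\mot{x}}$ is length-preserving on $\Sigma^*$, so $\presm{\aut{B}}$ is finite if and only if the non-decreasing sequence $a_n(\aut{B}):=|\{\rho_{\mot{x}}|_{\Sigma^n}:\mot{x}\in Q^+\}|$ is bounded, its stable value then being $|\presm{\aut{B}}|$. The proposition therefore reduces to the claim that $(a_n(\aut{A}))_n$ is bounded if and only if $(a_n(\dual{\aut{A}}))_n$ is.

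The key ingredient is a rectangle duality for Mealy computations. Evaluating $\rho_{\mot{x}}(\mot{s})$ for $\mot{x}=x_1\cdots x_m\in Q^m$ and $\mot{s}=s_1\cdots s_n\in \Sigma^n$ amounts to filling an $m\times n$ grid of elementary $\aut{A}$-transitions; reading the very same grid transposed---column by column instead of row by row---realizes the computation of $\delta_{\mot{s}}(\mot{x})$ in $\dual{\aut{A}}$, producing simultaneously the output word $\rho_{\mot{x}}(\mot{s})\in\Sigma^n$ and the terminal state-word $\delta_{\mot{s}}(\mot{x})\in Q^m$. I would then argue that this bijection on rectangles descends to Nerode equivalence: two state-words $\mot{x},\mot{y}\in Q^m$ with $\rho_{\mot{x}}|_{\Sigma^n}=\rho_{\mot{y}}|_{\Sigma^n}$ correspond to a matching collapse of length-$n$ state-words of $\dual{\aut{A}}$ acting on $Q^m$. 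This yields an identity between the number of distinct level-$n$ actions by length-$m$ state-words in $\aut{A}$ and the number of distinct level-$m$ actions by length-$n$ state-words in $\dual{\aut{A}}$, so that unboundedness of $a_n(\aut{A})$ in $n$ forces unboundedness of $a_m(\dual{\aut{A}})$ in $m$ and conversely. The group statement then follows from Proposition~\ref{prop:finite_group}.

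The delicate point is the second step: making the passage from individual rectangles to Nerode-equivalence classes of state-words rigorous. The primal and dual computations share the same rectangular data, but promoting that identity to an identity of equivalence classes requires tracking simultaneously the output word and the residual state reached after each run---precisely the information encoded by the two readings of the rectangle. Once this bookkeeping is clean, the symmetry between $\aut{A}$ and $\dual{\aut{A}}$ is immediate and the rest is routine.
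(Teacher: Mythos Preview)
The paper does not prove this proposition but only cites it, so the comparison is with correctness alone. Your first reduction is fine and the cross-diagram picture is the right object; the gap is in what you extract from it. The identity you announce---that the number of distinct restrictions $\rho_{\mot{x}}|_{\Sigma^n}$ for $\mot{x}\in Q^m$ equals the number of distinct restrictions $\delta_{\mot{s}}|_{Q^m}$ for $\mot{s}\in\Sigma^n$---is false. Take $Q=\{a,b\}$, $\Sigma=\{0,1\}$, $\rho_a=(0\ 1)$, $\rho_b=\mathrm{id}$, and $\delta_0=\delta_1=\mathrm{id}_Q$: then $\rho_a|_\Sigma\neq\rho_b|_\Sigma$ gives two classes on the $\aut{A}$ side, while in $\dual{\aut{A}}$ both states act trivially on $Q^*$, giving a single class. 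The reason the symmetry breaks is that the $m\times n$ rectangle outputs \emph{two} words, $\rho_{\mot{x}}(\mot{s})$ and $\delta_{\mot{s}}(\mot{x})$; Nerode equivalence for $\aut{A}^m$ records only the first and Nerode equivalence for $(\dual{\aut{A}})^n$ only the second, and these two projections of the same symmetric data need not have the same number of fibres. The ``delicate point'' you flag is therefore not merely delicate---it does not go through.

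A correct route replaces the identity by a bound. Assuming $\presm{\aut{A}}$ finite, the recursion $\delta_{\mot{s}}(x\mot{x}')=\delta_{\mot{s}}(x)\,\delta_{\rho_x(\mot{s})}(\mot{x}')$ (read directly off the rectangle) shows that $\delta_{\mot{s}}\colon Q^*\to Q^*$ is entirely determined by the family $\bigl(\delta_{g(\mot{s})}|_Q\bigr)_{g}$ with $g$ ranging over the finite monoid $\presm{\aut{A}}\cup\{1\}$: processing a word $x_1\cdots x_m$ only ever replaces $\mot{s}$ by some $\rho_{x_1\cdots x_j}(\mot{s})$. Hence there are at most $(|Q|^{|Q|})^{|\presm{\aut{A}}|+1}$ distinct maps $\delta_{\mot{s}}$, independently of $|\mot{s}|$, so $\presm{\dual{\aut{A}}}$ is finite; the group statement follows via Proposition~\ref{prop:finite_group}.
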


\medskip

As in the case of automata, there is a notion of minimization for Mealy
automata: if several states have the same action, it consists of
keeping only one of them.
More formally, for a Mealy automaton
\(\aut{A}=(Q,\Sigma,\delta,\rho)\), the \emph{Nerode equivalence
  \(\equiv\) on \(Q\)} is the limit of the  sequence of increasingly
finer equivalences~$(\equiv_k)$ recursively defined by:
\begin{align*}
\forall x,y\in Q\colon \qquad\qquad x\equiv_0 y & \ \Longleftrightarrow
\ \forall i\in\Sigma\colon \rho_x(i)=\rho_y(i)\:,\\
\forall k\geqslant 0\colon x\equiv_{k+1} y &
\ \Longleftrightarrow\  \bigl(x\equiv_k y\quad \wedge\quad\forall
i\in\Sigma\colon \delta_i(x)\equiv_k\delta_i(y)\bigr)\:.
\end{align*}

Since the set $Q$ is finite, this sequence is ultimately constant.
The \emph{minimization} of~\(\aut{A}\) is then the Mealy automaton
\(\mz(\aut{A})=(Q/\mathord{\equiv},\Sigma,\tilde{\delta},\tilde{\rho})\),
where for every $(x,i)$ in $Q\times \Sigma$,
we have~$\tilde{\delta}_i([x])=[\delta_i(x)]$ and~$\tilde{\rho}_{[x]}=\rho_x$.

If the minimizations of two invertible Mealy automata are structurally
isomorphic, these automata generate two isomorphic groups. Moreover, the following lemma is straightforward:

\begin{lemma}\label{lm:powers}
Let \(\aut{A}\) be an invertible Mealy automaton. If there exist
\(p<q\) such that~\(\mz(\aut{A}^p)\) and~\(\mz(\aut{A}^q)\) are
isomorphic, then \(\aut{A}\) generates a finite group.
\end{lemma}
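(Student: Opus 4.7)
The plan is to identify $\mz(\aut{A}^n)$ with a Mealy automaton whose stateset is the set $S_n := \{\rho_\mot{x} : \mot{x} \in Q^n\} \subseteq \pres{\aut{A}}$ of production functions of length exactly $n$. The Nerode equivalence collapses $\mot{x},\mot{y}\in Q^n$ precisely when $\rho_\mot{x}=\rho_\mot{y}$, so the states of $\mz(\aut{A}^n)$ are in bijection with $S_n$; under this bijection the transition $\delta_i$ sends a map $g\in S_n$ to the unique map $h$ satisfying $g(i\mot{s})=g(i)\cdot h(\mot{s})$ for every $\mot{s}\in\Sigma^*$, and the output of $g$ on the letter $i$ is just $g(i)$. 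The key feature of this picture is that both transitions and outputs depend only on $g$ as an element of $\pres{\aut{A}}$, not on the integer $n$.

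The first substantive step is a rigidity statement: any Mealy-automaton isomorphism $\phi : \mz(\aut{A}^p) \to \mz(\aut{A}^q)$, transported through the identifications above, acts as the identity on states. Indeed, the output-preservation condition yields $\phi(g)(i)=g(i)$ for every $g\in S_p$ and $i\in\Sigma$, and combining this with the transition-preservation condition, a straightforward induction on $|\mot{s}|$ gives $\phi(g)(\mot{s})=g(\mot{s})$ for all $\mot{s}\in\Sigma^*$. Hence $\phi(g)=g$ in $\pres{\aut{A}}$, which forces $S_p = S_q$ as subsets of $\pres{\aut{A}}$.

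From this point the argument becomes formal. Since $S_{n+1}=\{\rho_x\circ g : x\in Q,\, g\in S_n\}$, the equality $S_p=S_q$ propagates by induction on $k$ to $S_{p+k}=S_{q+k}$ for every $k\ge 0$, so the sequence $(S_n)_{n\ge p}$ is periodic with period dividing $q-p$. Consequently $\presm{\aut{A}} = \bigcup_{n\ge 1} S_n = \bigcup_{n=1}^{q-1} S_n$ is finite, and Proposition~\ref{prop:finite_group} immediately upgrades this to finiteness of $\pres{\aut{A}}$. The only non-routine point in the plan is the rigidity step: once states of the minimized automata are regarded as actual elements of $\pres{\aut{A}}$, the isomorphism has no freedom and must be the identity on $S_p = S_q$; everything else is bookkeeping.
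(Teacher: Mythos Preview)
Your proof is correct. The paper does not actually give a proof of this lemma---it merely says the statement is ``straightforward''---so there is nothing to compare against; your argument supplies exactly the kind of routine verification the authors had in mind, via the identification of states of~\(\mz(\aut{A}^n)\) with the set~\(S_n\subseteq\presm{\aut{A}}\), the rigidity of isomorphisms between minimized automata, and the resulting periodicity of~\((S_n)_n\).
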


\subsection{Terminology on trees}\label{sec-trees}
Throughout this paper, we will use different sorts of labeled
trees. Here we set up some terminology that are common for all of them.

All our trees are rooted, \emph{i.e.} with a selected vertex called the \emph{root}. We will visualize the trees traditionally as growing down from the root. Hence, the ``top'' and ``bottom'' directions in the tree are defined as ``to'' and ``from'' the root respectively.
It will be convenient to orient all edges in the tree in the direction from the root.
The initial vertex of an edge~\(e\) is denoted by~\(\top(e)\) and its terminal vertex by~\(\bot(e)\).
A \emph{path} is a (possibly infinite) sequence of adjacent edges without backtracking
from top to bottom. A path is said to be~\emph{initial} if it starts at the root of the tree.
A \emph{branch} is an infinite initial path.
The initial vertex of a non-empty path \(\mot{e}\) is denoted by~\(\top(\mot{e})\)
and its terminal vertex by~\(\bot(\mot{e})\)

The \emph{level of a vertex} is its distance to the root and the
\emph{level of an edge} or \emph{a path} is the level of its initial vertex.
For~\(V\) a vertex in a tree~\(\aut{T}\), by \emph{section of~\(\aut{T}\)
at~\(V\)} (denoted by~\(\aut{T}_{|V}\)) we mean the subtree of~\(\aut{T}\)
with root~\(V\) consisting of all those vertices of~\(\aut{T}\) that are descendant of~$V$.
Additionally, for~\(\mot{e}\) an initial finite path of~\(\aut{T}\),
we also call \(\aut{T}_{|\bot(\mot{e})}\) the \emph{section of~\(\aut{T}\)
at~\(\mot{e}\)} and denote it by~\(\aut{T}_{|\mot{e}}\).

Whenever one considers a rooted tree whose edges are labeled
by elements of a finite set, the \emph{label} of a (possibly infinite)
path is the ordered sequence of labels of the edges of this path.

\section{Powers of a Mealy automaton and their connected components}\label{sec-cc}
In this section we detail the basic properties of the connected
components of the powers of a reversible Mealy automaton. The link
between these components is central in our construction.

Note that some of the notions below may be defined in a more general
framework, but as in this paper we work only with reversible
automata, we restrict our attention to this case in order to make the
explanations easier.

Let \(\aut{A}=(Q,\Sigma,\delta,\rho)\) be a reversible Mealy automaton.
We will consider the connected components
of the underlying graph of~\(\aut{A}\). By reversibility,
all the connected components are strongly connected.
One can also view the connected components of~\(\aut{A}\) as the orbits
of the action of the group~\(\pres{\dual{\aut{A}}}\) generated by
the dual automaton on~\(Q\).

Now, in order to construct the orbit tree of~\(\dual{\aut{A}}\),
we will consider the connected components of the powers of~\(\aut{A}\):
for~\(n>0\), its \emph{\(n\)-th power}~$\aut{A}^n$ is the Mealy automaton
\begin{equation*}
\aut{A}^n = \bigl( \ Q^n,\Sigma, (\delta_i\colon Q^n \rightarrow
Q^n)_{i\in \Sigma}, (\rho_{\mot{x}}\colon \Sigma \rightarrow \Sigma
)_{\mot{x}\in Q^n} \ \bigr)\enspace.
\end{equation*}
By convention, \(\aut{A}^0\) is the trivial automaton on a singleton
stateset and the alphabet \(\Sigma\).

The powers of the reversible Mealy automaton~\(\aut{A}\) are reversible and the
connected components of~\(\aut{A}^n\) coincide with the orbits of the action of~\(\pres{\dual{\aut{A}}}\) on \(Q^n\).

\begin{definition}
The \emph{connection degree} of a Mealy automaton~\(\aut{A}\), denoted by~\(\cd{A}\),
is the largest~$n$ such that~\(\aut{A}^n\) is connected.
If~\(\aut{A}\) is not connected, its connection degree is~0;
if all the powers of~\(\aut{A}\) are connected, its connection degree is infinite.
\end{definition}

Note that a connection degree of~$\aut{A}$ is infinite if and only if the group~\(\pres{\dual{\aut{A}}}\)
acts level-transitively on the tree~\(Q^*\) (\emph{i.e.} the action of~\(\pres{\dual{\aut{A}}}\)
on each level~\(Q^n\) of~$Q^*$ is transitive).

Since~\(\aut{A}\) is reversible, there is a very particular connection
between the connected components of~\(\aut{A}^n\)
and the connected components of~\(\aut{A}^{n+1}\) as highlighted
in~\cite{Kli13}. More precisely, suppose that~\(\aut{C}\) is a
connected component of~\(\aut{A}^n\) for some~\(n\) and that~\(\mot{u}\in
Q^n\) is a state of~\(\aut{C}\). Let also \(x\in Q\) be a state of
$\aut{A}$ and $\aut{D}$ be a connected component of $\aut{A}^{n+1}$
containing the state $\mot{u}x$.  Then, for any
state~\(\mot{v}\) of~\(\aut{C}\), there exists a state of~\(\aut{D}\)
prefixed with~\(\mot{v}\):
\[\exists\mot{s}\in\Sigma^*\mid \delta_{\mot{s}}(\mot{u}) =
\mot{v}\quad \text{and so}\quad  \delta_{\mot{s}}(\mot{u}x) =
\mot{v}\delta_{\rho_{\mot{u}}(\mot{s})}(x)\enspace.\]

Furthermore, if \(\mot{u}y\) is a state of~\(\aut{D}\), for some
state~\(y\in Q\) different from~\(x\), then \(\delta_{\mot{s}}(\mot{u}x)\)
and~\(\delta_{\mot{s}}(\mot{u}y)\) are two different states of~\(\aut{D}\)
prefixed with~\(\mot{v}\),
because of the reversibility of~\(\aut{A}^{n+1}\): the transition
function~\(\delta_{\rho_{\mot{u}}(\mot{s})}\) is a permutation.

Hence \(\aut{D}\) can be seen as consisting of several full
copies of~\(\aut{C}\) and~\(\#\aut{C}\) divides~\(\#\aut{D}\). They
have the same size if and only if, once fixed some state~\(\mot{u}\)
of~\(\aut{C}\), for any different states~\(x,y\in Q\), \(\mot{u}x\) and
\(\mot{u}y\) cannot both belong to \(\aut{D}\).

If from a connected component \(\aut{C}\)
of~\(\aut{A}^n\), we obtain several connected components
of~\(\aut{A}^{n+1}\), we say that~\(\aut{C}\) \emph{splits up}.%%  On the

\medskip

The connected components of the powers of a Mealy automaton and the
questions of the finiteness of the generated group
or the existence of a monogenic subgroup are closely related,
as shown in the following propositions (obtained also independently in~\cite{dangeli_r:geometric}).

\begin{proposition}\label{prop-bounded-cc}
An invertible-reversible Mealy automaton generates a finite group if
and only if the connected components of its powers have bounded size.
\end{proposition}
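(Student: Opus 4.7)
I would prove the two directions separately, each via the dual automaton.

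\emph{Forward direction.} Suppose $\pres{\aut{A}}$ is finite. Proposition~\ref{prop:finite_dual} yields that $\pres{\dual{\aut{A}}}$ is also finite. As recalled in Section~\ref{sec-cc}, the connected components of $\aut{A}^n$ are precisely the orbits of $\pres{\dual{\aut{A}}}$ acting on $Q^n$, so their sizes are bounded uniformly in $n$ by $|\pres{\dual{\aut{A}}}|$.

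\emph{Converse.} Assume the components of every $\aut{A}^n$ have size at most some constant~$K$. The central claim I would establish is that every element $\rho_{\mot{u}} \in \presm{\aut{A}}$ (for any $\mot{u} \in Q^*$), viewed as an endomorphism of the tree $\Sigma^*$, admits at most~$K$ distinct \emph{states at vertices}, in the usual automaton-group sense. Indeed, iterating the defining recursion for $\rho_{\mot{u}}$ gives
\[ \rho_{\mot{u}}(\mot{s}\mot{t}) \;=\; \rho_{\mot{u}}(\mot{s})\,\rho_{\delta_{\mot{s}}(\mot{u})}(\mot{t}) \qquad \text{for all } \mot{s},\mot{t}\in\Sigma^*, \]
so the state of $\rho_{\mot{u}}$ below the vertex $\mot{s}$ is the function $\rho_{\delta_{\mot{s}}(\mot{u})}$. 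As $\mot{s}$ ranges over $\Sigma^*$, the word $\delta_{\mot{s}}(\mot{u})$ stays inside the connected component of $\mot{u}$ in $\aut{A}^{|\mot{u}|}$, and therefore takes at most~$K$ distinct values.

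Consequently, every element of $\presm{\aut{A}}$ is realized by a Mealy automaton with at most~$K$ states over the alphabet~$\Sigma$, and for fixed $K$ and $|\Sigma|$ there are only finitely many such objects, each defining a unique endomorphism of $\Sigma^*$. Hence $\presm{\aut{A}}$ is finite, and Proposition~\ref{prop:finite_group} then forces $\pres{\aut{A}}$ itself to be finite. I do not anticipate a serious obstacle: the only nontrivial step is recognizing that a bound on the size of a connected component of $\mot{u}$ translates, via the state recursion, into the same bound on the number of distinct states of the element $\rho_{\mot{u}}$; after that, the conclusion is a counting argument together with a direct appeal to the finiteness propositions of Section~\ref{sec-basic}.
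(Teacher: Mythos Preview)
Your proof is correct. The forward direction is identical to the paper's. For the converse, however, you take a somewhat different (and arguably more direct) route. The paper observes that bounded connected components force the minimizations~$\mz(\aut{A}^n)$ to have bounded size, so by pigeonhole two of them are isomorphic, and then invokes Lemma~\ref{lm:powers} to conclude finiteness of the group. You instead unpack the same bound at the level of individual elements: since the sections of~$\rho_{\mot u}$ are exactly the~$\rho_{\mot v}$ with~$\mot v$ in the connected component of~$\mot u$, each element of~$\presm{\aut A}$ is computed by an automaton with at most~$K$ states, so the semigroup is finite and Proposition~\ref{prop:finite_group} finishes. Your argument bypasses Lemma~\ref{lm:powers} entirely and makes explicit the ``bounded number of sections'' fact that the paper's minimization argument uses implicitly; the paper's route, on the other hand, fits more naturally with the Nerode-equivalence machinery already set up in Section~\ref{sec-basic}.
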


\begin{proof}
Let \(\aut{A}\) be an invertible-reversible Mealy automaton.

Let \(\aut{C}\) be a connected component of some~\(\aut{A}^n\), and
\(\mot{u}, \mot{v}\) two states of~\(\aut{C}\):
because of the reversibility of~\(\aut{A}^n\), \(\mot{v}\) is the
image of~\(\mot{u}\) by the action of an element of~\(\pres{\dual{\aut{A}}}\).
If \(\aut{A}\) generates a finite group, so
does \(\dual{\aut{A}}\) by Proposition~\ref{prop:finite_dual}, and
hence \(\#\aut{C}\) is bounded by~\(\#\pres{\dual{\aut{A}}}\).

Conversely, if the connected components of the powers of~\(\aut{A}\) have bounded
size, as there is only a finite number of Mealy automata of bounded
size, there exist \(p<q\) such that~\(\mz(\aut{A}^p)\) and
\(\mz(\aut{A}^q)\) are isomorphic,
and therefore \(\aut{A}\) generates a finite group from
Lemma~\ref{lm:powers}.
\end{proof}

\begin{proposition}\label{prop-finite}
Let $\aut{A}=(Q,\Sigma,\delta,\rho)$ be an invertible-reversible Mealy automaton and let~\(\mot{u}\in Q^+\) be a non-empty word. The following conditions are
equivalent:
\begin{enumerate}[(i)]
\item \(\rho_{\mot{u}}\) has finite order,\label{i1}
\item the sizes of the connected components of~\((\mot{u}^n)_{n\in
  \N}\) are bounded,\label{i2}
\item there exists a word \(\mot{v}\) such that the sizes of the
  connected components of~\((\mot{vu}^n)_{n\in \N}\) are
  bounded,\label{i3}
\item for any word \(\mot{v}\), the sizes of the
  connected components of~\((\mot{vu}^n)_{n\in \N}\) are
  bounded.\label{i4}
\end{enumerate}
\end{proposition}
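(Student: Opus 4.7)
My plan is to establish the four equivalences via the cycle (i) $\Rightarrow$ (iv) $\Rightarrow$ (iii) $\Rightarrow$ (ii) $\Rightarrow$ (i). The central tool will be the identity
\[
\delta_{\mot{s}}(\mot{v}\mot{u}^n) \;=\; \delta_{\mot{s}}(\mot{v})\cdot\delta_{\rho_{\mot{v}}(\mot{s})}(\mot{u})\cdot\delta_{\rho_{\mot{u}}\rho_{\mot{v}}(\mot{s})}(\mot{u})\cdots\delta_{\rho_{\mot{u}}^{n-1}\rho_{\mot{v}}(\mot{s})}(\mot{u})\enspace,
\]
obtained by iterating the recursion $\delta_i(x\mot{w})=\delta_i(x)\delta_{\rho_x(i)}(\mot{w})$. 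It exhibits any element of the orbit of $\mot{v}\mot{u}^n$---which, by reversibility, is the connected component of $\mot{v}\mot{u}^n$---as the concatenation of a prefix in $Q^{|\mot{v}|}$ with $n$ blocks in $Q^{|\mot{u}|}$, each driven by a successive iterate of $\rho_{\mot{u}}$ applied to $\rho_{\mot{v}}(\mot{s})$.

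For (i) $\Rightarrow$ (iv), given an arbitrary $\mot{v}$ and assuming $\rho_{\mot{u}}^k=1$, the sequence $(\rho_{\mot{u}}^j\rho_{\mot{v}}(\mot{s}))_j$ is $k$-periodic; the identity then shows that $\delta_{\mot{s}}(\mot{v}\mot{u}^n)$ is fully determined by the tuple $\bigl(\delta_{\mot{s}}(\mot{v}),\delta_{\rho_{\mot{v}}(\mot{s})}(\mot{u}),\ldots,\delta_{\rho_{\mot{u}}^{k-1}\rho_{\mot{v}}(\mot{s})}(\mot{u})\bigr)$, of which there are at most $|Q|^{|\mot{v}|+k|\mot{u}|}$ many---uniformly in $n$. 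The implication (iv) $\Rightarrow$ (iii) is immediate. For (iii) $\Rightarrow$ (ii), I would project the orbit of $\mot{v}\mot{u}^n$ onto its last $n|\mot{u}|$ letters: by the identity this projection equals $\{\delta_{\rho_{\mot{v}}(\mot{s})}(\mot{u}^n):\mot{s}\in\Sigma^*\}$, and invertibility of $\aut{A}$ makes $\rho_{\mot{v}}$ a bijection of $\Sigma^*$, so the projection coincides with the full orbit of $\mot{u}^n$; its size is then bounded by the size of the orbit of $\mot{v}\mot{u}^n$.

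The implication (ii) $\Rightarrow$ (i) is what I expect to require the most care. Let $\aut{A}_n$ denote the subautomaton of $\aut{A}^{n|\mot{u}|}$ induced by the connected component of $\mot{u}^n$, regarded as a rooted Mealy automaton pointed at $\mot{u}^n$. Under (ii) the sizes $\#\aut{A}_n$ are uniformly bounded, and since there are only finitely many rooted Mealy automata of bounded size over the fixed alphabet $\Sigma$, the pigeonhole principle yields $p<q$ for which $\aut{A}_p$ and $\aut{A}_q$ are isomorphic as rooted Mealy automata. Any such isomorphism identifies the production functions at the distinguished states, whence $\rho_{\mot{u}}^p=\rho_{\mot{u}^p}=\rho_{\mot{u}^q}=\rho_{\mot{u}}^q$, so that $\rho_{\mot{u}}^{q-p}=1$. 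The main subtlety throughout is keeping track of the twist by $\rho_{\mot{v}}$ that appears in the identity above: invertibility of $\aut{A}$ is precisely what allows the projection step to convert an orbit over a prefixed state into the orbit of the pure power $\mot{u}^n$.
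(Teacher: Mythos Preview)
Your argument is correct. The organisation differs from the paper's, and the differences are worth noting.

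The paper proves the chain \((\text{i})\Rightarrow(\text{ii})\Rightarrow(\text{iv})\) and \((\text{iii})\Rightarrow(\text{i})\), with the remaining implications trivial. For \((\text{i})\Rightarrow(\text{ii})\) it observes that if \(\rho_{\mot{u}}^k=\mathrm{id}\) then every state in the connected component of~\(\mot{u}^k\) acts trivially, and then invokes Proposition~\ref{prop-bounded-cc} to bound the components of \((\mot{u}^{kn})_n\). For \((\text{ii})\Rightarrow(\text{iv})\) it uses the crude bound \(\#\mathrm{cc}(\mot{vu}^n)\leq \#Q^{|\mot{v}|}\cdot\#\mathrm{cc}(\mot{u}^n)\). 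For \((\text{iii})\Rightarrow(\text{i})\) it runs the same pigeonhole-on-pointed-automata argument you give for \((\text{ii})\Rightarrow(\text{i})\), but with the prefix~\(\mot{v}\) present, concluding \(\rho_{\mot{vu}^p}=\rho_{\mot{vu}^q}\) and cancelling \(\rho_{\mot{v}}\).

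Your route is more self-contained: the direct bound \(|Q|^{|\mot{v}|+k|\mot{u}|}\) for \((\text{i})\Rightarrow(\text{iv})\) via the explicit orbit identity avoids the appeal to Proposition~\ref{prop-bounded-cc}, and your \((\text{iii})\Rightarrow(\text{ii})\) via suffix-projection (using that \(\rho_{\mot{v}}\) is a bijection of \(\Sigma^*\)) is a cleaner converse to the paper's \((\text{ii})\Rightarrow(\text{iv})\). Conversely, the paper's path makes the dependence on the earlier finiteness criterion explicit, which fits its narrative. Both arguments ultimately rest on the same two ingredients: the decomposition \(\delta_{\mot{s}}(\mot{vw})=\delta_{\mot{s}}(\mot{v})\,\delta_{\rho_{\mot{v}}(\mot{s})}(\mot{w})\) and a pigeonhole on finitely many pointed Mealy automata of bounded size.
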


\begin{proof}
\eqref{i2}\(\Rightarrow\)\eqref{i3},
\eqref{i4}\(\Rightarrow\)\eqref{i2}, and
\eqref{i4}\(\Rightarrow\)\eqref{i3} are immediate.

\eqref{i1}\(\Rightarrow\)\eqref{i2}  is a direct consequence of
Proposition~\ref{prop-bounded-cc}: let \(k\) be the order of
\(\rho_{\mot{u}}\); it means that~\(\mot{u}^k\) acts as the identity,
and so do all the states of its connected component. By
Proposition~\ref{prop-bounded-cc}, the connected components of the
\((\mot{u}^{kn})_n\) have bounded size, which leads to~\eqref{i2}.

\eqref{i3}\(\Rightarrow\)\eqref{i1}: for each \(n\), denote by
\(\aut{C}_n\) the connected component of~\(\mot{vu}^n\). As the sizes
of these components are bounded, the sequence \((\aut{C}_n)_n\) admits a
subsequence which all
elements are the same, up to state numbering. Within this subsequence,
there are two elements such that two different words \(\mot{vu}^*\)
name the same state, say \(\mot{vu}^p\) and~\(\mot{vu}^q\), which means
that~\(\rho_{\mot{vu}^p}=\rho_{\mot{vu}^q}\), and \(\rho_{\mot{u}}\)
has finite order.

\eqref{i2}\(\Rightarrow\)\eqref{i4}: the size of the connected
component of \(\mot{vu}^n\) is at most \(\#\Sigma^{|\mot{v}|}\) times
the size of the connected component of~\(\mot{u}^n\).
\end{proof}

\section{The Labeled Orbit Tree}\label{sec-otree}
In this section, we build a tree capturing the links between the
connected components of consecutive powers of a Mealy
automaton. An example of the first levels of such a tree is given in
Figure~\ref{fig-otree}.

Let \(\aut{A}=(Q,\Sigma,\delta,\rho)\) be an invertible-reversible Mealy automaton.
Consider the tree with vertices
the connected components of the powers of~\(\aut{A}\), and the
incidence relation built by adding an element of~\(Q\): for any~\(n\geq 0\),
the connected component of~\(\mot{u}\in Q^n\) is linked
to the connected component(s) of~\(\mot{u}x\), for any~\(x\in
Q\). This tree is called the \emph{orbit tree} of~\(\dual{\aut{A}}\)~\cite{gawron_ns:conjugation}.
It can be seen as the
quotient of the tree \(Q^*\) under the
action of the group~\(\pres{\dual{\aut{A}}}\).

We label any
edge \(\aut{C}\to\aut{D}\) of the orbit tree by the ratio
\(\frac{\#\aut{D}}{\#\aut{C}}\), which is always an integer by the
reversibility of~\(\aut{A}\). We call this labeled tree the
\emph{labeled orbit tree} of~\(\dual{\aut{A}}\).
In~\cite{gawron_ns:conjugation}, in the definition of the labeled orbit tree,
each vertex is labeled by the size of the associated connected component,
which encodes exactly the same information as our relative labeling, as the root has size~one.
We denote by~\(\otree[A]\) the labeled orbit tree of~\(\dual{\aut{A}}\).
Note that for each vertex of \(\otree[A]\) the sum of the labels of all edges going down from this vertex
always equals to the number of states in~\(\aut{A}\).

\begin{figure}[ht]
%\centering
 	\begin{center}
         \scalebox{.444}{
\begin{tikzpicture}[>=latex',line join=bevel,scale=.85]
  \definecolor{strokecolor}{rgb}{0.8,0.8,0.8};
  \definecolor{fillcolor}{rgb}{0.8,0.8,0.8};
  \node (a0_1) at (482.0bp,468.0bp) [draw=strokecolor,fill=fillcolor,ellipse] {};	%root
  \coordinate (a0_1left) at (472.0bp,468.0bp);	
  \definecolor{strokecolor}{rgb}{0.71,0.67,0.89};
  \definecolor{fillcolor}{rgb}{0.71,0.67,0.89};
  \node (a2_1) at (482.0bp,352.0bp) [draw=strokecolor,fill=fillcolor,ellipse] {};	%first split
  \coordinate (a2_1left) at (472.0bp,352.0bp);	
  \node[scale=1.5] (a2_1ne) at (552.0bp,402.0bp) {vertex~\({\Lambda}\)};	%\mathcal F
  \coordinate (a2_1bis) at (488.0bp,358.0bp);	
  \definecolor{strokecolor}{rgb}{0.41,0.01,0.27};
  \definecolor{fillcolor}{rgb}{0.41,0.01,0.27};
  \node (a3_2) at (580.0bp,294.0bp) [draw=strokecolor,fill=fillcolor,ellipse] {};	%O2
  \coordinate (a3_2bis) at (586.0bp,300.0bp);	%O2
  \node[scale=1.5] (a3_2ne) at (650.0bp,344.0bp) {vertex~\({\mathcal O}_2\)};		
  \coordinate (a2_1a3_2) at (528.0bp,320.0bp);		
  \node[scale=1.5] (a2_1a3_2se) at (481.0bp,273.0bp)  {edge~\(e_2\)};	
 \draw[darkgray,decorate,decoration={brace,amplitude=7pt}] (a2_1left) -- (a0_1left) node[left,midway,scale=1.5] {\(\curlywedge~\)};
 \draw[gray,->,thick] (a3_2ne) -- (a3_2bis);
 \draw[gray,->,thick] (a2_1a3_2se) -- (a2_1a3_2);
 \draw[gray,->,thick] (a2_1ne) -- (a2_1bis);
  \definecolor{strokecolor}{rgb}{0.24,0.86,0.16};
  \definecolor{fillcolor}{rgb}{0.24,0.86,0.16};
  \node (a7_24) at (666.0bp,62.0bp) [draw=strokecolor,fill=fillcolor,ellipse] {};
  \definecolor{strokecolor}{rgb}{0.24,0.86,0.16};
  \definecolor{fillcolor}{rgb}{0.24,0.86,0.16};
  \node (a7_25) at (679.0bp,62.0bp) [draw=strokecolor,fill=fillcolor,ellipse] {};
  \definecolor{strokecolor}{rgb}{0.24,0.86,0.16};
  \definecolor{fillcolor}{rgb}{0.24,0.86,0.16};
  \node (a7_26) at (705.0bp,62.0bp) [draw=strokecolor,fill=fillcolor,ellipse] {};
  \definecolor{strokecolor}{rgb}{0.64,0.82,0.67};
  \definecolor{fillcolor}{rgb}{0.64,0.82,0.67};
  \node (a7_27) at (744.0bp,62.0bp) [draw=strokecolor,fill=fillcolor,ellipse] {};
  \definecolor{strokecolor}{rgb}{0.64,0.82,0.67};
  \definecolor{fillcolor}{rgb}{0.64,0.82,0.67};
  \node (a7_20) at (549.0bp,62.0bp) [draw=strokecolor,fill=fillcolor,ellipse] {};
  \definecolor{strokecolor}{rgb}{0.41,0.01,0.27};
  \definecolor{fillcolor}{rgb}{0.41,0.01,0.27};
  \node (a7_21) at (588.0bp,62.0bp) [draw=strokecolor,fill=fillcolor,ellipse] {};
  \definecolor{strokecolor}{rgb}{0.41,0.01,0.27};
  \definecolor{fillcolor}{rgb}{0.41,0.01,0.27};
  \node (a7_22) at (606.0bp,62.0bp) [draw=strokecolor,fill=fillcolor,ellipse] {};
  \definecolor{strokecolor}{rgb}{0.41,0.01,0.27};
  \definecolor{fillcolor}{rgb}{0.41,0.01,0.27};
  \node (a7_23) at (627.0bp,62.0bp) [draw=strokecolor,fill=fillcolor,ellipse] {};
  \definecolor{strokecolor}{rgb}{0.24,0.86,0.16};
  \definecolor{fillcolor}{rgb}{0.24,0.86,0.16};
  \node (a8_78) at (1004.0bp,4.0bp) [draw=strokecolor,fill=fillcolor,ellipse] {};
  \definecolor{strokecolor}{rgb}{0.24,0.86,0.16};
  \definecolor{fillcolor}{rgb}{0.24,0.86,0.16};
  \node (a8_79) at (1017.0bp,4.0bp) [draw=strokecolor,fill=fillcolor,ellipse] {};
  \definecolor{strokecolor}{rgb}{0.89,0.3,0.95};
  \definecolor{fillcolor}{rgb}{0.89,0.3,0.95};
  \node (a7_28) at (763.0bp,62.0bp) [draw=strokecolor,fill=fillcolor,ellipse] {};
  \definecolor{strokecolor}{rgb}{0.84,0.87,0.71};
  \definecolor{fillcolor}{rgb}{0.84,0.87,0.71};
  \node (a7_29) at (796.0bp,62.0bp) [draw=strokecolor,fill=fillcolor,ellipse] {};
  \definecolor{strokecolor}{rgb}{0.41,0.01,0.27};
  \definecolor{fillcolor}{rgb}{0.41,0.01,0.27};
  \node (a8_49) at (627.0bp,4.0bp) [draw=strokecolor,fill=fillcolor,ellipse] {};
  \definecolor{strokecolor}{rgb}{0.24,0.86,0.16};
  \definecolor{fillcolor}{rgb}{0.24,0.86,0.16};
  \node (a8_48) at (614.0bp,4.0bp) [draw=strokecolor,fill=fillcolor,ellipse] {};
  \definecolor{strokecolor}{rgb}{0.41,0.01,0.27};
  \definecolor{fillcolor}{rgb}{0.41,0.01,0.27};
  \node (a8_47) at (601.0bp,4.0bp) [draw=strokecolor,fill=fillcolor,ellipse] {};
  \definecolor{strokecolor}{rgb}{0.24,0.86,0.16};
  \definecolor{fillcolor}{rgb}{0.24,0.86,0.16};
  \node (a8_46) at (588.0bp,4.0bp) [draw=strokecolor,fill=fillcolor,ellipse] {};
  \definecolor{strokecolor}{rgb}{0.41,0.01,0.27};
  \definecolor{fillcolor}{rgb}{0.41,0.01,0.27};
  \node (a8_45) at (575.0bp,4.0bp) [draw=strokecolor,fill=fillcolor,ellipse] {};
  \definecolor{strokecolor}{rgb}{0.64,0.82,0.67};
  \definecolor{fillcolor}{rgb}{0.64,0.82,0.67};
  \node (a8_44) at (562.0bp,4.0bp) [draw=strokecolor,fill=fillcolor,ellipse] {};
  \definecolor{strokecolor}{rgb}{0.89,0.3,0.95};
  \definecolor{fillcolor}{rgb}{0.89,0.3,0.95};
  \node (a8_43) at (549.0bp,4.0bp) [draw=strokecolor,fill=fillcolor,ellipse] {};
  \definecolor{strokecolor}{rgb}{0.84,0.87,0.71};
  \definecolor{fillcolor}{rgb}{0.84,0.87,0.71};
  \node (a8_42) at (536.0bp,4.0bp) [draw=strokecolor,fill=fillcolor,ellipse] {};
  \definecolor{strokecolor}{rgb}{0.64,0.82,0.67};
  \definecolor{fillcolor}{rgb}{0.64,0.82,0.67};
  \node (a8_41) at (523.0bp,4.0bp) [draw=strokecolor,fill=fillcolor,ellipse] {};
  \definecolor{strokecolor}{rgb}{0.64,0.82,0.67};
  \definecolor{fillcolor}{rgb}{0.64,0.82,0.67};
  \node (a8_40) at (510.0bp,4.0bp) [draw=strokecolor,fill=fillcolor,ellipse] {};
  \definecolor{strokecolor}{rgb}{0.24,0.86,0.16};
  \definecolor{fillcolor}{rgb}{0.24,0.86,0.16};
  \node (a6_4) at (166.0bp,120.0bp) [draw=strokecolor,fill=fillcolor,ellipse] {};
  \definecolor{strokecolor}{rgb}{0.24,0.86,0.16};
  \definecolor{fillcolor}{rgb}{0.24,0.86,0.16};
  \node (a6_5) at (243.0bp,120.0bp) [draw=strokecolor,fill=fillcolor,ellipse] {};
  \definecolor{strokecolor}{rgb}{0.41,0.01,0.27};
  \definecolor{fillcolor}{rgb}{0.41,0.01,0.27};
  \node (a6_6) at (297.0bp,120.0bp) [draw=strokecolor,fill=fillcolor,ellipse] {};
  \definecolor{strokecolor}{rgb}{0.84,0.87,0.71};
  \definecolor{fillcolor}{rgb}{0.84,0.87,0.71};
  \node (a6_7) at (361.0bp,120.0bp) [draw=strokecolor,fill=fillcolor,ellipse] {};
  \definecolor{strokecolor}{rgb}{0.71,0.67,0.89};
  \definecolor{fillcolor}{rgb}{0.71,0.67,0.89};
  \node (a6_1) at (36.0bp,120.0bp) [draw=strokecolor,fill=fillcolor,ellipse] {};
  \definecolor{strokecolor}{rgb}{0.41,0.01,0.27};
  \definecolor{fillcolor}{rgb}{0.41,0.01,0.27};
  \node (a6_2) at (75.0bp,120.0bp) [draw=strokecolor,fill=fillcolor,ellipse] {};
  \definecolor{strokecolor}{rgb}{0.41,0.01,0.27};
  \definecolor{fillcolor}{rgb}{0.41,0.01,0.27};
  \node (a6_3) at (142.0bp,120.0bp) [draw=strokecolor,fill=fillcolor,ellipse] {};
  \definecolor{strokecolor}{rgb}{0.24,0.86,0.16};
  \definecolor{fillcolor}{rgb}{0.24,0.86,0.16};
  \node (a6_8) at (406.0bp,120.0bp) [draw=strokecolor,fill=fillcolor,ellipse] {};
  \definecolor{strokecolor}{rgb}{0.24,0.86,0.16};
  \definecolor{fillcolor}{rgb}{0.24,0.86,0.16};
  \node (a6_9) at (507.0bp,120.0bp) [draw=strokecolor,fill=fillcolor,ellipse] {};
  \definecolor{strokecolor}{rgb}{0.89,0.3,0.95};
  \definecolor{fillcolor}{rgb}{0.89,0.3,0.95};
  \node (a8_58) at (744.0bp,4.0bp) [draw=strokecolor,fill=fillcolor,ellipse] {};
  \definecolor{strokecolor}{rgb}{0.89,0.3,0.95};
  \definecolor{fillcolor}{rgb}{0.89,0.3,0.95};
  \node (a8_59) at (757.0bp,4.0bp) [draw=strokecolor,fill=fillcolor,ellipse] {};
  \definecolor{strokecolor}{rgb}{0.24,0.86,0.16};
  \definecolor{fillcolor}{rgb}{0.24,0.86,0.16};
  \node (a8_50) at (640.0bp,4.0bp) [draw=strokecolor,fill=fillcolor,ellipse] {};
  \definecolor{strokecolor}{rgb}{0.84,0.87,0.71};
  \definecolor{fillcolor}{rgb}{0.84,0.87,0.71};
  \node (a8_51) at (653.0bp,4.0bp) [draw=strokecolor,fill=fillcolor,ellipse] {};
  \definecolor{strokecolor}{rgb}{0.24,0.86,0.16};
  \definecolor{fillcolor}{rgb}{0.24,0.86,0.16};
  \node (a8_52) at (666.0bp,4.0bp) [draw=strokecolor,fill=fillcolor,ellipse] {};
  \definecolor{strokecolor}{rgb}{0.84,0.87,0.71};
  \definecolor{fillcolor}{rgb}{0.84,0.87,0.71};
  \node (a8_53) at (679.0bp,4.0bp) [draw=strokecolor,fill=fillcolor,ellipse] {};
  \definecolor{strokecolor}{rgb}{0.24,0.86,0.16};
  \definecolor{fillcolor}{rgb}{0.24,0.86,0.16};
  \node (a8_54) at (692.0bp,4.0bp) [draw=strokecolor,fill=fillcolor,ellipse] {};
  \definecolor{strokecolor}{rgb}{0.84,0.87,0.71};
  \definecolor{fillcolor}{rgb}{0.84,0.87,0.71};
  \node (a8_55) at (705.0bp,4.0bp) [draw=strokecolor,fill=fillcolor,ellipse] {};
  \definecolor{strokecolor}{rgb}{0.24,0.86,0.16};
  \definecolor{fillcolor}{rgb}{0.24,0.86,0.16};
  \node (a8_56) at (718.0bp,4.0bp) [draw=strokecolor,fill=fillcolor,ellipse] {};
  \definecolor{strokecolor}{rgb}{0.64,0.82,0.67};
  \definecolor{fillcolor}{rgb}{0.64,0.82,0.67};
  \node (a8_57) at (731.0bp,4.0bp) [draw=strokecolor,fill=fillcolor,ellipse] {};
  \definecolor{strokecolor}{rgb}{0.64,0.82,0.67};
  \definecolor{fillcolor}{rgb}{0.64,0.82,0.67};
  \node (a8_72) at (926.0bp,4.0bp) [draw=strokecolor,fill=fillcolor,ellipse] {};
  \definecolor{strokecolor}{rgb}{0.24,0.86,0.16};
  \definecolor{fillcolor}{rgb}{0.24,0.86,0.16};
  \node (a8_73) at (939.0bp,4.0bp) [draw=strokecolor,fill=fillcolor,ellipse] {};
  \definecolor{strokecolor}{rgb}{0.64,0.82,0.67};
  \definecolor{fillcolor}{rgb}{0.64,0.82,0.67};
  \node (a8_70) at (900.0bp,4.0bp) [draw=strokecolor,fill=fillcolor,ellipse] {};
  \definecolor{strokecolor}{rgb}{0.71,0.67,0.89};
  \definecolor{fillcolor}{rgb}{0.71,0.67,0.89};
  \node (a3_1) at (323.0bp,294.0bp) [draw=strokecolor,fill=fillcolor,ellipse] {};
  \definecolor{strokecolor}{rgb}{0.84,0.87,0.71};
  \definecolor{fillcolor}{rgb}{0.84,0.87,0.71};
  \node (a8_71) at (913.0bp,4.0bp) [draw=strokecolor,fill=fillcolor,ellipse] {};
  \definecolor{strokecolor}{rgb}{0.24,0.86,0.16};
  \definecolor{fillcolor}{rgb}{0.24,0.86,0.16};
  \node (a8_76) at (978.0bp,4.0bp) [draw=strokecolor,fill=fillcolor,ellipse] {};
  \definecolor{strokecolor}{rgb}{0.24,0.86,0.16};
  \definecolor{fillcolor}{rgb}{0.24,0.86,0.16};
  \node (a8_77) at (991.0bp,4.0bp) [draw=strokecolor,fill=fillcolor,ellipse] {};
  \definecolor{strokecolor}{rgb}{0.24,0.86,0.16};
  \definecolor{fillcolor}{rgb}{0.24,0.86,0.16};
  \node (a8_74) at (952.0bp,4.0bp) [draw=strokecolor,fill=fillcolor,ellipse] {};
  \definecolor{strokecolor}{rgb}{0.24,0.86,0.16};
  \definecolor{fillcolor}{rgb}{0.24,0.86,0.16};
  \node (a8_75) at (965.0bp,4.0bp) [draw=strokecolor,fill=fillcolor,ellipse] {};
  \definecolor{strokecolor}{rgb}{0.84,0.87,0.71};
  \definecolor{fillcolor}{rgb}{0.84,0.87,0.71};
  \node (a8_29) at (367.0bp,4.0bp) [draw=strokecolor,fill=fillcolor,ellipse] {};
  \definecolor{strokecolor}{rgb}{0.89,0.3,0.95};
  \definecolor{fillcolor}{rgb}{0.89,0.3,0.95};
  \node (a8_28) at (354.0bp,4.0bp) [draw=strokecolor,fill=fillcolor,ellipse] {};
  \definecolor{strokecolor}{rgb}{0.24,0.86,0.16};
  \definecolor{fillcolor}{rgb}{0.24,0.86,0.16};
  \node (a8_25) at (315.0bp,4.0bp) [draw=strokecolor,fill=fillcolor,ellipse] {};
  \definecolor{strokecolor}{rgb}{0.24,0.86,0.16};
  \definecolor{fillcolor}{rgb}{0.24,0.86,0.16};
  \node (a8_24) at (302.0bp,4.0bp) [draw=strokecolor,fill=fillcolor,ellipse] {};
  \definecolor{strokecolor}{rgb}{0.64,0.82,0.67};
  \definecolor{fillcolor}{rgb}{0.64,0.82,0.67};
  \node (a8_27) at (341.0bp,4.0bp) [draw=strokecolor,fill=fillcolor,ellipse] {};
  \definecolor{strokecolor}{rgb}{0.24,0.86,0.16};
  \definecolor{fillcolor}{rgb}{0.24,0.86,0.16};
  \node (a8_26) at (328.0bp,4.0bp) [draw=strokecolor,fill=fillcolor,ellipse] {};
  \definecolor{strokecolor}{rgb}{0.41,0.01,0.27};
  \definecolor{fillcolor}{rgb}{0.41,0.01,0.27};
  \node (a8_21) at (263.0bp,4.0bp) [draw=strokecolor,fill=fillcolor,ellipse] {};
  \definecolor{strokecolor}{rgb}{0.64,0.82,0.67};
  \definecolor{fillcolor}{rgb}{0.64,0.82,0.67};
  \node (a8_20) at (250.0bp,4.0bp) [draw=strokecolor,fill=fillcolor,ellipse] {};
  \definecolor{strokecolor}{rgb}{0.41,0.01,0.27};
  \definecolor{fillcolor}{rgb}{0.41,0.01,0.27};
  \node (a8_23) at (289.0bp,4.0bp) [draw=strokecolor,fill=fillcolor,ellipse] {};
  \definecolor{strokecolor}{rgb}{0.41,0.01,0.27};
  \definecolor{fillcolor}{rgb}{0.41,0.01,0.27};
  \node (a8_22) at (276.0bp,4.0bp) [draw=strokecolor,fill=fillcolor,ellipse] {};
  \definecolor{strokecolor}{rgb}{0.24,0.86,0.16};
  \definecolor{fillcolor}{rgb}{0.24,0.86,0.16};
  \node (a8_81) at (1043.0bp,4.0bp) [draw=strokecolor,fill=fillcolor,ellipse] {};
  \definecolor{strokecolor}{rgb}{0.24,0.86,0.16};
  \definecolor{fillcolor}{rgb}{0.24,0.86,0.16};
  \node (a8_80) at (1030.0bp,4.0bp) [draw=strokecolor,fill=fillcolor,ellipse] {};
  \definecolor{strokecolor}{rgb}{0.24,0.86,0.16};
  \definecolor{fillcolor}{rgb}{0.24,0.86,0.16};
  \node (a1_1) at (482.0bp,410.0bp) [draw=strokecolor,fill=fillcolor,ellipse] {};
  \definecolor{strokecolor}{rgb}{0.24,0.86,0.16};
  \definecolor{fillcolor}{rgb}{0.24,0.86,0.16};
  \node (a7_9) at (224.0bp,62.0bp) [draw=strokecolor,fill=fillcolor,ellipse] {};
  \definecolor{strokecolor}{rgb}{0.24,0.86,0.16};
  \definecolor{fillcolor}{rgb}{0.24,0.86,0.16};
  \node (a7_8) at (185.0bp,62.0bp) [draw=strokecolor,fill=fillcolor,ellipse] {};
  \definecolor{strokecolor}{rgb}{0.24,0.86,0.16};
  \definecolor{fillcolor}{rgb}{0.24,0.86,0.16};
  \node (a7_5) at (120.0bp,62.0bp) [draw=strokecolor,fill=fillcolor,ellipse] {};
  \definecolor{strokecolor}{rgb}{0.24,0.86,0.16};
  \definecolor{fillcolor}{rgb}{0.24,0.86,0.16};
  \node (a7_4) at (81.0bp,62.0bp) [draw=strokecolor,fill=fillcolor,ellipse] {};
  \definecolor{strokecolor}{rgb}{0.84,0.87,0.71};
  \definecolor{fillcolor}{rgb}{0.84,0.87,0.71};
  \node (a7_7) at (166.0bp,62.0bp) [draw=strokecolor,fill=fillcolor,ellipse] {};
  \definecolor{strokecolor}{rgb}{0.41,0.01,0.27};
  \definecolor{fillcolor}{rgb}{0.41,0.01,0.27};
  \node (a7_6) at (142.0bp,62.0bp) [draw=strokecolor,fill=fillcolor,ellipse] {};
  \definecolor{strokecolor}{rgb}{0.71,0.67,0.89};
  \definecolor{fillcolor}{rgb}{0.71,0.67,0.89};
  \node (a7_1) at (16.0bp,62.0bp) [draw=strokecolor,fill=fillcolor,ellipse] {};
  \definecolor{strokecolor}{rgb}{0.41,0.01,0.27};
  \definecolor{fillcolor}{rgb}{0.41,0.01,0.27};
  \node (a7_3) at (68.0bp,62.0bp) [draw=strokecolor,fill=fillcolor,ellipse] {};
  \definecolor{strokecolor}{rgb}{0.41,0.01,0.27};
  \definecolor{fillcolor}{rgb}{0.41,0.01,0.27};
  \node (a7_2) at (36.0bp,62.0bp) [draw=strokecolor,fill=fillcolor,ellipse] {};
  \definecolor{strokecolor}{rgb}{0.84,0.87,0.71};
  \definecolor{fillcolor}{rgb}{0.84,0.87,0.71};
  \node (a8_38) at (484.0bp,4.0bp) [draw=strokecolor,fill=fillcolor,ellipse] {};
  \definecolor{strokecolor}{rgb}{0.84,0.87,0.71};
  \definecolor{fillcolor}{rgb}{0.84,0.87,0.71};
  \node (a8_39) at (497.0bp,4.0bp) [draw=strokecolor,fill=fillcolor,ellipse] {};
  \definecolor{strokecolor}{rgb}{0.24,0.86,0.16};
  \definecolor{fillcolor}{rgb}{0.24,0.86,0.16};
  \node (a8_36) at (458.0bp,4.0bp) [draw=strokecolor,fill=fillcolor,ellipse] {};
  \definecolor{strokecolor}{rgb}{0.24,0.86,0.16};
  \definecolor{fillcolor}{rgb}{0.24,0.86,0.16};
  \node (a8_37) at (471.0bp,4.0bp) [draw=strokecolor,fill=fillcolor,ellipse] {};
  \definecolor{strokecolor}{rgb}{0.24,0.86,0.16};
  \definecolor{fillcolor}{rgb}{0.24,0.86,0.16};
  \node (a8_34) at (432.0bp,4.0bp) [draw=strokecolor,fill=fillcolor,ellipse] {};
  \definecolor{strokecolor}{rgb}{0.24,0.86,0.16};
  \definecolor{fillcolor}{rgb}{0.24,0.86,0.16};
  \node (a8_35) at (445.0bp,4.0bp) [draw=strokecolor,fill=fillcolor,ellipse] {};
  \definecolor{strokecolor}{rgb}{0.84,0.87,0.71};
  \definecolor{fillcolor}{rgb}{0.84,0.87,0.71};
  \node (a8_32) at (406.0bp,4.0bp) [draw=strokecolor,fill=fillcolor,ellipse] {};
  \definecolor{strokecolor}{rgb}{0.84,0.87,0.71};
  \definecolor{fillcolor}{rgb}{0.84,0.87,0.71};
  \node (a8_33) at (419.0bp,4.0bp) [draw=strokecolor,fill=fillcolor,ellipse] {};
  \definecolor{strokecolor}{rgb}{0.64,0.82,0.67};
  \definecolor{fillcolor}{rgb}{0.64,0.82,0.67};
  \node (a8_30) at (380.0bp,4.0bp) [draw=strokecolor,fill=fillcolor,ellipse] {};
  \definecolor{strokecolor}{rgb}{0.84,0.87,0.71};
  \definecolor{fillcolor}{rgb}{0.84,0.87,0.71};
  \node (a8_31) at (393.0bp,4.0bp) [draw=strokecolor,fill=fillcolor,ellipse] {};
  \definecolor{strokecolor}{rgb}{0.84,0.87,0.71};
  \definecolor{fillcolor}{rgb}{0.84,0.87,0.71};
  \node (a8_62) at (796.0bp,4.0bp) [draw=strokecolor,fill=fillcolor,ellipse] {};
  \definecolor{strokecolor}{rgb}{0.41,0.01,0.27};
  \definecolor{fillcolor}{rgb}{0.41,0.01,0.27};
  \node (a5_3) at (297.0bp,178.0bp) [draw=strokecolor,fill=fillcolor,ellipse] {};
  \definecolor{strokecolor}{rgb}{0.41,0.01,0.27};
  \definecolor{fillcolor}{rgb}{0.41,0.01,0.27};
  \node (a5_2) at (153.0bp,178.0bp) [draw=strokecolor,fill=fillcolor,ellipse] {};
  \definecolor{strokecolor}{rgb}{0.71,0.67,0.89};
  \definecolor{fillcolor}{rgb}{0.71,0.67,0.89};
  \node (a5_1) at (75.0bp,178.0bp) [draw=strokecolor,fill=fillcolor,ellipse] {};
  \definecolor{strokecolor}{rgb}{0.84,0.87,0.71};
  \definecolor{fillcolor}{rgb}{0.84,0.87,0.71};
  \node (a5_7) at (782.0bp,178.0bp) [draw=strokecolor,fill=fillcolor,ellipse] {};
  \definecolor{strokecolor}{rgb}{0.41,0.01,0.27};
  \definecolor{fillcolor}{rgb}{0.41,0.01,0.27};
  \node (a5_6) at (606.0bp,178.0bp) [draw=strokecolor,fill=fillcolor,ellipse] {};
  \definecolor{strokecolor}{rgb}{0.24,0.86,0.16};
  \definecolor{fillcolor}{rgb}{0.24,0.86,0.16};
  \node (a5_5) at (542.0bp,178.0bp) [draw=strokecolor,fill=fillcolor,ellipse] {};
  \definecolor{strokecolor}{rgb}{0.24,0.86,0.16};
  \definecolor{fillcolor}{rgb}{0.24,0.86,0.16};
  \node (a5_4) at (361.0bp,178.0bp) [draw=strokecolor,fill=fillcolor,ellipse] {};
  \definecolor{strokecolor}{rgb}{0.24,0.86,0.16};
  \definecolor{fillcolor}{rgb}{0.24,0.86,0.16};
  \node (a5_8) at (893.0bp,178.0bp) [draw=strokecolor,fill=fillcolor,ellipse] {};
  \definecolor{strokecolor}{rgb}{0.84,0.87,0.71};
  \definecolor{fillcolor}{rgb}{0.84,0.87,0.71};
  \node (a7_19) at (536.0bp,62.0bp) [draw=strokecolor,fill=fillcolor,ellipse] {};
  \definecolor{strokecolor}{rgb}{0.84,0.87,0.71};
  \definecolor{fillcolor}{rgb}{0.84,0.87,0.71};
  \node (a7_18) at (507.0bp,62.0bp) [draw=strokecolor,fill=fillcolor,ellipse] {};
  \definecolor{strokecolor}{rgb}{0.84,0.87,0.71};
  \definecolor{fillcolor}{rgb}{0.84,0.87,0.71};
  \node (a7_15) at (406.0bp,62.0bp) [draw=strokecolor,fill=fillcolor,ellipse] {};
  \definecolor{strokecolor}{rgb}{0.84,0.87,0.71};
  \definecolor{fillcolor}{rgb}{0.84,0.87,0.71};
  \node (a7_14) at (367.0bp,62.0bp) [draw=strokecolor,fill=fillcolor,ellipse] {};
  \definecolor{strokecolor}{rgb}{0.24,0.86,0.16};
  \definecolor{fillcolor}{rgb}{0.24,0.86,0.16};
  \node (a7_17) at (484.0bp,62.0bp) [draw=strokecolor,fill=fillcolor,ellipse] {};
  \definecolor{strokecolor}{rgb}{0.24,0.86,0.16};
  \definecolor{fillcolor}{rgb}{0.24,0.86,0.16};
  \node (a7_16) at (439.0bp,62.0bp) [draw=strokecolor,fill=fillcolor,ellipse] {};
  \definecolor{strokecolor}{rgb}{0.41,0.01,0.27};
  \definecolor{fillcolor}{rgb}{0.41,0.01,0.27};
  \node (a7_11) at (282.0bp,62.0bp) [draw=strokecolor,fill=fillcolor,ellipse] {};
  \definecolor{strokecolor}{rgb}{0.84,0.87,0.71};
  \definecolor{fillcolor}{rgb}{0.84,0.87,0.71};
  \node (a7_10) at (243.0bp,62.0bp) [draw=strokecolor,fill=fillcolor,ellipse] {};
  \definecolor{strokecolor}{rgb}{0.64,0.82,0.67};
  \definecolor{fillcolor}{rgb}{0.64,0.82,0.67};
  \node (a7_13) at (354.0bp,62.0bp) [draw=strokecolor,fill=fillcolor,ellipse] {};
  \definecolor{strokecolor}{rgb}{0.24,0.86,0.16};
  \definecolor{fillcolor}{rgb}{0.24,0.86,0.16};
  \node (a7_12) at (309.0bp,62.0bp) [draw=strokecolor,fill=fillcolor,ellipse] {};
  \definecolor{strokecolor}{rgb}{0.24,0.86,0.16};
  \definecolor{fillcolor}{rgb}{0.24,0.86,0.16};
  \node (a8_8) at (94.0bp,4.0bp) [draw=strokecolor,fill=fillcolor,ellipse] {};
  \definecolor{strokecolor}{rgb}{0.24,0.86,0.16};
  \definecolor{fillcolor}{rgb}{0.24,0.86,0.16};
  \node (a8_9) at (107.0bp,4.0bp) [draw=strokecolor,fill=fillcolor,ellipse] {};
  \definecolor{strokecolor}{rgb}{0.41,0.01,0.27};
  \definecolor{fillcolor}{rgb}{0.41,0.01,0.27};
  \node (a8_6) at (68.0bp,4.0bp) [draw=strokecolor,fill=fillcolor,ellipse] {};
  \definecolor{strokecolor}{rgb}{0.84,0.87,0.71};
  \definecolor{fillcolor}{rgb}{0.84,0.87,0.71};
  \node (a8_7) at (81.0bp,4.0bp) [draw=strokecolor,fill=fillcolor,ellipse] {};
  \definecolor{strokecolor}{rgb}{0.24,0.86,0.16};
  \definecolor{fillcolor}{rgb}{0.24,0.86,0.16};
  \node (a8_4) at (42.0bp,4.0bp) [draw=strokecolor,fill=fillcolor,ellipse] {};
  \definecolor{strokecolor}{rgb}{0.24,0.86,0.16};
  \definecolor{fillcolor}{rgb}{0.24,0.86,0.16};
  \node (a8_5) at (55.0bp,4.0bp) [draw=strokecolor,fill=fillcolor,ellipse] {};
  \definecolor{strokecolor}{rgb}{0.41,0.01,0.27};
  \definecolor{fillcolor}{rgb}{0.41,0.01,0.27};
  \node (a8_2) at (16.0bp,4.0bp) [draw=strokecolor,fill=fillcolor,ellipse] {};
  \definecolor{strokecolor}{rgb}{0.41,0.01,0.27};
  \definecolor{fillcolor}{rgb}{0.41,0.01,0.27};
  \node (a8_3) at (29.0bp,4.0bp) [draw=strokecolor,fill=fillcolor,ellipse] {};
  \definecolor{strokecolor}{rgb}{0.71,0.67,0.89};
  \definecolor{fillcolor}{rgb}{0.71,0.67,0.89};
  \node (a8_1) at (3.0bp,4.0bp) [draw=strokecolor,fill=fillcolor,ellipse] {};
  \definecolor{strokecolor}{rgb}{0.84,0.87,0.71};
  \definecolor{fillcolor}{rgb}{0.84,0.87,0.71};
  \node (a8_14) at (172.0bp,4.0bp) [draw=strokecolor,fill=fillcolor,ellipse] {};
  \definecolor{strokecolor}{rgb}{0.84,0.87,0.71};
  \definecolor{fillcolor}{rgb}{0.84,0.87,0.71};
  \node (a8_15) at (185.0bp,4.0bp) [draw=strokecolor,fill=fillcolor,ellipse] {};
  \definecolor{strokecolor}{rgb}{0.24,0.86,0.16};
  \definecolor{fillcolor}{rgb}{0.24,0.86,0.16};
  \node (a8_16) at (198.0bp,4.0bp) [draw=strokecolor,fill=fillcolor,ellipse] {};
  \definecolor{strokecolor}{rgb}{0.24,0.86,0.16};
  \definecolor{fillcolor}{rgb}{0.24,0.86,0.16};
  \node (a8_17) at (211.0bp,4.0bp) [draw=strokecolor,fill=fillcolor,ellipse] {};
  \definecolor{strokecolor}{rgb}{0.84,0.87,0.71};
  \definecolor{fillcolor}{rgb}{0.84,0.87,0.71};
  \node (a8_10) at (120.0bp,4.0bp) [draw=strokecolor,fill=fillcolor,ellipse] {};
  \definecolor{strokecolor}{rgb}{0.41,0.01,0.27};
  \definecolor{fillcolor}{rgb}{0.41,0.01,0.27};
  \node (a8_11) at (133.0bp,4.0bp) [draw=strokecolor,fill=fillcolor,ellipse] {};
  \definecolor{strokecolor}{rgb}{0.24,0.86,0.16};
  \definecolor{fillcolor}{rgb}{0.24,0.86,0.16};
  \node (a8_12) at (146.0bp,4.0bp) [draw=strokecolor,fill=fillcolor,ellipse] {};
  \definecolor{strokecolor}{rgb}{0.64,0.82,0.67};
  \definecolor{fillcolor}{rgb}{0.64,0.82,0.67};
  \node (a8_13) at (159.0bp,4.0bp) [draw=strokecolor,fill=fillcolor,ellipse] {};
  \definecolor{strokecolor}{rgb}{0.84,0.87,0.71};
  \definecolor{fillcolor}{rgb}{0.84,0.87,0.71};
  \node (a8_18) at (224.0bp,4.0bp) [draw=strokecolor,fill=fillcolor,ellipse] {};
  \definecolor{strokecolor}{rgb}{0.84,0.87,0.71};
  \definecolor{fillcolor}{rgb}{0.84,0.87,0.71};
  \node (a8_19) at (237.0bp,4.0bp) [draw=strokecolor,fill=fillcolor,ellipse] {};
  \definecolor{strokecolor}{rgb}{0.24,0.86,0.16};
  \definecolor{fillcolor}{rgb}{0.24,0.86,0.16};
  \node (a6_16) at (972.0bp,120.0bp) [draw=strokecolor,fill=fillcolor,ellipse] {};
  \definecolor{strokecolor}{rgb}{0.84,0.87,0.71};
  \definecolor{fillcolor}{rgb}{0.84,0.87,0.71};
  \node (a6_14) at (796.0bp,120.0bp) [draw=strokecolor,fill=fillcolor,ellipse] {};
  \definecolor{strokecolor}{rgb}{0.84,0.87,0.71};
  \definecolor{fillcolor}{rgb}{0.84,0.87,0.71};
  \node (a6_15) at (893.0bp,120.0bp) [draw=strokecolor,fill=fillcolor,ellipse] {};
  \definecolor{strokecolor}{rgb}{0.24,0.86,0.16};
  \definecolor{fillcolor}{rgb}{0.24,0.86,0.16};
  \node (a6_12) at (672.0bp,120.0bp) [draw=strokecolor,fill=fillcolor,ellipse] {};
  \definecolor{strokecolor}{rgb}{0.64,0.82,0.67};
  \definecolor{fillcolor}{rgb}{0.64,0.82,0.67};
  \node (a6_13) at (763.0bp,120.0bp) [draw=strokecolor,fill=fillcolor,ellipse] {};
  \definecolor{strokecolor}{rgb}{0.84,0.87,0.71};
  \definecolor{fillcolor}{rgb}{0.84,0.87,0.71};
  \node (a6_10) at (542.0bp,120.0bp) [draw=strokecolor,fill=fillcolor,ellipse] {};
  \definecolor{strokecolor}{rgb}{0.41,0.01,0.27};
  \definecolor{fillcolor}{rgb}{0.41,0.01,0.27};
  \node (a6_11) at (606.0bp,120.0bp) [draw=strokecolor,fill=fillcolor,ellipse] {};
  \definecolor{strokecolor}{rgb}{0.41,0.01,0.27};
  \definecolor{fillcolor}{rgb}{0.41,0.01,0.27};
  \node (a4_2) at (323.0bp,236.0bp) [draw=strokecolor,fill=fillcolor,ellipse] {};
  \definecolor{strokecolor}{rgb}{0.41,0.01,0.27};
  \definecolor{fillcolor}{rgb}{0.41,0.01,0.27};
  \node (a4_3) at (580.0bp,236.0bp) [draw=strokecolor,fill=fillcolor,ellipse] {};
  \definecolor{strokecolor}{rgb}{0.84,0.87,0.71};
  \definecolor{fillcolor}{rgb}{0.84,0.87,0.71};
  \node (a8_61) at (783.0bp,4.0bp) [draw=strokecolor,fill=fillcolor,ellipse] {};
  \definecolor{strokecolor}{rgb}{0.75,0.89,0.66};
  \definecolor{fillcolor}{rgb}{0.75,0.89,0.66};
  \node (a8_60) at (770.0bp,4.0bp) [draw=strokecolor,fill=fillcolor,ellipse] {};
  \definecolor{strokecolor}{rgb}{0.84,0.87,0.71};
  \definecolor{fillcolor}{rgb}{0.84,0.87,0.71};
  \node (a8_63) at (809.0bp,4.0bp) [draw=strokecolor,fill=fillcolor,ellipse] {};
  \definecolor{strokecolor}{rgb}{0.71,0.67,0.89};
  \definecolor{fillcolor}{rgb}{0.71,0.67,0.89};
  \node (a4_1) at (153.0bp,236.0bp) [draw=strokecolor,fill=fillcolor,ellipse] {};
  \definecolor{strokecolor}{rgb}{0.64,0.82,0.67};
  \definecolor{fillcolor}{rgb}{0.64,0.82,0.67};
  \node (a8_65) at (835.0bp,4.0bp) [draw=strokecolor,fill=fillcolor,ellipse] {};
  \definecolor{strokecolor}{rgb}{0.64,0.82,0.67};
  \definecolor{fillcolor}{rgb}{0.64,0.82,0.67};
  \node (a8_64) at (822.0bp,4.0bp) [draw=strokecolor,fill=fillcolor,ellipse] {};
  \definecolor{strokecolor}{rgb}{0.84,0.87,0.71};
  \definecolor{fillcolor}{rgb}{0.84,0.87,0.71};
  \node (a8_67) at (861.0bp,4.0bp) [draw=strokecolor,fill=fillcolor,ellipse] {};
  \definecolor{strokecolor}{rgb}{0.64,0.82,0.67};
  \definecolor{fillcolor}{rgb}{0.64,0.82,0.67};
  \node (a8_66) at (848.0bp,4.0bp) [draw=strokecolor,fill=fillcolor,ellipse] {};
  \definecolor{strokecolor}{rgb}{0.84,0.87,0.71};
  \definecolor{fillcolor}{rgb}{0.84,0.87,0.71};
  \node (a8_69) at (887.0bp,4.0bp) [draw=strokecolor,fill=fillcolor,ellipse] {};
  \definecolor{strokecolor}{rgb}{0.64,0.82,0.67};
  \definecolor{fillcolor}{rgb}{0.64,0.82,0.67};
  \node (a8_68) at (874.0bp,4.0bp) [draw=strokecolor,fill=fillcolor,ellipse] {};
  \definecolor{strokecolor}{rgb}{0.24,0.86,0.16};
  \definecolor{fillcolor}{rgb}{0.24,0.86,0.16};
  \node (a4_4) at (782.0bp,236.0bp) [draw=strokecolor,fill=fillcolor,ellipse] {};
  \definecolor{strokecolor}{rgb}{0.84,0.87,0.71};
  \definecolor{fillcolor}{rgb}{0.84,0.87,0.71};
  \node (a7_33) at (913.0bp,62.0bp) [draw=strokecolor,fill=fillcolor,ellipse] {};
  \definecolor{strokecolor}{rgb}{0.84,0.87,0.71};
  \definecolor{fillcolor}{rgb}{0.84,0.87,0.71};
  \node (a7_32) at (893.0bp,62.0bp) [draw=strokecolor,fill=fillcolor,ellipse] {};
  \definecolor{strokecolor}{rgb}{0.84,0.87,0.71};
  \definecolor{fillcolor}{rgb}{0.84,0.87,0.71};
  \node (a7_31) at (874.0bp,62.0bp) [draw=strokecolor,fill=fillcolor,ellipse] {};
  \definecolor{strokecolor}{rgb}{0.64,0.82,0.67};
  \definecolor{fillcolor}{rgb}{0.64,0.82,0.67};
  \node (a7_30) at (829.0bp,62.0bp) [draw=strokecolor,fill=fillcolor,ellipse] {};
  \definecolor{strokecolor}{rgb}{0.24,0.86,0.16};
  \definecolor{fillcolor}{rgb}{0.24,0.86,0.16};
  \node (a7_36) at (1023.0bp,62.0bp) [draw=strokecolor,fill=fillcolor,ellipse] {};
  \definecolor{strokecolor}{rgb}{0.24,0.86,0.16};
  \definecolor{fillcolor}{rgb}{0.24,0.86,0.16};
  \node (a7_35) at (984.0bp,62.0bp) [draw=strokecolor,fill=fillcolor,ellipse] {};
  \definecolor{strokecolor}{rgb}{0.24,0.86,0.16};
  \definecolor{fillcolor}{rgb}{0.24,0.86,0.16};
  \node (a7_34) at (958.0bp,62.0bp) [draw=strokecolor,fill=fillcolor,ellipse] {};
  \draw [] (a7_11) ..controls (280.69bp,48.773bp) and (277.27bp,16.821bp)  .. (a8_22);
  \definecolor{strokecol}{rgb}{0.0,0.0,0.0};
  \pgfsetstrokecolor{strokecol}
  \draw (283.5bp,33.0bp) node {1};
  \draw [] (a7_16) ..controls (437.52bp,54.312bp) and (435.91bp,46.618bp)  .. (435.0bp,40.0bp) .. controls (433.34bp,27.917bp) and (432.44bp,13.351bp)  .. (a8_34);
  \draw (438.5bp,33.0bp) node {1};
  \draw [] (a6_4) ..controls (169.91bp,107.47bp) and (180.92bp,75.02bp)  .. (a7_8);
  \draw (181.5bp,91.0bp) node {1};
  \draw [] (a7_36) ..controls (1024.5bp,48.773bp) and (1028.5bp,16.821bp)  .. (a8_80);
  \draw (1030.5bp,33.0bp) node {1};
  \draw [] (a7_35) ..controls (982.69bp,48.773bp) and (979.27bp,16.821bp)  .. (a8_76);
  \draw (984.5bp,33.0bp) node {1};
  \draw [] (a7_17) ..controls (481.58bp,54.707bp) and (478.72bp,46.859bp)  .. (477.0bp,40.0bp) .. controls (474.0bp,28.04bp) and (472.03bp,13.405bp)  .. (a8_37);
  \draw (480.5bp,33.0bp) node {1};
  \draw [] (a7_34) ..controls (954.16bp,54.845bp) and (949.62bp,47.104bp)  .. (947.0bp,40.0bp) .. controls (942.68bp,28.305bp) and (940.23bp,13.521bp)  .. (a8_73);
  \draw (950.5bp,33.0bp) node {1};
  \draw [] (a7_3) ..controls (65.584bp,54.707bp) and (62.72bp,46.859bp)  .. (61.0bp,40.0bp) .. controls (58.0bp,28.04bp) and (56.025bp,13.405bp)  .. (a8_5);
  \draw (64.5bp,33.0bp) node {2};
  \draw [] (a7_25) ..controls (679.0bp,48.773bp) and (679.0bp,16.821bp)  .. (a8_53);
  \draw (682.5bp,33.0bp) node {2};
  \draw [] (a7_20) ..controls (549.0bp,48.773bp) and (549.0bp,16.821bp)  .. (a8_43);
  \draw (552.5bp,33.0bp) node {2};
  \draw [] (a7_10) ..controls (244.53bp,48.773bp) and (248.52bp,16.821bp)  .. (a8_20);
  \draw (251.5bp,33.0bp) node {2};
  \draw [] (a7_13) ..controls (354.0bp,48.773bp) and (354.0bp,16.821bp)  .. (a8_28);
  \draw (357.5bp,33.0bp) node {2};
  \draw [] (a6_5) ..controls (243.0bp,106.77bp) and (243.0bp,74.821bp)  .. (a7_10);
  \draw (246.5bp,91.0bp) node {2};
  \draw [] (a7_11) ..controls (278.59bp,54.777bp) and (274.53bp,46.985bp)  .. (272.0bp,40.0bp) .. controls (267.72bp,28.204bp) and (264.63bp,13.477bp)  .. (a8_21);
  \draw (275.5bp,33.0bp) node {1};
  \draw [] (a4_4) ..controls (800.1bp,225.87bp) and (875.25bp,187.96bp)  .. (a5_8);
  \draw (853.5bp,207.0bp) node {1};
  \draw [] (a7_5) ..controls (117.58bp,54.707bp) and (114.72bp,46.859bp)  .. (113.0bp,40.0bp) .. controls (110.0bp,28.04bp) and (108.03bp,13.405bp)  .. (a8_9);
  \draw (116.5bp,33.0bp) node {1};
  \draw [] (a6_16) ..controls (974.62bp,106.77bp) and (981.47bp,74.821bp)  .. (a7_35);
  \draw (982.5bp,91.0bp) node {1};
  \draw [] (a7_4) ..controls (83.416bp,54.707bp) and (86.28bp,46.859bp)  .. (88.0bp,40.0bp) .. controls (91.0bp,28.04bp) and (92.975bp,13.405bp)  .. (a8_8);
  \draw (94.5bp,33.0bp) node {1};
  \draw [] (a7_18) ..controls (504.86bp,54.698bp) and (502.36bp,46.843bp)  .. (501.0bp,40.0bp) .. controls (498.62bp,28.002bp) and (497.52bp,13.389bp)  .. (a8_39);
  \draw (504.5bp,33.0bp) node {1};
  \draw [] (a7_35) ..controls (985.53bp,48.773bp) and (989.52bp,16.821bp)  .. (a8_77);
  \draw (991.5bp,33.0bp) node {1};
  \draw [] (a7_8) ..controls (185.0bp,48.773bp) and (185.0bp,16.821bp)  .. (a8_15);
  \draw (188.5bp,33.0bp) node {2};
  \draw [] (a6_10) ..controls (540.69bp,106.77bp) and (537.27bp,74.821bp)  .. (a7_19);
  \draw (542.5bp,91.0bp) node {1};
  \draw [] (a6_14) ..controls (796.0bp,106.77bp) and (796.0bp,74.821bp)  .. (a7_29);
  \draw (799.5bp,91.0bp) node {1};
  \draw [] (a4_4) ..controls (782.0bp,222.77bp) and (782.0bp,190.82bp)  .. (a5_7);
  \draw (785.5bp,207.0bp) node {2};
  \draw [] (a7_32) ..controls (891.69bp,48.773bp) and (888.27bp,16.821bp)  .. (a8_69);
  \draw (893.5bp,33.0bp) node {1};
  \draw [] (a7_31) ..controls (871.58bp,54.707bp) and (868.72bp,46.859bp)  .. (867.0bp,40.0bp) .. controls (864.0bp,28.04bp) and (862.03bp,13.405bp)  .. (a8_67);
  \draw (870.5bp,33.0bp) node {1};
  \draw [] (a6_9) ..controls (502.26bp,107.47bp) and (488.94bp,75.02bp)  .. (a7_17);
  \draw (501.5bp,91.0bp) node {1};
  \draw [] (a7_19) ..controls (533.58bp,54.707bp) and (530.72bp,46.859bp)  .. (529.0bp,40.0bp) .. controls (526.0bp,28.04bp) and (524.03bp,13.405bp)  .. (a8_41);
  \draw (532.5bp,33.0bp) node {2};
  \draw [] (a7_16) ..controls (440.31bp,48.773bp) and (443.73bp,16.821bp)  .. (a8_35);
  \draw (445.5bp,33.0bp) node {1};
  \draw [] (a7_15) ..controls (408.42bp,54.707bp) and (411.28bp,46.859bp)  .. (413.0bp,40.0bp) .. controls (416.0bp,28.04bp) and (417.97bp,13.405bp)  .. (a8_33);
  \draw (418.5bp,33.0bp) node {1};
  \draw [] (a3_2) ..controls (580.0bp,280.77bp) and (580.0bp,248.82bp)  .. (a4_3);
  \draw (583.5bp,265.0bp) node {1};
  \draw [] (a7_6) ..controls (142.87bp,48.773bp) and (145.16bp,16.821bp)  .. (a8_12);
  \draw (148.5bp,33.0bp) node {2};
  \draw [] (a6_7) ..controls (362.31bp,106.77bp) and (365.73bp,74.821bp)  .. (a7_14);
  \draw (368.5bp,91.0bp) node {1};
  \draw [] (a7_30) ..controls (827.52bp,54.312bp) and (825.91bp,46.618bp)  .. (825.0bp,40.0bp) .. controls (823.34bp,27.917bp) and (822.44bp,13.351bp)  .. (a8_64);
  \draw (828.5bp,33.0bp) node {1};
  \draw [] (a6_12) ..controls (678.85bp,107.38bp) and (698.27bp,74.423bp)  .. (a7_26);
  \draw (695.5bp,91.0bp) node {1};
  \draw [] (a6_13) ..controls (759.09bp,107.47bp) and (748.08bp,75.02bp)  .. (a7_27);
  \draw (758.5bp,91.0bp) node {1};
  \draw [] (a6_15) ..controls (897.12bp,107.47bp) and (908.71bp,75.02bp)  .. (a7_33);
  \draw (908.5bp,91.0bp) node {1};
  \draw [] (a5_5) ..controls (534.74bp,165.38bp) and (514.14bp,132.42bp)  .. (a6_9);
  \draw (531.5bp,149.0bp) node {1};
  \draw [] (a5_6) ..controls (618.22bp,166.63bp) and (659.55bp,131.56bp)  .. (a6_12);
  \draw (649.5bp,149.0bp) node {2};
  \draw [] (a5_2) ..controls (155.84bp,164.77bp) and (163.26bp,132.82bp)  .. (a6_4);
  \draw (165.5bp,149.0bp) node {2};
  \draw [] (a5_8) ..controls (893.0bp,164.77bp) and (893.0bp,132.82bp)  .. (a6_15);
  \draw (896.5bp,149.0bp) node {2};
  \draw [] (a6_5) ..controls (239.09bp,107.47bp) and (228.08bp,75.02bp)  .. (a7_9);
  \draw (239.5bp,91.0bp) node {1};
  \draw [] (a7_13) ..controls (351.58bp,54.707bp) and (348.72bp,46.859bp)  .. (347.0bp,40.0bp) .. controls (344.0bp,28.04bp) and (342.03bp,13.405bp)  .. (a8_27);
  \draw (350.5bp,33.0bp) node {1};
  \draw [] (a7_12) ..controls (310.31bp,48.773bp) and (313.73bp,16.821bp)  .. (a8_25);
  \draw (316.5bp,33.0bp) node {1};
  \draw [] (a7_10) ..controls (241.69bp,48.773bp) and (238.27bp,16.821bp)  .. (a8_19);
  \draw (244.5bp,33.0bp) node {1};
  \draw [] (a7_6) ..controls (140.25bp,54.669bp) and (138.19bp,46.792bp)  .. (137.0bp,40.0bp) .. controls (134.89bp,27.952bp) and (133.63bp,13.367bp)  .. (a8_11);
  \draw (140.5bp,33.0bp) node {1};
  \draw [] (a7_7) ..controls (164.52bp,54.312bp) and (162.91bp,46.618bp)  .. (162.0bp,40.0bp) .. controls (160.34bp,27.917bp) and (159.44bp,13.351bp)  .. (a8_13);
  \draw (165.5bp,33.0bp) node {2};
  \draw [] (a7_33) ..controls (915.42bp,54.707bp) and (918.28bp,46.859bp)  .. (920.0bp,40.0bp) .. controls (923.0bp,28.04bp) and (924.97bp,13.405bp)  .. (a8_72);
  \draw (925.5bp,33.0bp) node {2};
  \draw [] (a7_9) ..controls (221.58bp,54.707bp) and (218.72bp,46.859bp)  .. (217.0bp,40.0bp) .. controls (214.0bp,28.04bp) and (212.03bp,13.405bp)  .. (a8_17);
  \draw (220.5bp,33.0bp) node {1};
  \draw [] (a7_33) ..controls (913.0bp,48.773bp) and (913.0bp,16.821bp)  .. (a8_71);
  \draw (916.5bp,33.0bp) node {1};
  \draw [] (a6_11) ..controls (610.33bp,107.47bp) and (622.49bp,75.02bp)  .. (a7_23);
  \draw (621.5bp,91.0bp) node {1};
  \draw [] (a6_1) ..controls (36.0bp,106.77bp) and (36.0bp,74.821bp)  .. (a7_2);
  \draw (39.5bp,91.0bp) node {2};
  \draw [] (a7_36) ..controls (1026.8bp,54.825bp) and (1031.3bp,47.068bp)  .. (1034.0bp,40.0bp) .. controls (1038.5bp,28.288bp) and (1041.5bp,13.514bp)  .. (a8_81);
  \draw (1041.5bp,33.0bp) node {1};
  \draw [] (a3_2) ..controls (605.49bp,285.93bp) and (755.64bp,244.31bp)  .. (a4_4);
  \draw (708.5bp,265.0bp) node {2};
  \draw [] (a7_22) ..controls (604.91bp,48.773bp) and (602.06bp,16.821bp)  .. (a8_47);
  \draw (607.5bp,33.0bp) node {1};
  \draw [] (a7_24) ..controls (663.58bp,54.707bp) and (660.72bp,46.859bp)  .. (659.0bp,40.0bp) .. controls (656.0bp,28.04bp) and (654.03bp,13.405bp)  .. (a8_51);
  \draw (662.5bp,33.0bp) node {2};
  \draw [] (a7_23) ..controls (629.42bp,54.707bp) and (632.28bp,46.859bp)  .. (634.0bp,40.0bp) .. controls (637.0bp,28.04bp) and (638.97bp,13.405bp)  .. (a8_50);
  \draw (639.5bp,33.0bp) node {2};
  \draw [] (a7_3) ..controls (68.0bp,48.773bp) and (68.0bp,16.821bp)  .. (a8_6);
  \draw (71.5bp,33.0bp) node {1};
  \draw [] (a7_29) ..controls (793.58bp,54.707bp) and (790.72bp,46.859bp)  .. (789.0bp,40.0bp) .. controls (786.0bp,28.04bp) and (784.03bp,13.405bp)  .. (a8_61);
  \draw (792.5bp,33.0bp) node {1};
  \draw [] (a7_1) ..controls (13.584bp,54.707bp) and (10.72bp,46.859bp)  .. (9.0bp,40.0bp) .. controls (6.0001bp,28.04bp) and (4.0254bp,13.405bp)  .. (a8_1);
  \draw (12.5bp,33.0bp) node {1};
  \draw [] (a7_2) ..controls (34.519bp,54.312bp) and (32.911bp,46.618bp)  .. (32.0bp,40.0bp) .. controls (30.336bp,27.917bp) and (29.441bp,13.351bp)  .. (a8_3);
  \draw (35.5bp,33.0bp) node {1};
  \draw [] (a7_28) ..controls (764.53bp,48.773bp) and (768.52bp,16.821bp)  .. (a8_60);
  \draw (770.5bp,33.0bp) node {2};
  \draw [] (a7_16) ..controls (442.41bp,54.777bp) and (446.47bp,46.985bp)  .. (449.0bp,40.0bp) .. controls (453.28bp,28.204bp) and (456.37bp,13.477bp)  .. (a8_36);
  \draw (456.5bp,33.0bp) node {1};
  \draw [] (a6_13) ..controls (763.0bp,106.77bp) and (763.0bp,74.821bp)  .. (a7_28);
  \draw (766.5bp,91.0bp) node {2};
  \draw [] (a6_16) ..controls (968.94bp,106.77bp) and (960.96bp,74.821bp)  .. (a7_34);
  \draw (969.5bp,91.0bp) node {1};
  \draw [] (a6_6) ..controls (299.62bp,106.77bp) and (306.47bp,74.821bp)  .. (a7_12);
  \draw (308.5bp,91.0bp) node {2};
  \draw [] (a7_22) ..controls (607.75bp,48.773bp) and (612.31bp,16.821bp)  .. (a8_48);
  \draw (613.5bp,33.0bp) node {2};
  \draw [] (a7_20) ..controls (551.42bp,54.707bp) and (554.28bp,46.859bp)  .. (556.0bp,40.0bp) .. controls (559.0bp,28.04bp) and (560.97bp,13.405bp)  .. (a8_44);
  \draw (561.5bp,33.0bp) node {1};
  \draw [] (a6_2) ..controls (76.31bp,106.77bp) and (79.733bp,74.821bp)  .. (a7_4);
  \draw (82.5bp,91.0bp) node {2};
  \draw [] (a7_34) ..controls (959.53bp,48.773bp) and (963.52bp,16.821bp)  .. (a8_75);
  \draw (965.5bp,33.0bp) node {1};
  \draw [] (a7_29) ..controls (796.0bp,48.773bp) and (796.0bp,16.821bp)  .. (a8_62);
  \draw (799.5bp,33.0bp) node {1};
  \draw [] (a5_7) ..controls (778.09bp,165.47bp) and (767.08bp,133.02bp)  .. (a6_13);
  \draw (777.5bp,149.0bp) node {2};
  \draw [] (a3_1) ..controls (299.33bp,285.2bp) and (175.88bp,244.54bp)  .. (a4_1);
  \draw (262.5bp,265.0bp) node {1};
  \draw [] (a5_2) ..controls (150.6bp,164.77bp) and (144.32bp,132.82bp)  .. (a6_3);
  \draw (152.5bp,149.0bp) node {1};
  \draw [] (a7_5) ..controls (120.0bp,48.773bp) and (120.0bp,16.821bp)  .. (a8_10);
  \draw (123.5bp,33.0bp) node {2};
  \draw [] (a2_1) ..controls (459.94bp,343.23bp) and (345.38bp,302.88bp)  .. (a3_1);
  \draw (424.5bp,323.0bp) node {1};
  \draw [] (a7_32) ..controls (894.53bp,48.773bp) and (898.52bp,16.821bp)  .. (a8_70);
  \draw (900.5bp,33.0bp) node {2};
  \draw [] (a6_8) ..controls (406.0bp,106.77bp) and (406.0bp,74.821bp)  .. (a7_15);
  \draw (409.5bp,91.0bp) node {2};
  \draw [] (a7_21) ..controls (588.0bp,48.773bp) and (588.0bp,16.821bp)  .. (a8_46);
  \draw (591.5bp,33.0bp) node {2};
  \draw [] (a7_28) ..controls (761.69bp,48.773bp) and (758.27bp,16.821bp)  .. (a8_59);
  \draw (763.5bp,33.0bp) node {1};
  \draw [] (a6_8) ..controls (412.85bp,107.38bp) and (432.27bp,74.423bp)  .. (a7_16);
  \draw (429.5bp,91.0bp) node {1};
  \draw [] (a5_6) ..controls (606.0bp,164.77bp) and (606.0bp,132.82bp)  .. (a6_11);
  \draw (609.5bp,149.0bp) node {1};
  \draw [] (a7_2) ..controls (37.31bp,48.773bp) and (40.733bp,16.821bp)  .. (a8_4);
  \draw (43.5bp,33.0bp) node {2};
  \draw [] (a5_3) ..controls (285.72bp,165.3bp) and (253.43bp,131.82bp)  .. (a6_5);
  \draw (280.5bp,149.0bp) node {2};
  \draw [] (a5_4) ..controls (361.0bp,164.77bp) and (361.0bp,132.82bp)  .. (a6_7);
  \draw (364.5bp,149.0bp) node {2};
  \draw [] (a3_1) ..controls (323.0bp,280.77bp) and (323.0bp,248.82bp)  .. (a4_2);
  \draw (326.5bp,265.0bp) node {2};
  \draw [] (a7_30) ..controls (830.31bp,48.773bp) and (833.73bp,16.821bp)  .. (a8_65);
  \draw (835.5bp,33.0bp) node {1};
  \draw [] (a5_4) ..controls (370.4bp,165.3bp) and (397.3bp,131.82bp)  .. (a6_8);
  \draw (392.5bp,149.0bp) node {1};
  \draw [] (a7_21) ..controls (585.58bp,54.707bp) and (582.72bp,46.859bp)  .. (581.0bp,40.0bp) .. controls (578.0bp,28.04bp) and (576.03bp,13.405bp)  .. (a8_45);
  \draw (584.5bp,33.0bp) node {1};
  \draw [] (a2_1) ..controls (497.86bp,341.94bp) and (563.18bp,304.61bp)  .. (a3_2);
  \draw (545.5bp,323.0bp) node {2};
  \draw [] (a6_14) ..controls (802.85bp,107.38bp) and (822.27bp,74.423bp)  .. (a7_30);
  \draw (819.5bp,91.0bp) node {2};
  \draw [] (a5_5) ..controls (542.0bp,164.77bp) and (542.0bp,132.82bp)  .. (a6_10);
  \draw (545.5bp,149.0bp) node {2};
  \draw [] (a4_3) ..controls (572.12bp,223.38bp) and (549.75bp,190.42bp)  .. (a5_5);
  \draw (568.5bp,207.0bp) node {2};
  \draw [] (a7_12) ..controls (312.84bp,54.845bp) and (317.38bp,47.104bp)  .. (320.0bp,40.0bp) .. controls (324.32bp,28.305bp) and (326.77bp,13.521bp)  .. (a8_26);
  \draw (327.5bp,33.0bp) node {1};
  \draw [] (a7_17) ..controls (484.0bp,48.773bp) and (484.0bp,16.821bp)  .. (a8_38);
  \draw (487.5bp,33.0bp) node {2};
  \draw [] (a6_1) ..controls (31.881bp,107.47bp) and (20.293bp,75.02bp)  .. (a7_1);
  \draw (32.5bp,91.0bp) node {1};
  \draw [] (a6_12) ..controls (670.69bp,106.77bp) and (667.27bp,74.821bp)  .. (a7_24);
  \draw (672.5bp,91.0bp) node {1};
  \draw [] (a7_27) ..controls (741.58bp,54.707bp) and (738.72bp,46.859bp)  .. (737.0bp,40.0bp) .. controls (734.0bp,28.04bp) and (732.03bp,13.405bp)  .. (a8_57);
  \draw (740.5bp,33.0bp) node {1};
  \draw [] (a7_9) ..controls (224.0bp,48.773bp) and (224.0bp,16.821bp)  .. (a8_18);
  \draw (227.5bp,33.0bp) node {2};
  \draw [] (a7_14) ..controls (367.0bp,48.773bp) and (367.0bp,16.821bp)  .. (a8_29);
  \draw (370.5bp,33.0bp) node {1};
  \draw [] (a7_27) ..controls (744.0bp,48.773bp) and (744.0bp,16.821bp)  .. (a8_58);
  \draw (747.5bp,33.0bp) node {2};
  \draw [] (a6_11) ..controls (606.0bp,106.77bp) and (606.0bp,74.821bp)  .. (a7_22);
  \draw (609.5bp,91.0bp) node {1};
  \draw [] (a7_4) ..controls (81.0bp,48.773bp) and (81.0bp,16.821bp)  .. (a8_7);
  \draw (84.5bp,33.0bp) node {2};
  \draw [] (a7_18) ..controls (507.66bp,48.773bp) and (509.37bp,16.821bp)  .. (a8_40);
  \draw (511.5bp,33.0bp) node {2};
  \draw [] (a6_4) ..controls (166.0bp,106.77bp) and (166.0bp,74.821bp)  .. (a7_7);
  \draw (169.5bp,91.0bp) node {2};
  \draw [] (a4_1) ..controls (139.46bp,225.28bp) and (89.067bp,189.1bp)  .. (a5_1);
  \draw (127.5bp,207.0bp) node {1};
  \draw [] (a4_2) ..controls (317.61bp,223.38bp) and (302.3bp,190.42bp)  .. (a5_3);
  \draw (317.5bp,207.0bp) node {1};
  \draw [] (a6_3) ..controls (137.47bp,107.47bp) and (124.72bp,75.02bp)  .. (a7_5);
  \draw (137.5bp,91.0bp) node {2};
  \draw [] (a6_7) ..controls (359.52bp,112.31bp) and (357.91bp,104.62bp)  .. (357.0bp,98.0bp) .. controls (355.34bp,85.917bp) and (354.44bp,71.351bp)  .. (a7_13);
  \draw (360.5bp,91.0bp) node {2};
  \draw [] (a7_11) ..controls (283.53bp,48.773bp) and (287.52bp,16.821bp)  .. (a8_23);
  \draw (290.5bp,33.0bp) node {1};
  \draw [] (a6_9) ..controls (507.0bp,106.77bp) and (507.0bp,74.821bp)  .. (a7_18);
  \draw (510.5bp,91.0bp) node {2};
  \draw [] (a7_26) ..controls (705.0bp,48.773bp) and (705.0bp,16.821bp)  .. (a8_55);
  \draw (708.5bp,33.0bp) node {2};
  \draw [] (a6_11) ..controls (603.84bp,112.63bp) and (601.18bp,104.73bp)  .. (599.0bp,98.0bp) .. controls (595.08bp,85.881bp) and (590.57bp,71.335bp)  .. (a7_21);
  \draw (602.5bp,91.0bp) node {1};
  \draw [] (a6_6) ..controls (293.72bp,106.77bp) and (285.17bp,74.821bp)  .. (a7_11);
  \draw (295.5bp,91.0bp) node {1};
  \draw [] (a6_2) ..controls (73.519bp,112.31bp) and (71.911bp,104.62bp)  .. (71.0bp,98.0bp) .. controls (69.336bp,85.917bp) and (68.441bp,71.351bp)  .. (a7_3);
  \draw (74.5bp,91.0bp) node {1};
  \draw [] (a7_34) ..controls (956.69bp,48.773bp) and (953.27bp,16.821bp)  .. (a8_74);
  \draw (958.5bp,33.0bp) node {1};
  \draw [] (a4_1) ..controls (153.0bp,222.77bp) and (153.0bp,190.82bp)  .. (a5_2);
  \draw (156.5bp,207.0bp) node {2};
  \draw [] (a7_29) ..controls (798.42bp,54.707bp) and (801.28bp,46.859bp)  .. (803.0bp,40.0bp) .. controls (806.0bp,28.04bp) and (807.97bp,13.405bp)  .. (a8_63);
  \draw (808.5bp,33.0bp) node {1};
  \draw [] (a7_35) ..controls (987.79bp,54.825bp) and (992.28bp,47.068bp)  .. (995.0bp,40.0bp) .. controls (999.5bp,28.288bp) and (1002.5bp,13.514bp)  .. (a8_78);
  \draw (1002.5bp,33.0bp) node {1};
  \draw [] (a7_31) ..controls (874.0bp,48.773bp) and (874.0bp,16.821bp)  .. (a8_68);
  \draw (877.5bp,33.0bp) node {2};
  \draw [] (a7_26) ..controls (707.42bp,54.707bp) and (710.28bp,46.859bp)  .. (712.0bp,40.0bp) .. controls (715.0bp,28.04bp) and (716.97bp,13.405bp)  .. (a8_56);
  \draw (717.5bp,33.0bp) node {1};
  \draw [] (a6_16) ..controls (982.66bp,107.3bp) and (1013.1bp,73.821bp)  .. (a7_36);
  \draw (1006.5bp,91.0bp) node {1};
  \draw [] (a7_15) ..controls (403.58bp,54.707bp) and (400.72bp,46.859bp)  .. (399.0bp,40.0bp) .. controls (396.0bp,28.04bp) and (394.03bp,13.405bp)  .. (a8_31);
  \draw (402.5bp,33.0bp) node {1};
  \draw [] (a1_1) ..controls (482.0bp,396.77bp) and (482.0bp,364.82bp)  .. (a2_1);
  \draw (485.5bp,381.0bp) node {~~3};
  \draw [] (a7_14) ..controls (369.42bp,54.707bp) and (372.28bp,46.859bp)  .. (374.0bp,40.0bp) .. controls (377.0bp,28.04bp) and (378.97bp,13.405bp)  .. (a8_30);
  \draw (380.5bp,33.0bp) node {2};
  \draw [] (a4_2) ..controls (330.88bp,223.38bp) and (353.25bp,190.42bp)  .. (a5_4);
  \draw (350.5bp,207.0bp) node {2};
  \draw [] (a7_1) ..controls (16.0bp,48.773bp) and (16.0bp,16.821bp)  .. (a8_2);
  \draw (19.5bp,33.0bp) node {2};
  \draw [] (a7_7) ..controls (167.31bp,48.773bp) and (170.73bp,16.821bp)  .. (a8_14);
  \draw (173.5bp,33.0bp) node {1};
  \draw [] (a7_23) ..controls (627.0bp,48.773bp) and (627.0bp,16.821bp)  .. (a8_49);
  \draw (630.5bp,33.0bp) node {1};
  \draw [] (a7_8) ..controls (187.42bp,54.707bp) and (190.28bp,46.859bp)  .. (192.0bp,40.0bp) .. controls (195.0bp,28.04bp) and (196.97bp,13.405bp)  .. (a8_16);
  \draw (198.5bp,33.0bp) node {1};
  \draw [] (a5_3) ..controls (297.0bp,164.77bp) and (297.0bp,132.82bp)  .. (a6_6);
  \draw (300.5bp,149.0bp) node {1};
  \draw [] (a7_30) ..controls (832.41bp,54.777bp) and (836.47bp,46.985bp)  .. (839.0bp,40.0bp) .. controls (843.28bp,28.204bp) and (846.37bp,13.477bp)  .. (a8_66);
  \draw (846.5bp,33.0bp) node {1};
  \draw [] (a6_15) ..controls (889.09bp,107.47bp) and (878.08bp,75.02bp)  .. (a7_31);
  \draw (888.5bp,91.0bp) node {1};
  \draw [] (a7_19) ..controls (536.0bp,48.773bp) and (536.0bp,16.821bp)  .. (a8_42);
  \draw (539.5bp,33.0bp) node {1};
  \draw [] (a0_1) ..controls (482.0bp,454.77bp) and (482.0bp,422.82bp)  .. (a1_1);
  \draw (485.5bp,439.0bp) node {~~3};
  \draw [] (a7_15) ..controls (406.0bp,48.773bp) and (406.0bp,16.821bp)  .. (a8_32);
  \draw (409.5bp,33.0bp) node {1};
  \draw [] (a4_3) ..controls (585.39bp,223.38bp) and (600.7bp,190.42bp)  .. (a5_6);
  \draw (599.5bp,207.0bp) node {1};
  \draw [] (a5_1) ..controls (66.909bp,165.38bp) and (43.955bp,132.42bp)  .. (a6_1);
  \draw (64.5bp,149.0bp) node {1};
  \draw [] (a7_36) ..controls (1021.7bp,48.773bp) and (1018.3bp,16.821bp)  .. (a8_79);
  \draw (1023.5bp,33.0bp) node {1};
  \draw [] (a5_7) ..controls (785.06bp,164.77bp) and (793.04bp,132.82bp)  .. (a6_14);
  \draw (793.5bp,149.0bp) node {1};
  \draw [] (a6_12) ..controls (673.53bp,106.77bp) and (677.52bp,74.821bp)  .. (a7_25);
  \draw (679.5bp,91.0bp) node {1};
  \draw [] (a6_10) ..controls (543.53bp,106.77bp) and (547.52bp,74.821bp)  .. (a7_20);
  \draw (549.5bp,91.0bp) node {2};
  \draw [] (a6_15) ..controls (893.0bp,106.77bp) and (893.0bp,74.821bp)  .. (a7_32);
  \draw (896.5bp,91.0bp) node {1};
  \draw [] (a7_25) ..controls (681.42bp,54.707bp) and (684.28bp,46.859bp)  .. (686.0bp,40.0bp) .. controls (689.0bp,28.04bp) and (690.97bp,13.405bp)  .. (a8_54);
  \draw (691.5bp,33.0bp) node {1};
  \draw [] (a5_8) ..controls (906.71bp,167.28bp) and (957.75bp,131.1bp)  .. (a6_16);
  \draw (944.5bp,149.0bp) node {1};
  \draw [] (a5_1) ..controls (75.0bp,164.77bp) and (75.0bp,132.82bp)  .. (a6_2);
  \draw (78.5bp,149.0bp) node {2};
  \draw [] (a7_12) ..controls (307.52bp,54.312bp) and (305.91bp,46.618bp)  .. (305.0bp,40.0bp) .. controls (303.34bp,27.917bp) and (302.44bp,13.351bp)  .. (a8_24);
  \draw (308.5bp,33.0bp) node {1};
  \draw [] (a6_3) ..controls (142.0bp,106.77bp) and (142.0bp,74.821bp)  .. (a7_6);
  \draw (145.5bp,91.0bp) node {1};
  \draw [] (a7_24) ..controls (666.0bp,48.773bp) and (666.0bp,16.821bp)  .. (a8_52);
  \draw (669.5bp,33.0bp) node {1};
\end{tikzpicture}
}
\end{center}
\caption{The labeled orbit tree of the Mealy automaton of Figure~\ref{fig-conn} (up to level~7).}
\label{fig-otree}
\end{figure}

{\bf From now on we will assume that all the orbit trees are labeled.}

Let \(\mot{u}\) be a (possibly infinite) word over \(Q\). The \emph{path
of}~\(\mot{u}\) in the orbit tree~\(\otree[A]\) is the unique
initial path going from the root through the connected
components of the prefixes of~\(\mot{u}\); \(\mot{u}\) can be called
\emph{a representative} of this initial path (or of the orbit of~\(\mot{u}\)
under~\(\langle\dz(\aut{A})\rangle\) representing the endpoint of this path);
we can say equivalently that this path is \emph{represented} by~\(\mot{u}\).

\begin{definition}
Let \(e\) and~\(f\) be two edges in the orbit tree~\(\otree[A]\).
We say that~\(e\) \emph{is liftable to}~\(f\) if each word of~\(\bot(e)\)
admits some word of~\(\bot(f)\) as a suffix.
\end{definition}

Obviously if \(e\) is liftable to~\(f\), then \(f\) is closer to the root of the orbit tree.
The fact that an edge is liftable to another one reflects a deeper relation stated below.
The following lemma is one of the key observations that we use many times later in the paper.

\begin{lemma}\label{lem-liftable}
Let \(e\) and~\(f\) be two edges in the orbit tree~\(\otree[A]\). If \(e\) is
liftable to~\(f\), then the label of~\(e\) is less than or equal to the label
of~\(f\).
\end{lemma}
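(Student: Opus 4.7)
The plan is to unwrap the definitions of the edge labels and the liftability relation, then exhibit a set-theoretic injection yielding the desired inequality.

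First, I would use the ``several full copies'' observation from Section~\ref{sec-cc} to reinterpret the labels combinatorially: if $e\colon\aut{C}\to\aut{C}'$ goes from level~$n$ to level~$n{+}1$, then for any fixed state $\mot{u}\in\aut{C}$ the label of~$e$ equals the cardinality of the set $\{y\in Q : \mot{u}y\in\aut{C}'\}$, and similarly the label of $f\colon\aut{D}\to\aut{D}'$ (from level~$m$ to level~$m{+}1$) equals $|\{y\in Q : \mot{w}y\in\aut{D}'\}|$ for any $\mot{w}\in\aut{D}$. The independence of this count on the choice of $\mot{u}$ follows from reversibility, by transporting one such $\mot{u}$ onto any other $\mot{u}'\in\aut{C}$ via a permutation $\delta_{\mot{s}}$ and using the fact that $\delta_{\rho_{\mot{u}}(\mot{s})}$ is a bijection on~$Q$. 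This reduces the comparison of labels to comparing cardinalities of two sets of admissible one-letter extensions.

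Next, since $e$ is liftable to~$f$, every word of~$\aut{C}'$ has its $(m{+}1)$-suffix in~$\aut{D}'$, which in particular forces $n\geqslant m$. I would pick an arbitrary state $\mot{u}\in\aut{C}$ and let $\mot{w}$ be its $m$-suffix. The key claim is that $\mot{w}\in\aut{D}$: since $\aut{C}'$ is a child of $\aut{C}$ in the orbit tree, there exists some $y\in Q$ with $\mot{u}y\in\aut{C}'$; the liftability hypothesis places the $(m{+}1)$-suffix $\mot{w}y$ of~$\mot{u}y$ inside $\aut{D}'$, and since $\aut{D}$ is the parent of $\aut{D}'$ it must contain the $m$-prefix~$\mot{w}$ of $\mot{w}y$.

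Finally, the very same liftability argument, applied to every $y\in Q$ with $\mot{u}y\in\aut{C}'$, shows that $\mot{w}y\in\aut{D}'$ for each such~$y$. The identity map on~$Q$ therefore embeds $\{y : \mot{u}y\in\aut{C}'\}$ into $\{y : \mot{w}y\in\aut{D}'\}$, and comparing cardinalities yields the required inequality between the labels of~$e$ and~$f$. I do not anticipate any real obstacle here: the only conceptual step is the reinterpretation of each label as a count of admissible one-letter extensions at a representative state, after which the conclusion is a one-line suffix-tracking argument.
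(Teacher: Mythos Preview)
Your proposal is correct and follows essentially the same approach as the paper: both interpret the label of an edge as the number of one-letter extensions of a fixed representative that stay in the child component, and both establish the inclusion \(\{y:\mot{u}y\in\bot(e)\}\subseteq\{y:\mot{w}y\in\bot(f)\}\) (with \(\mot{w}\) the appropriate suffix of \(\mot{u}\)) to compare cardinalities. The only cosmetic difference is that you invoke the liftability hypothesis separately for each admissible~\(y\), whereas the paper uses it once and then transports the remaining~\(y\)'s via the orbit computation \(\delta_{\mot{s}}(\mot{vu}x)=\mot{vu}y\Rightarrow\delta_{\rho_{\mot{v}}(\mot{s})}(\mot{u}x)=\mot{u}y\); the resulting injection is the same.
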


\begin{proof}
Since~\(e\) is liftable to~\(f\), each word in~\(\bot(e)\) has a form
\(\mot{vu}x\) for some~\(\mot{u}x\in\bot(f)\). Suppose that \(\mot{vu}x\)
and~\(\mot{vu}y\) are in the same connected component:
there exists \(\mot{s}\in\Sigma^*\) such that~\(\delta_{\mot{s}}\) moves
\(\mot{vu}x\) to \(\mot{vu}y\). In this case,
\(\rho_{\mot{v}}(\mot{s})\) moves \(\mot{u}x\) to \(\mot{u}y\).
Thus, the number of children of~\(\mot{vu}\) in the connected component
of~\(\mot{vu}x\) (which is equal to the label of~\(e\)) is less than or equal to
the number of children of~\(\mot{u}\) in the connected component of
$\mot{u}x$ (which is equal to the label of~\(f\)).
\end{proof}

This notion can be generalized to paths:
\begin{definition}
Let \(\mot{e}=(e_i)_{i\in I}\)
and~\(\mot{f}=(f_i)_{i\in I}\) be two paths of the same (possibly
infinite) length in the orbit tree~\(\otree[A]\). The path \(\mot{e}\) \emph{is liftable to} the path
\(\mot{f}\) if, for any \(i\in I\), the edge~\(e_i\) is liftable to the edge \(f_i\).
\end{definition}

As each word~\(\mot{u}\in Q^*\) is a state in a connected component
of~\(\aut{A}^{|\mot{u}|}\), we can notice the following fact which is
crucial for all our forthcoming proofs.

\begin{lemma}\label{lem-lift-up}
Let \(\mot{e}\) be a path at level~\(k\) in the orbit tree~\(\otree[A]\).
Then, for any~\(\ell<k\), \(\mot{e}\) is liftable to some path
at level~\(\ell\). In particular, \(\mot{e}\) is liftable to
some initial path.
\end{lemma}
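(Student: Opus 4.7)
The plan is to construct an explicit lift by using a suffix of a representative word. Pick any representative $\mot{w}\in Q^k$ of the top vertex $\top(\mot{e})$. By successively choosing states $x_1,x_2,\ldots\in Q$ so that $\mot{w}x_1\cdots x_i$ represents $\bot(e_i)$, the path $\mot{e}$ is encoded by the sequence of prolongations $\mot{w}, \mot{w}x_1, \mot{w}x_1x_2,\ldots$. Decompose $\mot{w} = \mot{w}'\mot{u}$ with $|\mot{u}|=\ell$, and define $\mot{f}=(f_i)_{i\in I}$ to be the path at level $\ell$ whose $i$-th edge $f_i$ goes from the connected component of $\mot{u}x_1\cdots x_{i-1}$ to the connected component of $\mot{u}x_1\cdots x_i$. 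The goal is to show $\mot{e}$ is liftable to $\mot{f}$.

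The key point is to verify that \emph{every} representative of $\bot(e_i)$ admits some representative of $\bot(f_i)$ as a suffix (not only the canonical $\mot{w}'\mot{u}x_1\cdots x_i$). Let $\mot{z}$ be such a representative. By reversibility of $\aut{A}^{k+i}$, there exists $\mot{s}\in\Sigma^*$ with
\[
\mot{z} = \delta_{\mot{s}}(\mot{w}'\mot{u}x_1\cdots x_i) = \delta_{\mot{s}}(\mot{w}')\cdot \delta_{\rho_{\mot{w}'}(\mot{s})}(\mot{u}x_1\cdots x_i)\enspace,
\]
using the standard recursive computation of the dual production function. The suffix of length $\ell+i$ of $\mot{z}$ is therefore $\delta_{\rho_{\mot{w}'}(\mot{s})}(\mot{u}x_1\cdots x_i)$, which lies in the orbit of $\mot{u}x_1\cdots x_i$, hence is a representative of $\bot(f_i)$. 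This establishes that each $e_i$ is liftable to $f_i$, so $\mot{e}$ is liftable to $\mot{f}$.

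The ``in particular'' statement follows by choosing $\ell=0$: a path at level $0$ starts at the root and is therefore an initial path. The only mildly delicate point is keeping track of the correct suffix (the alphabet acts on the left via $\rho_{\mot{w}'}$ before being applied to the tail), and checking that the argument is uniform over all representatives of $\bot(e_i)$; both are immediate from the recursive decomposition of $\delta_{\mot{s}}$ combined with reversibility.
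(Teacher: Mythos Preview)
Your proof is correct and is precisely the explicit argument that the paper leaves implicit: the paper states this lemma without proof, merely prefacing it with the remark that ``each word~\(\mot{u}\in Q^*\) is a state in a connected component of~\(\aut{A}^{|\mot{u}|}\)''. Your construction---taking a representative~\(\mot{w}'\mot{u}x_1x_2\cdots\) of the path~\(\mot{e}\) and lifting via the length-\(\ell\) suffix~\(\mot{u}\), then using the identity \(\delta_{\mot{s}}(\mot{w}'\mot{v})=\delta_{\mot{s}}(\mot{w}')\cdot\delta_{\rho_{\mot{w}'}(\mot{s})}(\mot{v})\) to check that every representative of~\(\bot(e_i)\) has a suffix in~\(\bot(f_i)\)---is exactly the intended argument, carried out in full detail.
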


\section{Main result}
\label{sec-main_result}
We study here the case where \(\aut{A}\) is a connected
invertible-reversible 3-state Mealy automaton, which means that the
orbit tree of its dual has a unique edge adjacent to the root, labeled by~3.
We prove that if \(\aut{A}\) generates an
infinite group, then the orbit tree~\(\otree[A]\) admits a (necessarily unique)
branch without edges labeled by~1, more precisely a branch of label
either~\(3^{\omega}\) or~\(3^n2^{\omega}\), where $i^\omega$ denotes an infinite word whose each letter is $i$. An element of infinite order
will then be constructed using this branch.

\medbreak The restriction to connected Mealy automata is discussed in Remark~\ref{remarkDisconnected}.

\subsection{General structure of the orbit tree}

From Lemma~\ref{lem-lift-up}, we obtain the following result
on the connection degree of a Mealy automaton (note that it does not depend on the number of states in the Mealy automaton):
\begin{proposition}\label{prop-lin3}
If for some~\(n\), a connected component of~\(\aut{A}^n\) does not
split up, then the connection degree of~\(\aut{A}\) is at least \(n+1\).
\end{proposition}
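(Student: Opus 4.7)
The plan is to propagate the non-splitting property from level~\(n\) up to the root of~\(\otree[A]\) and then back down to level~\(n+1\), by combining Lemmas~\ref{lem-lift-up} and~\ref{lem-liftable} with the constraint that the labels of all outgoing edges at any given vertex sum to~\(\#Q\).

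First, let~\(\aut{C}\) be a component of~\(\aut{A}^n\) that does not split up, and let \(e_n\) be the unique outgoing edge of~\(\aut{C}\) in~\(\otree[A]\). Since the outgoing labels at~\(\aut{C}\) sum to~\(\#Q\) and there is only one such edge, \(e_n\) has label exactly~\(\#Q\).

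Next, by Lemma~\ref{lem-lift-up}, for each~\(\ell<n\) the edge~\(e_n\) is liftable to some edge~\(e_\ell\) at level~\(\ell\). Lemma~\ref{lem-liftable} gives the label of~\(e_\ell\) at least~\(\#Q\); combined with the upper bound~\(\#Q\) from the sum condition, this forces each~\(e_\ell\) to have label exactly~\(\#Q\) and to be the only outgoing edge of~\(\top(e_\ell)\).

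A short induction on~\(\ell\) from~\(0\) to~\(n+1\) then shows that every level of~\(\otree[A]\) up to~\(n+1\) consists of a single vertex: the case~\(\ell=0\) is the root, and if level~\(\ell\le n\) has a unique vertex, it must coincide with~\(\top(e_\ell)\), whose unique child~\(\bot(e_\ell)\) is then the sole vertex at level~\(\ell+1\). Hence \(\aut{A}^{n+1}\) is connected, so \(\cd{A}\ge n+1\). The main subtlety is that Lemma~\ref{lem-lift-up} only produces the lifts~\(e_\ell\) as individual edges and does not a~priori align them along a common initial path; it is the label-saturation argument that guarantees no other vertex can exist at intermediate levels, so the~\(e_\ell\) automatically lie on the unique spine of the orbit tree.
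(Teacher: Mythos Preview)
Your proof is correct and follows essentially the same approach as the paper's: both lift the label-\(\#Q\) edge at level~\(n\) to each higher level via Lemma~\ref{lem-lift-up}, use Lemma~\ref{lem-liftable} together with the sum constraint to force a unique outgoing edge at every level, and then descend from the root. Your version is simply more explicit, in particular in making the top-down induction (and the subtlety about alignment of the lifts) precise, whereas the paper compresses this into ``by going from top to bottom''.
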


\begin{proof}
Suppose that~\(\aut{A}\) has~\(m\) states. If an edge at level~\(n\) in the orbit tree~\(\otree[A]\)
has label~$m$, by Lemma~\ref{lem-lift-up}, it is liftable to
some edge at any level above \(n\), and by Lemma~\ref{lem-liftable} this
edge is labeled by~$m$. Now, by going from top to bottom, we can
conclude that there is only one edge at each level above \(n+1\) in the
orbit tree.
\end{proof}

Now we restrict our attention again on the case of 3-state automata, unless specified otherwise. If the connection degree of such an automaton~\(\aut{A}\) is infinite, it has
been proved in~\cite[Proposition~14]{Kli13} that the generated semigroup is free,
freely generated by the states of the Mealy automaton.
As a consequence \(\aut{A}\) cannot generate an infinite
Burnside group in this case. {\bf So from now on, we assume~\(0<\cd{A}<\infty\)}.

We know now that the orbit tree~\(\otree[A]\) has a prefix
linear part until level \(\cd{A}\) and that below this level, all the
vertices split up. We denote by~\(\cdv{A}\) the highest vertex to
split up (\emph{i.e.} the only vertex at level~\(\cd{A}\)).

\begin{definition}
Let \(\mot{i}\) be a (possibly infinite) word over an alphabet~\(F\)
and let~\(j\in F\). A~\(j\)-block~\(\mot{j}\)
of~\(\mot{i}\) is a maximal factor of~\(\mot{i}\) in~\(j^*\cup\{j^\omega\}\), that is,
\(\mot{i}=\mot{k}\mot{j}\mot{l}\) holds, where the last letter
of~\(\mot{k}\) and the first letter of~\(\mot{l}\), if not empty, are not~\(j\).
\end{definition}

\begin{lemma}\label{lem-2-blocks}
If the lengths of the 2-blocks in the orbit tree \(\otree[A]\) are not
bounded, then \(\otree[A]\) admits a branch labeled by
\(3^{\cd{A}}2^{\omega}\). If the lengths of the 2-blocks are bounded with
supremum~\(N\), then \(\otree[A]\) admits an initial path labeled
by~\(3^{\cd{A}}2^{N}\) (and none labeled by~\(3^{\cd{A}}2^{N+1}\)).
\end{lemma}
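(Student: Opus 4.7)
The plan is to first identify which labels can appear on edges at level at least $\cd{A}$, then use the lifting machinery to push any deep block of consecutive $2$-labels up to start at $\cdv{A}$, and finally conclude the two cases via a König-type argument and by taking suprema.

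First I would observe that every vertex at level at least $\cd{A}$ splits, and since the labels of the children are positive integers summing to $3$ (the number of states of $\aut{A}$), no child can be labeled $3$. Hence every edge at level at least $\cd{A}$ has label in $\{1,2\}$; in particular, no edge in any path starting at $\cdv{A}$ is labeled $3$.

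The key step is the following. Given any path $\mot{e}$ of length $k$ in $\otree[A]$ whose edges are all labeled $2$, apply Lemma~\ref{lem-lift-up} to lift $\mot{e}$ to a path $\mot{f}$ of the same length starting at $\cdv{A}$. Lemma~\ref{lem-liftable} forces each edge of $\mot{f}$ to have label at least $2$, while the preceding observation forces label at most $2$; hence $\mot{f}$ is labeled $2^k$, and prepending the linear prefix yields an initial path of $\otree[A]$ labeled $3^{\cd{A}}2^k$. Consequently, any $2$-block of length $k$ somewhere in $\otree[A]$ produces an initial path labeled $3^{\cd{A}}2^k$.

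In the unbounded case, this gives an initial path labeled $3^{\cd{A}}2^k$ for every $k$. Since $\otree[A]$ is finitely branching (at most three children per vertex, as labels are positive integers summing to $3$), the subtree of vertices reachable via an initial path whose label is a prefix of $3^{\cd{A}}2^\omega$ is infinite and finitely branching, so König's lemma delivers an infinite branch labeled $3^{\cd{A}}2^\omega$. In the bounded case with supremum $N$, the sup is attained (lengths are positive integers), so a $2$-block of length $N$ exists and the key step produces an initial path labeled $3^{\cd{A}}2^N$; no initial path labeled $3^{\cd{A}}2^{N+1}$ can exist, since extending such a path to a branch (every vertex has at least one child) would yield a $2$-block of length at least $N+1$, contradicting $N=\sup$.

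I do not anticipate a real technical obstacle: the argument is almost immediate once Lemmas~\ref{lem-lift-up} and~\ref{lem-liftable} are available. The only mild subtleties are verifying that no label-$3$ edge appears at level at least $\cd{A}$ (immediate from the splitting hypothesis and the fact that $\aut{A}$ has three states) and that the supremum of a bounded set of positive integers is attained.
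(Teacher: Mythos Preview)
Your proof is correct and follows essentially the same approach as the paper: lift any 2-block to level~\(\cd{A}\) via Lemmas~\ref{lem-lift-up} and~\ref{lem-liftable}. The paper's version is even terser because it notes that, since the labels of the children of any vertex sum to~3, at most one child edge can be labeled~2; hence the \(2^*\)-labeled paths from~\(\cdv{A}\) are nested, and K\"onig's lemma is unnecessary.
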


\begin{proof}
As there is at most one path starting at~\(\cdv{A}\) with a maximal
prefix in~\(2^\omega\) because the stateset has size~3,
Lemma~\ref{lem-lift-up} leads to the conclusion.
\end{proof}

By Proposition~\ref{prop-lin3}, below the connection degree of~\(\aut{A}\),
no edge can be labeled by~3. On the other hand, the case
when all edges are labeled by~1
%is also not of a big interest
%according to the following lemma
turns out to be irrelevant for generating infinite Burnside groups,%-%
according to the following proposition (that holds for automata with arbitrary number of states).

\begin{proposition}\label{prop-total-split-up}
If all edges coming down from the only connected component
at vertex~\(\cdv{A}\) are labeled by~1,
the group generated by~\(\aut{A}\) is finite.
\end{proposition}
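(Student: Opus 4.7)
The plan is to show that the sizes of the connected components of all powers of~\(\aut{A}\) are uniformly bounded, and then invoke Proposition~\ref{prop-bounded-cc}. Let~\(m\) denote the number of states of~\(\aut{A}\) (with \(m=3\) in our main setting, although the argument works in general).

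The key step is to propagate the label-\(1\) hypothesis downward using the liftability machinery. By Lemma~\ref{lem-lift-up}, every edge at level~\(n\geq \cd{A}\) in~\(\otree[A]\) is liftable to some edge at level~\(\cd{A}\). Since~\(\cdv{A}\) is the unique vertex at level~\(\cd{A}\), the edges at level~\(\cd{A}\) are precisely the edges going down from~\(\cdv{A}\), and by hypothesis each of these carries label~\(1\). Lemma~\ref{lem-liftable} then forces every edge at every level~\(\geq \cd{A}\) to have label at most~\(1\); since labels are positive integers, every such edge has label exactly~\(1\).

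Recall that the labels of the edges going down from any vertex of~\(\otree[A]\) sum to~\(m\). Consequently every vertex at level~\(\geq \cd{A}\) has exactly~\(m\) children, each of the same size as the parent. A straightforward induction on the level then shows that every connected component appearing at a level~\(\geq \cd{A}\) in~\(\otree[A]\) has size exactly~\(\#\cdv{A}\). Moreover, every connected component at a level~\(< \cd{A}\) corresponds to a vertex on the unique path from the root to~\(\cdv{A}\), so its size divides~\(\#\cdv{A}\) and is in particular bounded by it. Therefore the sizes of the connected components of all the powers of~\(\aut{A}\) are bounded by~\(\#\cdv{A}\), and Proposition~\ref{prop-bounded-cc} yields that~\(\aut{A}\) generates a finite group. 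There is essentially no obstacle in this argument: it is a direct two-step combination of Lemmas~\ref{lem-lift-up} and~\ref{lem-liftable} together with the fact that labels at a vertex sum to~\(m\); the only point to watch is that the label-\(1\) condition at~\(\cdv{A}\), via liftability, propagates to \emph{every} edge below~\(\cdv{A}\) and not merely to the immediate children.
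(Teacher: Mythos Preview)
Your proof is correct and follows exactly the same approach as the paper: use Lemmas~\ref{lem-lift-up} and~\ref{lem-liftable} to propagate the label-\(1\) condition from the edges at~\(\cdv{A}\) to all edges below, deduce that component sizes stay bounded, and conclude via Proposition~\ref{prop-bounded-cc}. You have simply made explicit the bound~\(\#\cdv{A}\) and the counting of children, which the paper leaves implicit in a single sentence.
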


\begin{proof}
By Lemmas~\ref{lem-liftable} and~\ref{lem-lift-up}, below
level \(\cd{A}\), all edges are then labeled by~$1$, which means that the
connected components of the powers of~\(\aut{A}\) have bounded size,
and the group~\(\pres{\aut{A}}\) is finite according to Proposition~\ref{prop-bounded-cc}.
\end{proof}

{\bf From now on, we assume that the only connected component at vertex
\(\cdv{A}\) splits up in two connected components.}

\subsection{Reduction edge and orbital words}
Let us recall our framework: \(\aut{A}\) is a connected 3-state
invertible-reversible Mealy automaton with~\(0<\cd{A}<\infty\),
such that the only connected component at vertex~\(\cdv{A}\)
splits up in two connected components.
The point is now to put the emphasize on the larger of the two.%-%

\begin{definition}
At level \(\cd{A}\)
there are two edges, one labeled by~1 and a second one labeled by~2.
We call by the \emph{reduction edge} this last edge and denote it
by~\(\redEdge\), and denote by~\(\two{A}\)
its terminal vertex, and call it the \emph{reduction orbit}.
\end{definition}

\begin{lemma}\label{lem-2extends}
Each vertex below the
level~\(\cd{A}+1\) is the initial vertex of either one edge which is liftable to
the reduction edge and one edge which
is not, or two edges which are liftable to the reduction edge and one which is not.
\end{lemma}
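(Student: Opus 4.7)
The plan is to classify each outgoing edge of $v$ by the unique edge at level~\(\cd{A}\) to which it lifts, and then read off the answer from the known $(1,2)$ split at $\cdv{A}$.

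First I will fix a vertex $v$ at level $k\ge\cd{A}+1$, choose a representative $\mot{w}\in v$, and write $\mot{w}=\mot{a}\mot{b}$ with $|\mot{b}|=\cd{A}$, so that $\mot{b}\in\cdv{A}$. Let $e_1$ denote the other edge at level~\(\cd{A}\) (the label-$1$ sibling of $\redEdge$). By the $(1,2)$ split at $\cdv{A}$, among the three suffixes $\{\mot{b}\alpha:\alpha\in Q\}$ exactly one — say $\mot{b}\alpha_0$ — lies in $\bot(e_1)$, while the other two $\mot{b}\alpha_1,\mot{b}\alpha_2$ lie in $\two{A}$.

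The key step is to show that every outgoing edge $e$ of $v$ is liftable to exactly one edge at level~\(\cd{A}\), namely the edge whose terminal vertex contains the length-$(\cd{A}+1)$ suffixes of all words in $\bot(e)$. Existence is immediate from Lemma~\ref{lem-lift-up} applied to the length-one path $e$. For the single-component claim I rely on a suffix-descent observation: if $\delta_{\mot{s}}(\mot{w}_1\alpha)=\mot{w}_2\beta$ with $\mot{w}_i=\mot{a}_i\mot{b}_i$ of length $k$, then unrolling the recursive definition of the dual production function and using reversibility of $\aut{A}$ yields a word $\mot{s}'$ with $\delta_{\mot{s}'}(\mot{b}_1\alpha)=\mot{b}_2\beta$; hence all length-$(\cd{A}+1)$ suffixes of the words in $\bot(e)$ sit in a single level-$(\cd{A}+1)$ component, which in turn determines a unique edge at level~\(\cd{A}\) (either $e_1$ or $\redEdge$).

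It remains to conclude by cases on the split at $v$, which by Proposition~\ref{prop-lin3} is $(1,1,1)$ or $(1,2)$ since label~$3$ would force $\cd{A}\ge k+1>\cd{A}$. In the $(1,1,1)$ case the three extensions $\mot{w}\alpha_0,\mot{w}\alpha_1,\mot{w}\alpha_2$ lie in three distinct children of $v$, giving exactly one edge liftable to $e_1$ and two edges liftable to $\redEdge$, as required by the second alternative. In the $(1,2)$ case two of the three extensions merge into a single child; the suffix-descent observation forbids the merging of the extension $\mot{w}\alpha_0$ (suffix in $\bot(e_1)$) with an extension $\mot{w}\alpha_i$ (suffix in $\two{A}$), so the merging extensions must be precisely $\mot{w}\alpha_1$ and $\mot{w}\alpha_2$. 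This produces one label-$1$ edge liftable to $e_1$ and one label-$2$ edge liftable to $\redEdge$, matching the first alternative.

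The main obstacle is expected to be the suffix-descent step: one has to carefully unroll $\delta_{\mot{s}}$ across the length-$(k-\cd{A})$ prefix $\mot{a}_i$ and invoke reversibility of $\aut{A}$ to guarantee that the dual action on $Q^{k+1}$ really descends to a well-defined action on length-$(\cd{A}+1)$ suffixes — without reversibility, different suffix contributions could mix and the lift would no longer be unique.
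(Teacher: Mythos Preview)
The paper states Lemma~\ref{lem-2extends} without proof, so there is nothing to compare against; your argument is exactly the kind of elaboration the authors presumably intended, and it is correct. Classifying each outgoing edge of~$v$ by the (unique) level-$(\cd{A}{+}1)$ component containing the length-$(\cd{A}{+}1)$ suffixes of its terminal vertex, and then reading off the answer from the $(1,2)$ split at~$\cdv{A}$ together with the exclusion of a label-$3$ edge via Proposition~\ref{prop-lin3}, is precisely how one proves this.

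One small correction: the ``suffix-descent'' step does not in fact use reversibility. The identity
\[
\delta_{\mot{s}}(\mot{a}\cdot\mot{b}\alpha)\;=\;\delta_{\mot{s}}(\mot{a})\cdot\delta_{\rho_{\mot{a}}(\mot{s})}(\mot{b}\alpha)
\]
is simply the Mealy factorization (dual to the recursion defining~$\rho_x$ in Section~\ref{sec-basic}) and holds for any Mealy automaton, so $\mot{s}'=\rho_{\mot{a}_1}(\mot{s})$ works unconditionally. Reversibility is needed elsewhere (for the orbit tree to be well-defined, for Lemma~\ref{lem-liftable}, and for Proposition~\ref{prop-lin3}), but not in this particular step. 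So your anticipated ``main obstacle'' is actually routine, and the proof goes through cleanly.
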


\begin{definition}\label{defi-restricted-path}
An \emph{\(\redEdge\)-liftable path} is an initial (possibly infinite) path in the orbit tree whose
edges below the level \(\cd{A}\) are all liftable to the reduction edge $\redEdge$
of~\(\otree[A]\).
\end{definition}

\begin{definition}\label{def-orbital}
A word over~\(Q\) is said to be~\emph{orbital}
if it is a representative of an \(\redEdge\)-liftable path or, equivalently,
if all its length~\(\cd{A}\) factors belong to the reduction orbit~\(\two{A}\).
Denote then by~\(\htree\) the set of all finite orbital words.
Being prefix-closed, \(\htree\) can be
seen either as a set of words, or as a tree.
\end{definition}

Note that~\(\htree\) is a strictly (\(\cd{A}\)+1)-testable language~\cite{local-test}.
In particular, for any word~$\mot{u}\in\htree$ of length at least~$\cd{A}$,
the set of words~$\mot{v}$ such that~$\mot{uv}$ belongs to~$\htree$ depends
only on the length~\(\cd{A}\) suffix of~\(\mot{u}\). A simple consequence is that
\(\htree\) viewed as a set of infinite words is factor-closed.
Further, up to level~$\cd{A}$, the tree~$\htree$ coincides with~$Q^*$
and each word of length at least~$\cd{A}$ is a prefix of exactly
two other words in~$\htree$.

\medbreak We now state several technical results
about the tree~\(\htree\) of orbital words which will be of use
in the final subsection.

\begin{lemma}\label{lem-successor}
For any two words~\(\mot{u},\mot{v}\) in~\(\htree\)
and any integer~\(n\), there exists a word~\(\mot{r}\in Q^n\)
satisfying~\(\mot{u}\mot{r}\in\htree\) and~\(\mot{v}\mot{r}\in\htree\).
\end{lemma}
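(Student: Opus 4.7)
The plan is to construct $\mot{r}$ one letter at a time, by induction on $n$. The key observation, where the three-state hypothesis enters crucially, is the following: for every $\mot{w}\in\htree$, the set of allowed extensions
\[
S(\mot{w}) = \{\,x\in Q : \mot{w}x\in\htree\,\}
\]
has cardinality at least~$2$. Indeed, if $|\mot{w}|<\cd{A}$ then $S(\mot{w})=Q$ because $\htree$ coincides with $Q^*$ up to level $\cd{A}$; and if $|\mot{w}|\ge\cd{A}$ then $|S(\mot{w})|=2$ by the remark preceding the lemma, which ensures that every word of length at least $\cd{A}$ in $\htree$ has exactly two extensions. Since $|Q|=3$, any two subsets of $Q$ of size at least~$2$ must meet, so $S(\mot{w}_1)\cap S(\mot{w}_2)\ne\varnothing$ for every $\mot{w}_1,\mot{w}_2\in\htree$.

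Given this observation, the lemma will follow by a straightforward induction on $n$. For $n=0$ we take $\mot{r}=\varepsilon$. For the inductive step, assuming we have built $\mot{r}'\in Q^{n-1}$ with $\mot{u}\mot{r}',\mot{v}\mot{r}'\in\htree$, I would pick any letter $x\in S(\mot{u}\mot{r}')\cap S(\mot{v}\mot{r}')$ and set $\mot{r}=\mot{r}'x\in Q^n$; both $\mot{u}\mot{r}$ and $\mot{v}\mot{r}$ then lie in $\htree$ by construction.

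There is no real obstacle here, as the lemma is essentially a pigeonhole consequence of the structural facts about $\htree$ recalled just before its statement. The sole place where the three-state hypothesis is used is the inequality $|S(\mot{w}_1)|+|S(\mot{w}_2)|\ge 4>|Q|$, and extending the lemma to automata with larger statesets would require a genuinely new idea.
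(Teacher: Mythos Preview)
Your proof is correct and follows essentially the same approach as the paper: observe that every word in~\(\htree\) has at least two one-letter extensions, use \(|Q|=3\) to force a common extension by pigeonhole, and proceed by recursion on~\(n\). The paper's argument is slightly terser but identical in content.
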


\begin{proof}
For any word~\(\mot{w}\in\htree\), there exist (at least) two different states~\(q_1,q_2\in{Q}\)
satisfying~\(\mot{w}q_1,\mot{w}q_2\in\htree\).
As the stateset~\(Q\) has size~3, there is at least one common state~\(r_1\in Q\)
satisfying~\(\mot{u}r_1\in\htree\) and~\(\mot{v}r_1\in\htree\).
The result is then obtained by recursion.
\end{proof}

\begin{lemma}
\label{lem:full_orbit}
Let \(x\in Q\) and \(\mot{u}\in Q^{\cd{A}}\).
The set of length~\(\cd{A}\) suffixes of all words in the connected component of~\(x\mot{u}\) is the whole~\(Q^{\cd{A}}\).
\end{lemma}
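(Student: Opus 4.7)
My plan is to construct, for every target $\mot{v} \in Q^{\cd{A}}$, an element of $\pres{\dual{\aut{A}}}$ that sends $x\mot{u}$ to a word of $Q^{\cd{A}+1}$ whose length-$\cd{A}$ suffix is exactly $\mot{v}$. Everything will hinge on the identity
$$\delta_{\mot{s}}(x\mot{u}) \;=\; \delta_{\mot{s}}(x)\,\delta_{\rho_x(\mot{s})}(\mot{u}),$$
valid for every $\mot{s} \in \Sigma^*$, which follows from a one-line induction on $|\mot{s}|$ starting from the basic recursion $\delta_s(xw) = \delta_s(x)\,\delta_{\rho_x(s)}(w)$ for the dual production functions (an identity already used implicitly in Section~\ref{sec-cc}). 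In particular, the length-$\cd{A}$ suffix of $\delta_{\mot{s}}(x\mot{u})$ depends only on the action of $\delta_{\rho_x(\mot{s})}$ on $\mot{u}$.

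Once this identity is in hand, the argument splits into two small steps. First, the connectedness of $\aut{A}^{\cd{A}}$, which holds by the very definition of the connection degree~$\cd{A}$, means that $\pres{\dual{\aut{A}}}$ acts transitively on $Q^{\cd{A}}$; so I can find some $\mot{t} \in \Sigma^*$ with $\delta_{\mot{t}}(\mot{u}) = \mot{v}$. Second, the invertibility of $\aut{A}$ makes $\rho_x$ a length-preserving bijection of $\Sigma^*$ onto itself, and so I can set $\mot{s} := \rho_x^{-1}(\mot{t})$, a word of the same length as $\mot{t}$.

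Plugging this choice back into the identity yields $\delta_{\mot{s}}(x\mot{u}) = \delta_{\mot{s}}(x)\,\mot{v}$, which lies in the connected component of $x\mot{u}$ and carries the prescribed suffix $\mot{v}$; as $\mot{v}$ was arbitrary, this closes the argument. There is no genuine obstacle here: the proof is really a bookkeeping exercise. The only subtlety worth flagging is that the conclusion uses \emph{both} defining hypotheses on $\aut{A}$, namely reversibility (through the transitivity on $Q^{\cd{A}}$ implicit in the definition of $\cd{A}$) and invertibility (through the bijectivity of $\rho_x$); without the latter, the set of reachable suffixes $\{\delta_{\rho_x(\mot{s})}(\mot{u}) : \mot{s}\in\Sigma^*\}$ need not exhaust $Q^{\cd{A}}$.
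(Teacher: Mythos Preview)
Your proof is correct and essentially identical to the paper's: both use the identity \(\delta_{\mot{s}}(x\mot{u})=\delta_{\mot{s}}(x)\,\delta_{\rho_x(\mot{s})}(\mot{u})\), invoke connectedness of \(\aut{A}^{\cd{A}}\) to reach any target suffix~\(\mot{v}\), and then use invertibility of~\(\rho_x\) to pull back the required input word. The only cosmetic difference is that the paper names the word with \(\delta_\cdot(\mot{u})=\mot{v}\) as~\(\mot{s}\) and its \(\rho_x\)-preimage as~\(\mot{t}\), whereas you swap these names.
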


\begin{proof}
Since \(\aut{A}\) is reversible and \(\aut{A}^{\cd{A}}\) connected,
for each \(\mot{v}\in{Q}^{\cd{A}}\),
there exists \(\mot{s}\in\Sigma^*\) such that
\(\delta_{\mot{s}}(\mot{u})=\mot{v}\). By the invertibility of \(\aut{A}\),
\(\mot{t}=\rho_{x}^{-1}(\mot{s})\) is well defined and we have
\[\delta_{\mot{t}}(x\mot{u})=\delta_{\mot{t}}(x)\delta_{\rho_x(\mot{t})}(\mot{u})
=\delta_{\mot{t}}(x)\delta_{\mot{s}}(\mot{u})= \delta_{\mot{t}}(x)\mot{v}\:.\]\end{proof}

\begin{proposition}\label{prop-subtree}
For any orbital word \(\mot{u}\), there are infinitely many edges
in~\(\vhhtree[u]\) labeled by each state.
\end{proposition}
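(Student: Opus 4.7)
The plan is a proof by contradiction. Assume some state $x \in Q$ occurs only finitely many times as an edge label in $\vhhtree[u]$, and write $Q=\{x,y,z\}$. Then there exists a depth $L$ in $\vhhtree[u]$ below which no edge is labeled by $x$; by choosing $L$ large enough we may also ensure that every vertex of $\vhhtree[u]$ at depth at least $L$ has word-length at least $\cd{A}$, hence exactly two children in $\htree$. Since these two children must carry distinct labels drawn from $\{y,z\}$, each such vertex has exactly one $y$-child and one $z$-child: the portion of $\vhhtree[u]$ below depth $L$ is a complete binary tree whose edges are labeled in $\{y,z\}$.

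Next, since the $\cd{A}$ descents starting from any vertex at depth $L$ can be chosen freely in $\{y,z\}$, every word of $\{y,z\}^{\cd{A}}$ appears as the length-$\cd{A}$ suffix of some vertex at depth $L+\cd{A}$. For each such suffix $\mot{t}$, both $\mot{t}y$ and $\mot{t}z$ must lie in the reduction orbit $\two{A}$ while $\mot{t}x$ must not; hence $\mot{t}x$ belongs to the other connected component $\mathcal{O}_1$ of $\aut{A}^{\cd{A}+1}$, namely the terminal vertex of the label-$1$ edge starting at $\cdv{A}$, whose size is $\#\cdv{A}\cdot 1 = 3^{\cd{A}}$.

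To reach a contradiction, I would apply Lemma~\ref{lem:full_orbit} to any element of $\mathcal{O}_1$: the length-$\cd{A}$ suffixes of words of $\mathcal{O}_1$ cover all of $Q^{\cd{A}}$, and because $|\mathcal{O}_1|=3^{\cd{A}}=|Q^{\cd{A}}|$, each suffix is realised by exactly one word of $\mathcal{O}_1$. However, since $\cd{A}\geq 1$, the two distinct elements $y^{\cd{A}}$ and $zy^{\cd{A}-1}$ of $\{y,z\}^{\cd{A}}$ would then produce two distinct words $y^{\cd{A}}x$ and $zy^{\cd{A}-1}x$ of $\mathcal{O}_1$ sharing the common length-$\cd{A}$ suffix $y^{\cd{A}-1}x$, which is impossible.

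The main delicate point is the first step: converting the finiteness hypothesis into the strong combinatorial fact that every word of $\{y,z\}^{\cd{A}}$ occurs as a length-$\cd{A}$ suffix of some vertex of $\vhhtree[u]$. Once this is secured via the two-children property and pigeonhole on the two-element label set $\{y,z\}$, the contradiction reduces to a short size-versus-count comparison supported by Lemma~\ref{lem:full_orbit}.
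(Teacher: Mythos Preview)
Your argument is correct and follows essentially the same route as the paper's proof: after passing to a deep enough subtree so that the forbidden state never appears, you get a full binary subtree on the two remaining states, deduce that for every word $\mot{t}$ over those two states of length $\cd{A}$ the word $\mot{t}x$ lies in the small orbit~$\mathcal O_1$, and then derive a contradiction from Lemma~\ref{lem:full_orbit} by exhibiting two such words with the same length-$\cd{A}$ suffix. The paper does exactly this (with the roles of the states permuted), picking $x\mot{v}z$ and $y\mot{v}z$ for $\mot{v}\in\{x,y\}^{\cd{A}-1}$ where you pick $y^{\cd{A}}x$ and $zy^{\cd{A}-1}x$.
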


\begin{proof}
By contradiction. Denote the stateset \(Q=\{x,y,z\}\) and let
\(\mot{u}\) be an orbital word (of length at least~\(\cd{A}\))
such that no edge of~\(\vhhtree[u]\) is labeled by~\(z\).

As each word of~\(\htree\) can be extended in~\(\htree\) by two
different states, \(x\) and~\(y\) are in~\(\vhhtree[u]\). By
recursion, \(\{x,y\}^*\) is a subset of~\(\vhhtree[u]\) and, as
\(\htree\) is suffix-closed, \(\{x,y\}^*\subseteq\htree\).
Let~\(\mot{v}\in \{x,y\}^{\cd{A}-1}\): \(x\mot{v}\) and~\(y\mot{v}\) are
elements of~\(\htree\) and~\(x\mot{v}x\), \(x\mot{v}y\),
\(y\mot{v}x\), and~\(y\mot{v}y\) are in~\(\two{\aut{A}}\). Hence
\(x\mot{v}z\) and~\(y\mot{v}z\) have length~\(\cd{A}+1\) and are not
in~\(\two{\aut{A}}\). In this case both these words must belong to the other connected component of size~\(3^{\cd{A}}\).

By Lemma~\ref{lem:full_orbit}, the connected component of~\(x\mot{v}z\) will
have at least~\(3^{\cd{A}}\) words with different suffixes starting from position~\(2\).
Therefore, if we assume that~\(y\mot{v}z\) is also in this component, we must
have at least~\(3^{\cd{A}}+1\) words in there. Contradiction.
\end{proof}

\subsection{Cyclically orbital words and elements of infinite order}
In this subsection, in the case when the lengths of the
2-blocks in the orbit tree are bounded, we exhibit a family of words whose induced actions have finite
bounded orders. Then we prove that each word admits a bounded power which
induces the same action as some word in this family.

\begin{definition}
A word over~\(Q\) is said to be~\emph{cyclically orbital} if each of its powers is orbital.
In other words, a word is cyclically orbital
if viewed as a cyclic word it is orbital.
\end{definition}

Note that the existence of such cyclically orbital words
is ensured by the straightforward fact that any orbital word
of length~\(\cd{A}\times(1+\#Q^{\cd{A}})\) admits a cyclically orbital factor.

\medbreak To prove the first result about cyclically orbital words we will need the following proposition.

\begin{proposition}\label{prop-2-blocks}
If the lengths of the 2-blocks in~\(\otree[A]\) are bounded with supremum~\(
N\), then each edge from level \(\cd{A}+N\) or below in an \(\redEdge\)-liftable path is followed by three edges.
\end{proposition}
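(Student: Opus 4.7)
The plan is to argue by contradiction: suppose there is an edge $e$ at level $k\geq\cd{A}+N$ on an $\redEdge$-liftable path such that $\bot(e)$ has only two outgoing edges. By Lemma~\ref{lem-2extends}, these are labeled $(2,1)$ with the label-$2$ edge $e_*$ liftable to $\redEdge$, so $e_*$ extends the $\redEdge$-liftable path downward with a new label-$2$ edge at level at least $\cd{A}+N+1$.

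Before proceeding, I would establish a preliminary fact: the initial $2$-block of the $\redEdge$-liftable path (the one starting at $\redEdge$) has length exactly $N$. The upper bound is Lemma~\ref{lem-2-blocks}. For the lower bound, I apply Lemma~\ref{lem-lift-up} to lift any $2$-block of length $L$ in $\otree[A]$ to a path at level $\cd{A}$; by Lemma~\ref{lem-liftable} and the fact that labels below level $\cd{A}$ are at most $2$, the lifted path consists of $L$ consecutive label-$2$ edges starting at $\cdv{A}$. At each $(2,1)$-vertex the label-$2$ child is unique, so this lifted path coincides with the canonical $2$-block along the $\redEdge$-liftable path; hence the latter has length at least $L$. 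Taking $L=N$ gives equality. Consequently the labels of the $\redEdge$-liftable path begin $3^{\cd{A}}\cdot 2^N\cdot 1\cdots$, reaching a $(1,1,1)$-vertex at level $\cd{A}+N$.

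Denote by $\ell_2\geq 1$ the length of the new $2$-block on the extended $\redEdge$-liftable path starting at $e_*$. Consider the path $\mot{p}$ at level $\cd{A}+1$ obtained by restricting the (extended) $\redEdge$-liftable path from $\two{A}$ down to the bottom of this new $2$-block; its label sequence has the form $2^{N-1}\cdot 1\cdot\mot{w}\cdot 2^{\ell_2}$ for some $\mot{w}\in\{1,2\}^*$, where the leading $2^{N-1}\cdot 1$ comes from the rest of the initial $2$-block and the first following $1$, and the trailing $2^{\ell_2}$ comes from the new $2$-block. By Lemma~\ref{lem-lift-up}, $\mot{p}$ is liftable to a path $\tilde{\mot{p}}$ of the same length at level $\cd{A}$ starting at $\cdv{A}$; by Lemma~\ref{lem-liftable} together with the bound of $2$ on labels below level $\cd{A}$, every $2$-position of $\mot{p}$ forces the corresponding edge of $\tilde{\mot{p}}$ to be label-$2$.

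The $N-1$ initial label-$2$ edges of $\tilde{\mot{p}}$ then uniquely retrace the initial segment of the canonical $2$-block (by uniqueness of the label-$2$ child at each $(2,1)$-vertex), landing at the canonical $(2,1)$-vertex at level $\cd{A}+N-1$; the trailing $\ell_2$ edges of $\tilde{\mot{p}}$ are likewise all label-$2$. The final step is a structural analysis at the positions of $\tilde{\mot{p}}$ corresponding to $1$-positions of $\mot{p}$: the key constraint is that the canonical $(1,1,1)$-vertex at level $\cd{A}+N$ carries no label-$2$ outgoing edge, which, combined with the forced label-$2$'s at both ends of $\tilde{\mot{p}}$ and the uniqueness of the lift (coming from the reversibility of $\aut{A}$), forces the promotion of the intermediate $1$'s into a single $2$-block of length at least $N+1$ in $\tilde{\mot{p}}$, contradicting the supremum being $N$. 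The hardest step is precisely this propagation: rigorously pinning down how the combinatorial constraint at the $(1,1,1)$-vertex interacts with the imposed pointwise inequalities on labels to prevent any $1$-labels from surviving in the middle portion of $\tilde{\mot{p}}$.
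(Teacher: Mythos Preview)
Your setup through the single lift is correct and parallels the paper: the lifted path $\tilde{\mot{p}}$ is again $\redEdge$-liftable (suffixes of orbital words are orbital), so its first $N$ edges must retrace the canonical path~$\mot{e}$ and land at the $(1,1,1)$-vertex~$\bot(\mot{e})$.

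The gap is the final ``propagation'' step. Once $\tilde{\mot{p}}$ reaches $\bot(\mot{e})$ at position~$N$, the next edge (position $N{+}1$) is an outgoing edge of $\bot(\mot{e})$, all three of which are labeled~$1$. Hence the initial $2$-block of $\tilde{\mot{p}}$ has length exactly~$N$, not $N{+}1$. The pointwise inequality from Lemma~\ref{lem-liftable} only tells you that the labels of $\tilde{\mot{p}}$ at the intermediate positions are $\geq$ those of $\mot{p}$; it does \emph{not} force $1$'s to become $2$'s, and there is no mechanism merging the initial $2^N$ with the trailing $2^{\ell_2}$ into a single block. The $(1,1,1)$-constraint you invoke works \emph{against} the promotion you want, not for it. What one lift actually buys is more modest: the offending trailing $2$-edge now sits one level \emph{closer} to $\bot(\mot{e})$ than before.

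That one-level shift is exactly what the paper exploits, but via a minimality argument rather than a direct length contradiction. Among all $\redEdge$-liftable branches carrying a $2$-edge strictly below $\bot(\mot{e})$, pick one whose $1$-block immediately after $\bot(\mot{e})$ has \emph{minimal} length $k>0$ (label prefix $3^{\cd{A}}2^N1^k2$). Erase its first edge and lift: the resulting $\redEdge$-liftable path still has label prefix $3^{\cd{A}}2^N$, but now carries a label~$\geq 2$ at position $\cd{A}{+}N{+}k$, so its $1$-block after $\bot(\mot{e})$ has length $\leq k{-}1$, contradicting minimality (the base case $k=1$ collides with the $(1,1,1)$-vertex). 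Your argument becomes correct if you replace the unproven propagation claim by this minimality hypothesis.
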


\begin{proof} From Lemma~\ref{lem-2-blocks}, there is
an initial path
labeled by~\(3^{\cd{A}}2^N\) (and none labeled
by~\(3^{\cd{A}}2^{N+1}\)). As the stateset has size~3, this path is
unique, call it \(\mot{e}\):
\(\bot(\mot{e})\) is the initial vertex of 3 edges,
two of which are liftable to the reduction edge by
Lemma~\ref{lem-2extends}.

We now consider the subtree~\({\rest}\) of~\(\otree[A]\)
consisting of all \(\redEdge\)-liftable paths.
By Lemma~\ref{lem-2extends}, each path of~\({\rest}\) is
either a prefix of~\(\mot{e}\) or prefixed with~\(\mot{e}\).
Suppose that one of
the edges of~\({\rest}\) below level~\(|\mot{e}|+1\) is labeled by~2 and
consider an initial branch~\(\mot{f}\) in~\({\rest}\) which minimizes the
length of the 1-block from~\(\bot(\mot{e})\): its label has prefix~\(3^{\cd{A}}2^N1^k2\)
for some~\(k>0\).
Consider the (non-initial) path in $\otree[A]$ obtained from \(\mot{f}\) by erasing the first edge:
by Lemma~\ref{lem-lift-up}, it is liftable to an initial path,
say~\(\mot{g}\). As \(\mot{f}\) is an \(\redEdge\)-liftable path, so is~\(\mot{g}\),
hence~\(\mot{g}\) coincides necessarily with~\(\mot{e}\) until level~\(\cd{A}+N\);
so the label of~\(\mot{g}\) has prefix~\(3^{\cd{A}}2^N\).
By Lemmas~\ref{lem-liftable} and~\ref{lem-lift-up},
it has also a prefix whose label is greater than or equal
to (coordinatewise) \(3^{\cd{A}-1}2^N1^k2\). Hence the
label of~\(\mot{g}\) has a prefix greater than or equal to~\(3^{\cd{A}}2^N1^{k-1}2\),
which is in contradiction with the choice of~\(\mot{f}\).
\end{proof}

\begin{proposition}\label{prop-const-order}
If the lengths of the 2-blocks in~\(\otree[A]\) are bounded, any
cyclically orbital word induces an action of finite order, bounded by a
constant not depending on the word.
\end{proposition}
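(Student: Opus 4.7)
The plan is to combine Proposition~\ref{prop-2-blocks} with a pigeonhole argument in the spirit of the proof of the implication \eqref{i3}\(\Rightarrow\)\eqref{i1} of Proposition~\ref{prop-finite}. The decisive input is that Proposition~\ref{prop-2-blocks} uniformly bounds the size of every connected component lying on an \(\redEdge\)-liftable path.

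First I would establish such a uniform bound on component sizes. Along any \(\redEdge\)-liftable path in \(\otree[A]\), the edges at levels \(0,\dots,\cd{A}-1\) are all labeled by~3 (the linear prefix), the reduction edge at level~\(\cd{A}\) together with the edges at levels \(\cd{A}+1,\dots,\cd{A}+N\) each have label at most~2, and by Proposition~\ref{prop-2-blocks} every edge at a deeper level has label~1 (since its initial vertex has three outgoing edges whose labels must sum to~3). Hence every connected component \(\aut{C}\) lying on such a path satisfies \(\#\aut{C}\leq C\), where the constant \(C\leq 3^{\cd{A}}\cdot 2^{N+1}\) depends only on~\(N\) and~\(\aut{A}\).

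Now let \(\mot{u}\) be a cyclically orbital word: every power \(\mot{u}^k\) is orbital, so by Definition~\ref{def-orbital} the path of each \(\mot{u}^k\) in \(\otree[A]\) is \(\redEdge\)-liftable, and thus the component \(\aut{C}_k\) containing \(\mot{u}^k\) satisfies \(\#\aut{C}_k\leq C\). In particular, the implication \eqref{i2}\(\Rightarrow\)\eqref{i1} of Proposition~\ref{prop-finite} already guarantees that \(\rho_{\mot{u}}\) has finite order.

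For the uniform bound on the order, the number \(M\) of isomorphism classes of pairs \((\aut{B},s)\) consisting of a minimized Mealy automaton \(\aut{B}\) over~\(\Sigma\) with at most \(C\) states and a distinguished state \(s\in\aut{B}\) is finite and depends only on \(C\) and \(|\Sigma|\). Applying the pigeonhole principle to the sequence \(\bigl(\mz(\aut{C}_k),\mot{u}^k\bigr)\) for \(k=1,\dots,M+1\) yields indices \(p<q\) for which these pairs are isomorphic, whence \(\rho_{\mot{u}^p}=\rho_{\mot{u}^q}\); consequently \(\rho_{\mot{u}}^{q-p}=\mathrm{id}\) and the order of \(\rho_{\mot{u}}\) divides \(q-p\leq M\). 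Since \(M\) depends only on \(\aut{A}\) and~\(N\), this is the desired uniform bound. The main substance lies in the universal bound~\(C\), which crucially uses Proposition~\ref{prop-2-blocks}; the pigeonhole step is then routine.
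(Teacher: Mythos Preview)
Your argument is correct and follows essentially the same route as the paper's: both use Proposition~\ref{prop-2-blocks} to see that along any \(\redEdge\)-liftable path the labels become~1 past level~\(\cd{A}+N\), so the components containing the powers~\(\mot{u}^k\) have uniformly bounded size, and then a pigeonhole on finitely many pointed Mealy automata of bounded size yields a uniform order bound. The only cosmetic differences are that the paper phrases the pigeonhole in terms of the possible output functions on an automaton with the fixed underlying structure of~\(\bot(\mot{e})\) (rather than all pointed minimized automata with at most~\(C\) states), and that your constant~\(C\) is off by a harmless factor of~2 (the actual size is~\(3^{\cd{A}}\cdot 2^{N}\), since already the vertex at level~\(\cd{A}+N\) has three outgoing edges).
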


\begin{proof}
Let \(\mot{u}\) be a cyclically orbital word and~\(n\) be an integer such
that~\(|\mot{u}^n|>\cd{A}\). By the definition,
\(\mot{u}^n\) is a representative of some \(\redEdge\)-liftable path. So, by
Lemma~\ref{lem-2-blocks}, the label of the path of
\(\mot{u}^{\omega}\) is ultimately~1 and, by
Proposition~\ref{prop-bounded-cc}, the action induced by~\(\mot{u}\)
has finite order, bounded by a constant which depends on \(\cd{A}\)
(more precisely on how many ways one can choose the outputs for a
Mealy automaton with the same structure as~\(\bot(\mot{e})\) where
\(\mot{e}\) is the path of~\(\otree[A]\) defined in the proof of
Proposition~\ref{prop-2-blocks}).
\end{proof}

\begin{proposition}\label{prop-equiv-adm}
If the lengths of the 2-blocks in~\(\otree[A]\) are bounded, any
non-empty word over~\(Q\) admits a non-empty bounded power which is equivalent to
a cyclically orbital word.
\end{proposition}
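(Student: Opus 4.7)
My plan is to construct, for each non-empty $\mot{w}\in Q^+$, a bounded integer $k\geq 1$ and a cyclically orbital word $\mot{c}$ such that $\mot{w}^k$ induces the same production function as $\mot{c}$; combined with Proposition~\ref{prop-const-order} this yields the uniform bound on the order of every element of $\pres{\aut{A}}$ that drives the rest of the argument for Theorem~\ref{thm-main}.

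First, I would reduce to the case where $|\mot{w}|$ exceeds $\cd{A}+N$ by replacing $\mot{w}$ with a bounded power $\mot{w}^{k_0}$, where $k_0\leq\cd{A}+N+1$; this is harmless for the statement since taking further bounded powers preserves its conclusion. The structural results of this section now apply to $\mot{w}^{k_0}$ viewed as a state of $\aut{A}^{k_0|\mot{w}|}$. Let $\aut{C}$ denote the connected component of $\mot{w}^{k_0}$ in the orbit tree. The core step is to exhibit within $\aut{C}$ a cyclically orbital representative. Here, Lemma~\ref{lem:full_orbit} grants full freedom to adjust the length-$\cd{A}$ suffix of a representative, while Lemma~\ref{lem-successor} together with Proposition~\ref{prop-subtree} supply an abundance of orbital continuations at every step; further, by Proposition~\ref{prop-2-blocks}, below level $\cd{A}+N$ every vertex of an $\redEdge$-liftable path has three outgoing edges in the orbit tree, so the orbital subtree $\htree$ is densely entwined with $\aut{C}$. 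The construction should proceed iteratively: start from $\mot{w}^{k_0}$, use Lemma~\ref{lem:full_orbit} to replace the current length-$\cd{A}$ suffix with one lying in $\two{A}$ while remaining in $\aut{C}$, and then use Lemma~\ref{lem-successor} to arrange that the wrap-around length-$\cd{A}$ factors also lie in $\two{A}$.

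To upgrade the conclusion from lying in the same connected component as $\mot{c}$ (which only yields conjugate production functions) to genuine equality of production functions, I would take a further bounded power $\mot{w}^{jk_0}$ and invoke pigeonhole over the finitely many Nerode classes available on cyclically orbital words of the relevant length, bounding $j$ in terms of this count. The main obstacle is the construction described above: we must keep the representative inside $\aut{C}$ while simultaneously enforcing the orbital property at every length-$\cd{A}$ factor, including the wrap-around factors, and ultimately matching production functions on the nose. The structural results of this section — especially the $3$-branching below level $\cd{A}+N$ and the suffix-freedom within a connected component — are precisely what give us the combinatorial slack needed to thread this needle.
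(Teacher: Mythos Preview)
Your plan has a genuine gap, and the paper's argument follows a completely different route.

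The core problem is that working inside the connected component $\aut{C}$ of $\mot{w}^{k_0}$ cannot deliver what you need. First, two words lying in the same connected component of $\aut{A}^m$ do \emph{not} in general have conjugate production functions: if $\mot{v}=\delta_{\mot{s}}(\mot{u})$ then $\rho_{\mot{v}}$ is a \emph{section} of $\rho_{\mot{u}}$, not a conjugate, so your ``upgrade'' step is based on a false premise. Second, there is no reason the component $\aut{C}$ contains any orbital word at all, let alone a cyclically orbital one. Lemma~\ref{lem:full_orbit} lets you choose the length-$\cd{A}$ suffix freely within $\aut{C}$, but it gives you no control over the interior length-$\cd{A}$ factors, which must \emph{all} lie in $\two{A}$ for the word to be orbital. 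Your own last paragraph essentially concedes this obstacle without resolving it.

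The paper's proof avoids connected components entirely. Given $\mot{w}=a_1\cdots a_n$, it \emph{interleaves} the letters with auxiliary words $\mot{u}_1,\dots,\mot{u}_n$ that act trivially, producing $a_1\mot{u}_1a_2\mot{u}_2\cdots a_n\mot{u}_n$ with exactly the same production function as $\mot{w}$. Each $\mot{u}_i$ is built so that the running concatenation stays inside $\htree$; concretely, one first finds a cyclically orbital word $\mot{u}_i'$ bridging $a_i$ to $a_{i+1}$ inside $\htree$ (using Lemma~\ref{lem-successor} and Proposition~\ref{prop-subtree}), and then sets $\mot{u}_i=(\mot{u}_i')^q$ where $q$ is its finite order supplied by Proposition~\ref{prop-const-order}. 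Iterating this construction and applying pigeonhole on the length-$\cd{A}$ prefix of the first bridge word eventually produces a block $\mot{w}^{(i)}\cdots\mot{w}^{(j)}$ that is cyclically orbital and induces the same action as $\mot{w}^{j-i}$, with $j-i$ bounded in terms of $\cd{A}$ and $\#Q$. The missing idea in your attempt is precisely this interleaving with trivially-acting orbital padding; without it, you cannot simultaneously control the production function and force the word to be orbital.
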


\begin{proof}
Let~$\cw=\cd{A}$ be the connection degree of~$\aut{A}$.
For a (possibly infinite) word~$\mot{w}\in\htree$, let~\(\fact_\cw(\mot{w})\) denote the (finite) set of its length~\(\cw\) factors.

Consider an infinite word~\(\mot{u}\in\htree\), that we assume to be \emph{maximal} in the sense that there is no other
infinite word~\(\mot{u'}\in\htree\) satisfying~\(\fact_\cw(\mot{u})\varsubsetneq\fact_\cw(\mot{u'})\).
Let fix a finite prefix~\(\mot{v}\) of~\(\mot{u}\) satisfying~\(\fact_{\cw}(\mot{u})=\fact_{\cw}(\mot{v})\).
From the maximality assumption on~\(\mot{u}\), we deduce that each~\(\mot{w}\in\vhhtree\) (where $\vhhtree$, viewed as a set of words, consists of words $\mot{w}$ over $Q$ such that $\mot{vw}\in\htree$) satisfies~\(\fact_{\cw}(\mot{w})\subseteq\fact_{\cw}(\mot{v})\). We will refer to this property as Property~(\({\natural}\)).

Let \(a_1a_2\dots a_n\in Q^+\).
By Proposition~\ref{prop-subtree}, \(a_1\) appears infinitely many often in the tree~\(\vhhtree\).
Therefore there is some word~\(\mot{u_0}\) satisfying~\(\mot{u_0}a_1\in\vhhtree\).
The goal is to build a word~\(\mot{u_1}\) satisfying~\(\mot{u_0}a_1\mot{u_1}a_2\in\vhhtree\).
As in Figure~\ref{fig-constr}, choose some word~\(\mot{v_0}\) satisfying~\(\mot{v_0}a_2\in\vhhtree\).
By Lemma~\ref{lem-successor} and Property~(\(\natural\)), there exists a word \(\mot{r}\in\fact_{\cw}(\mot{v})\) such that both \(\mot{u_0}a_1\mot{r}\in\vhhtree\) and~\(\mot{v_0}\mot{r}\in\vhhtree\) hold.
Let~\(\mot{v}_{\mot{r}}\) be the shortest word such that~\(\mot{v}_{\mot{r}}\mot{r}\) is a prefix of~\(\mot{v}\) and
let~\(\mot{u_1'}\) be the word satisfying~\(\mot{v}_{\mot{r}}\mot{u_1'}=\mot{v}\mot{v_0}\). Then \(\mot{u_1'}\) is a cyclically orbital word, since its length~\(\cw\) prefix~\(\mot{r}\) satisfies~\(\mot{v}_{\mot{r}}\mot{u_1'}\mot{r}\in\htree\).
\begin{figure}[h]
\centering
\scalebox{1}{\begin{tikzpicture}[yscale=.6]
\coordinate (A) at (0,0); 			\coordinate (Aw) at (1,0);
\coordinate (B) at (0,-1);
\coordinate (C) at (0,-2);
\coordinate (D) at (0,-3);			\coordinate (Dw) at (1,-3);			\coordinate (Dww) at (4,-3);
\coordinate (E) at (-4,-9.7);
\coordinate (F) at (4,-9.7);										
\coordinate (G) at (.7,-5);
\coordinate (H) at (1.5,-6);
\coordinate (I) at (1.4,-6.2);
\coordinate (J) at (1.4,-7.2);
\coordinate (J1) at (1.4,-8.2);
\coordinate (J2) at (.8,-9.7);
\coordinate (K) at (-.6,-4.5);
\coordinate (L) at (-.5,-4.7);
\begin{scope}[color=lightgray]
{\pgfmathsetseed{1234}
\draw[randomPath,line width=5pt] (B) -- +(-.25,-.33) -- +(.05,-.7) -- +(0,-1);}
{\pgfmathsetseed{4321}
\draw[randomPath,line width=5pt] (C) -- node[left]{{\color{gray}\(\mot{u_1'}\)}} (D);
\draw[randomPath,line width=5pt] (D) -- (K);}
{\pgfmathsetseed{1234}
\draw[randomPath,line width=5pt] (I) -- +(-.25,-.33) -- +(.05,-.7) -- +(0,-1);}
{\pgfmathsetseed{4321}
\draw[randomPath,line width=5pt] (J) -- node[left]{{\color{gray}\(\mot{u_1'}\)}} (J1);
\draw[randomPath,line width=5pt] (J1) -- (J2);}
\end{scope}
{\pgfmathsetseed{1234}
\draw[randomPath] (B) -- +(-.25,-.33) -- node[left]{\(\mot{r}\)}
     +(.05,-.7) -- +(0,-1);}
\draw[randomPath,densely dotted] (A) -- node[right] {\(\mot{v}_\mot{r}\)} (B);
{\pgfmathsetseed{4321}
\draw[randomPath,thin] (C) -- (D);}
\path  	(D) edge[dotted,bend left] (F)
		(D) edge[dotted,bend right] (E);
\draw[randomPath,densely dotted] (D) -- node[left] {\(\mot{v_0}\)} (K);
\draw[randomPath2,densely dotted,in=0] (D) -- node[right] {\(\mot{u_0}\)} (H);
\draw (H) -- node[right]{\(a_1\)} (I);
{\pgfmathsetseed{1234}
\draw[randomPath] (I) -- +(-.25,-.33) -- node[left]{\(\mot{r}\)}
     +(.05,-.7) -- +(0,-1);}
\draw[thick] (K) -- node[right]{\(a_2\)} (L);
\draw[thick] (J2) -- node[right]{\(a_2\)} +(.1,-.2);
{\pgfmathsetseed{1234}
\draw[randomPath] (K) -- +(-.25,-.33) -- node[left]{\(\mot{r}\)}
     +(.05,-.7) -- +(0,-1);}
\draw[decorate,decoration=brace] (Aw) -- (Dw) node[right,midway] {\(~\mot{v}\)};
\draw[white] (Dww.east) -- (F.east) node[midway] {\color{black}\(\vhhtree\)};
\end{tikzpicture}}
\caption{Proof of~Proposition~\ref{prop-equiv-adm}: building a word~\(\mot{u_1'}\) satisfying
\(\mot{u_0}a_1\mot{u_1'}a_2\in\vhhtree[\mot{v}]\).}\label{fig-constr}
\end{figure}

By Proposition~\ref{prop-const-order}, \(\mot{u_1'}\) has finite
order~\(q\). So we set~\(\mot{u_1}=\mot{u_1'}^q\)
and keep~\(a_1\mot{u_1}a_2\in\vhhtree[\mot{v}\mot{u_0}]\).

The same method produces words~\((\mot{u_i})_{1\leq\mot{i}\leq n}\) of length at least~\(\cw\)
that induce the trivial action and such that the word~\(\mot{w^{(0)}}=a_1 \mot{u_1}a_2\cdots a_{n-1}\mot{u_{n-1}}a_n\mot{u_n}\) satisfies~\(\mot{w^{(0)}}a_1\in\vhhtree[\mot{v}\mot{u_0}]\) and induces the same action as~\(a_1\cdots a_n\). For~\(i\geq0\), we analogously define a word~\(\mot{w^{(i+1)}}\) inducing the same action as~\(a_1\cdots a_n\) such that \(\mot{w^{(i+1)}}a_1\in\vhhtree[\mot{v}\mot{u_0}\mot{w^{(0)}}\cdots\mot{w^{(i)}}]\). Eventually, there exist~\(\mot{i}<\mot{j}\) such that~\(\mot{u_1^{(i)}}\) and~\(\mot{u_1^{(j+1)}}\) have the same prefix of length~\(\cw\), hence \(\mot{w^{(i)}}\cdots\mot{w^{(j)}}\) is cyclically orbital and induces the same action as~\((a_1\cdots
a_n)^{\mot{j}-\mot{i}}\). Note that~\(\mot{j}-\mot{i}\) is bounded by
a constant depending on \(\cd{A}\) and~\(\#Q\).
\end{proof}

\begin{corollary}\label{cor-branch}
If \(\aut{A}\) generates an infinite group, the orbit tree of its dual
admits an \(\redEdge\)-liftable branch labeled either by~\(3^{\omega}\) or by
\(3^n2^{\omega}\) for some~\(n\).
\end{corollary}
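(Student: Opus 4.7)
The plan is to prove Corollary~\ref{cor-branch} by contraposition: assume $\otree[A]$ admits no $\redEdge$-liftable branch labeled by $3^{\omega}$ or by $3^{n}2^{\omega}$, and conclude that $\pres{\aut{A}}$ is finite. Under the standing hypothesis $0<\cd{A}<\infty$, Proposition~\ref{prop-lin3} forces the initial run of 3's to have length exactly $\cd{A}$, so the $3^{\omega}$ case is vacuous and the hypothesis simplifies to the non-existence of a branch labeled $3^{\cd{A}}2^{\omega}$. Such a branch would automatically be $\redEdge$-liftable, since by Lemmas~\ref{lem-liftable} and~\ref{lem-lift-up} any edge labeled by~$2$ below level~$\cd{A}$ lifts to the unique label-$2$ edge at level~$\cd{A}$, namely~$\redEdge$. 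Lemma~\ref{lem-2-blocks} then translates the hypothesis into: the $2$-blocks of $\otree[A]$ are bounded in length, say by some~$N$.

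Next, I combine Propositions~\ref{prop-const-order} and~\ref{prop-equiv-adm}: each non-empty $\mot{u}\in Q^{+}$ admits a bounded power equivalent to a cyclically orbital word, and each cyclically orbital word has order bounded by a constant independent of the word. Threading the two bounds yields a uniform constant $K$ with $\rho_{\mot{u}}^{K}=1$ for every non-empty $\mot{u}\in Q^{+}$; in particular, $\pres{\aut{A}}$ has bounded exponent~$K$.

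To conclude finiteness via Proposition~\ref{prop-bounded-cc}, I bound uniformly the sizes of the connected components of the powers of $\aut{A}$. Vertices of $\otree[A]$ reached via an $\redEdge$-liftable path are controlled directly by Proposition~\ref{prop-2-blocks}: below level $\cd{A}+N$, each such vertex has three children, each necessarily labeled by~$1$ (labels below~$\cd{A}$ lie in $\{1,2\}$ and must sum to $\#Q=3$), so all these components have size at most $3^{\cd{A}}\cdot 2^{N}$. The remaining components descend from the singleton sibling of $\two{A}$ at level~$\cd{A}+1$, and I would control their sizes by a recursive inspection of that (smaller) subtree, using the uniform bound $K$ on word orders to preclude any infinite $2$-block inside it.

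The main obstacle will be this last recursive step: cleanly bounding the sizes of non-$\redEdge$-liftable components, where the splitting pattern is a priori more varied than on the reduction side. The argument must combine the splitting constraint (labels always summing to $\#Q$), the uniform order bound~$K$ on all words, and the hereditary nature of the bounded-$2$-block property within the subtrees below the singleton; once established, Proposition~\ref{prop-bounded-cc} closes the contrapositive and hence the corollary.
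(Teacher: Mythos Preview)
Your reduction is correct up to the point where you establish that \(\pres{\aut{A}}\) has bounded exponent~\(K\): this matches the paper's argument exactly. The divergence is at the final step. The paper does \emph{not} try to bound component sizes directly; instead, it invokes Zelmanov's solution to the restricted Burnside problem. Automaton groups are residually finite, so once \(\pres{\aut{A}}\) is known to be finitely generated of bounded exponent, Zelmanov's theorem yields finiteness immediately, contradicting the hypothesis.

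Your proposed alternative---bounding the sizes of all components and then applying Proposition~\ref{prop-bounded-cc}---has a genuine gap. You correctly bound the components lying on \(\redEdge\)-liftable paths by \(3^{\cd{A}}\cdot 2^{N}\) via Proposition~\ref{prop-2-blocks}, but the promised ``recursive inspection'' of the non-liftable side is not an argument. First, the bounded-\(2\)-block hypothesis already rules out infinite \(2\)-blocks \emph{everywhere} in~\(\otree[A]\), so the uniform order bound~\(K\) contributes nothing new toward that particular claim. Second, and more seriously, bounded \(2\)-blocks do not by themselves bound component sizes: a branch with label pattern of the shape \((2^{N}1)^{\omega}\) below some vertex is fully compatible with every \(2\)-block having length at most~\(N\), yet produces components of size growing like~\(2^{Nk}\). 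None of the combinatorial tools you invoke (labels summing to~\(3\), Lemmas~\ref{lem-liftable} and~\ref{lem-lift-up}, Lemma~\ref{lem-2extends}) excludes such branches once one has left the \(\redEdge\)-liftable subtree. Passing from ``finitely generated, residually finite, bounded exponent'' to ``finite'' is exactly the content of the restricted Burnside problem; the paper's appeal to Zelmanov is doing essential work that your outline does not replace.
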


\begin{proof}
If the lengths of the 2-blocks are bounded, any non-empty word has a
non-empty bounded power which is equivalent to
a cyclically orbital word by Proposition~\ref{prop-equiv-adm}, so the order
of the action it induces is
finite and bounded by a constant from
Propositions~\ref{prop-const-order} and~\ref{prop-equiv-adm}.

It follows from Zelmanov's solution to the restricted Burnside
problem~\cite{Vl,Ze1,Ze2} that the group \(\pres{\aut{A}}\) is finite, which
contradicts the hypothesis. Proposition~\ref{prop-2-blocks} leads to
the conclusion.
\end{proof}

\begin{nremark}\label{remarkDisconnected} A disconnected 3-state invertible-reversible Mealy
automaton have either three connected components -- each of size~1 -- and the generated group is finite (see~Proposition~\ref{prop-total-split-up}),
or two connected components -- of size~1 and~2 -- and the situation is then more subtle because
both connected components can generate separately finite groups, but
together an infinite group. Such a disconnected Mealy automaton might not satisfy Corollary~\ref{cor-branch}.
For example, the Mealy automaton in Figure~\ref{fig-disc} illustrates this situation. The group that it generates is infinite since the product of two components of this automaton generates an infinite subgroup~\(\pres{xy,xz}\)~\cite{AKLMP12,Kli13}.
However, if there is an initial branch labeled by $2^\omega$ and the analogue of Corollary~\ref{cor-branch} holds, it would imply that both $y$ and $z$ have infinite order, which is not the case as $\langle y,z\rangle$ is finite.
\end{nremark}

\begin{theorem}
Any infinite group generated by a connected 3-state invertible-reversible Mealy
automaton admits an element of infinite order.
\end{theorem}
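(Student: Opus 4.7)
The strategy is to apply Corollary~\ref{cor-branch}, extract a cyclic period $\mot{u}$ from the resulting branch, and conclude via Proposition~\ref{prop-finite} that $\mot{u}$ has infinite order. I would first dispose of the easy case: if the $\redEdge$-liftable branch produced by Corollary~\ref{cor-branch} is labeled~$3^{\omega}$, then $\cd{A}$ is infinite, and the excerpt already notes that the generated semigroup is then freely generated by the states of $\aut{A}$, so each state has infinite order in $\pres{\aut{A}}$. The remaining case is a branch labeled~$3^{\cd{A}}2^{\omega}$ for some finite $\cd{A}$, and this is where the work happens.

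The structural key I would prove first is the following: every vertex on the branch at level at least~$\cd{A}+1$ falls into the first alternative of Lemma~\ref{lem-2extends}, namely it has a single $\redEdge$-liftable child, reached via a label-$2$ edge, which is the next vertex on the branch. Indeed, the second alternative of Lemma~\ref{lem-2extends} would force the branch's next edge to have label~$1$, contradicting the $2^{\omega}$ tail. Counting single-letter extensions at a vertex $V$ of the branch then shows that the two orbital extensions of any word in $V$ both land in the next vertex of the branch: any orbital extension of a branch representative is again a branch representative.

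Next, I would fix an infinite representative $\mot{w}=w_1w_2\cdots$ of the branch. Since $Q^{\cd{A}}$ is finite while $\mot{w}$ has infinitely many positions, pigeonhole produces $i<j$ with $i\geq\cd{A}$ and $j-i\geq\cd{A}$ such that $w_{i+1}\cdots w_{i+\cd{A}}=w_{j+1}\cdots w_{j+\cd{A}}$. Set $\mot{u}=w_{i+1}\cdots w_j$ and $\mot{v}=w_1\cdots w_i$, so that $\mot{v}\mot{u}=w_1\cdots w_j$ is a prefix of $\mot{w}$. A routine inspection of the long-enough factors of $\mot{v}\mot{u}^n$ that straddle two consecutive copies of $\mot{u}$ shows, using the matching at positions~$i+1$ and~$j+1$, that each such factor also appears as a factor of $\mot{w}$ and therefore lies in~$\two{A}$. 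Hence $\mot{v}\mot{u}^n$ is orbital for every $n\geq 1$, and in particular $\mot{u}$ is cyclically orbital.

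To finish, I would prove by induction on length that every prefix of $\mot{v}\mot{u}^n$ represents the branch. The base case $\mot{v}\mot{u}$ is a prefix of $\mot{w}$; the inductive step follows directly from the structural key, since the next letter provides an orbital extension of a branch representative. Therefore $\mot{v}\mot{u}^n$ lies in the vertex of the branch at level $|\mot{v}|+n|\mot{u}|$, whose size $3^{\cd{A}}\cdot 2^{|\mot{v}|+n|\mot{u}|-\cd{A}}$ grows without bound. By Proposition~\ref{prop-finite}, $\rho_{\mot{u}}$ has infinite order in $\pres{\aut{A}}$, which gives the desired element. The main obstacle is the structural key identifying the vertices of the $3^{\cd{A}}2^{\omega}$ branch as the ``one liftable child'' type of Lemma~\ref{lem-2extends}; everything else is elementary combinatorics on words combined with a direct application of Proposition~\ref{prop-finite}.
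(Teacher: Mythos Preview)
Your argument is correct and follows the same route as the paper: once Corollary~\ref{cor-branch} provides the branch, one observes that the \(e_2\)-liftable branch is unique (your ``structural key''), so any cyclically orbital word has all its powers on that branch, and Proposition~\ref{prop-finite} yields infinite order. The paper compresses all of this into a single sentence, leaving the uniqueness of the \(e_2\)-liftable branch implicit.

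One remark on economy: your structural key already shows that \emph{every} orbital word of length at least~\(\cd{A}+1\) represents a vertex of the branch, since from \(\two{A}\) downward the only \(e_2\)-liftable continuation is the branch itself. Hence for any cyclically orbital~\(\mot{u}\) (whose existence the paper notes just after the definition), each \(\mot{u}^n\) lies on the branch and the component sizes blow up. Your pigeonhole construction of a specific~\(\mot{u}\) from a branch representative, and the subsequent induction showing that \(\mot{v}\mot{u}^n\) stays on the branch, are therefore unnecessary---correct, but subsumed by the structural key you already proved.
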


\begin{proof}
All cyclically orbital words induce actions of infinite order from
Proposition~\ref{prop-finite} and Corollary~\ref{cor-branch}.
\end{proof}

Finally we can emphasize that the proof of existence of elements of infinite order is constructive and that the effective detection of such elements becomes computable in this framework. For instance, in the group generated by the Mealy automaton in Figure~\ref{fig-conn}, the cyclically orbital word~\(xyz\) has infinite order. At least in this case, neither one of the existing two packages \FR~\cite{FR} and \SK~\cite{SK} for~\GAP~\cite{GAP} system, dedicated to Mealy automata and groups they generate, is able to detect such an element of infinite order.

\bibliographystyle{amsplain}
\bibliography{3states}

\def\cprime{$'$}
\providecommand{\bysame}{\leavevmode\hbox to3em{\hrulefill}\thinspace}
\providecommand{\MR}{\relax\ifhmode\unskip\space\fi MR }
% \MRhref is called by the amsart/book/proc definition of \MR.
\providecommand{\MRhref}[2]{%
  \href{http://www.ams.org/mathscinet-getitem?mr=#1}{#2}
}
\providecommand{\href}[2]{#2}
\begin{thebibliography}{10}

\bibitem{AKLMP12}
A.~Akhavi, I.~Klimann, S.~Lombardy, J.~Mairesse, and M.~Picantin, \emph{On the
  finiteness problem for automaton (semi)groups}, Int. J. Algebra Comput.
  \textbf{22} (2012), no.~6, 26p.

\bibitem{aleshin}
S.V. Ale{\v{s}}in, \emph{Finite automata and the {B}urnside problem for
  periodic groups}, Mat. Zametki \textbf{11} (1972), 319--328.

\bibitem{FR}
L.~Bartholdi, \emph{\(\mathtt{FR}\) functionally recursive groups, self-similar
  groups --- a gap package, version 2.1.1}, 2014.

\bibitem{bartholdi_s:growth}
L.~Bartholdi and Z.~{\v{S}}uni{\'k}, \emph{On the word and period growth of
  some groups of tree automorphisms}, Comm. Algebra \textbf{29} (2001), no.~11,
  4923--4964.

\bibitem{belk_b:undecidability}
James Belk and Collin Bleak, \emph{Some undecidability results for asynchronous
  transducers and the {B}rin-{T}hompson group $2{V}$}, Preprint:
  arxiv:1405.0982, 2014.

\bibitem{burnside}
W.~Burnside, \emph{On an unsettled question in the theory of discontinuous
  groups}, Quart.J.Math. \textbf{33} (1902), 230--238.

\bibitem{dangeli_r:geometric}
D.~D'Angeli and E.~Rodaro, \emph{A geometric approach to (semi)-groups defined
  by automata via dual transducers}, Preprint: arxiv:1403.1722, 2014.

\bibitem{GAP}
The GAP~Group, \emph{{\(\mathtt{GAP}\) -- Groups, Algorithms, and Programming,
  Version 4.7.5}}, 2014.

\bibitem{gawron_ns:conjugation}
P.W. Gawron, V.V. Nekrashevych, and V.I. Sushchansky, \emph{Conjugation in tree
  automorphism groups}, Internat. J. Algebra Comput. \textbf{11} (2001), no.~5,
  529--547.

\bibitem{gillibert:finiteness14}
P.~Gillibert, \emph{The finiteness problem for automaton semigroups is
  undecidable}, Internat. J. Algebra Comput. \textbf{24} (2014), no.~1, 1--9.

\bibitem{glushkov:automata}
V.M. Glu{\v{s}}kov, \emph{Abstract theory of automata}, Uspehi Mat. Nauk
  \textbf{16} (1961), no.~5 (101), 3--62.

\bibitem{golod}
E.S. Golod, \emph{On nil-algebras and finitely residual groups}, Izv. Akad.
  Nauk SSSR. Ser. Mat. \textbf{28} (1964), 273--276.

\bibitem{golod_shafarevich}
E.S. Golod and I.~Shafarevich, \emph{On the class field tower}, Izv. Akad. Nauk
  SSSR Ser. Mat. \textbf{28} (1964), 261--272.

\bibitem{grigorchuk1}
R.I. Grigorchuk, \emph{On {B}urnside's problem on periodic groups},
  Funktsional. Anal. i Prilozhen. \textbf{14} (1980), no.~1, 53--54.

\bibitem{grigorch:degrees}
\bysame, \emph{Degrees of growth of finitely generated groups and the theory of
  invariant means}, Izv. Akad. Nauk SSSR Ser. Mat. \textbf{48} (1984), no.~5,
  939--985.

\bibitem{gupta_s:burnside}
N.~Gupta and S.~Sidki, \emph{On the {B}urnside problem for periodic groups},
  Math. Z. \textbf{182} (1983), no.~3, 385--388.

\bibitem{Kli13}
I.~Klimann, \emph{{The finiteness of a group generated by a 2-letter
  invertible-reversible Mealy automaton is decidable}}, Proc. 30th STACS,
  LIPIcs, vol.~20, 2013, pp.~502--513.

\bibitem{local-test}
R.~Mac~Naughton and S.~Papert, \emph{Counter free automata}, The MIT Press,
  1971.

\bibitem{SK}
Y.~Muntyan and D.~Savchuk, \emph{\(\mathtt{automgrp}\) automata groups --- a
  gap package, version 1.2.4}, 2014.

\bibitem{nek}
V.~Nekrashevych, \emph{Self-similar groups}, Mathematical Surveys and
  Monographs, vol. 117, American Mathematical Society, Providence, RI, 2005.

\bibitem{sv11}
D.M. Savchuk and Y.~Vorobets, \emph{Automata generating free products of groups
  of order~2}, J. Algebra \textbf{336} (2011), no.~1, 53--66.

\bibitem{sushch:burnside}
V.I. Sushchansky, \emph{Periodic permutation $p$-groups and the unrestricted
  {Burnside} problem}, DAN SSSR. \textbf{247} (1979), no.~3, 557--562, (in
  Russian).

\bibitem{Vl}
M.~Vaughan-Lee, \emph{The restricted {B}urnside problem}, London Mathematical
  Society Monographs. New Series, vol.~8, Oxford University Press, 1993.

\bibitem{Ze1}
E.I. Zel{\cprime}manov, \emph{Solution of the restricted {B}urnside problem for
  groups of odd exponent}, Izv. AN SSSR Math+ \textbf{54} (1990), no.~1,
  42--59, 221.

\bibitem{Ze2}
\bysame, \emph{Solution of the restricted {B}urnside problem for {$2$}-groups},
  Mat. Sb. \textbf{182} (1991), no.~4, 568--592.

\end{thebibliography}

\end{document}